\documentclass[pra,onecolumn,11pt,amsmath,amssymb,nofootinbib,bm,bbm]{revtex4-1}

\usepackage[english]{babel}  

\usepackage{natbib,amsmath, amsthm, bbm, tensor, mathtools, placeins, graphicx, wrapfig, subfigure, sidecap, float, verbatim}

\setcitestyle{sort&compress,numbers}
\usepackage[unicode=true, breaklinks=false, pdfborder={0 0 1}, backref=false, colorlinks=true, linkcolor=blue, citecolor=blue]{hyperref}

\usepackage{diagbox,multirow,scalerel}
\usepackage{makecell}



\newcommand{\Acal}{\mathcal{A}}
\newcommand{\Ecal}{\mathcal{E}}
\newcommand{\Fcal}{\mathfrak{E}}
\newcommand{\Hcal}{\mathcal{H}}
\newcommand{\Ical}{\mathcal{I}}
\newcommand{\Mcal}{\mathcal{M}}
\newcommand{\Tcal}{\mathcal{T}}
\newcommand{\Ucal}{\mathcal{U}}

\newcommand{\tr}{\mbox{tr}}
\newcommand{\an}{{a}}
\newcommand{\s}{{s}}
\newcommand{\e}{{e}}
\newcommand{\se}{{se}}
\usepackage{cancel}
\usepackage{color}
\definecolor{Blue}{rgb}{0,0,1}
\definecolor{Red}{rgb}{1,0,0}
\definecolor{Green}{rgb}{0.0, 0.5, 0.0}
\definecolor{Purp}{rgb}{0.87, 0.36, 0.51}
\definecolor{white}{rgb}{1,1,1}




\def\BraVert{\egroup\,\mid\,\bgroup}

\def\ketbra#1#2{\ket{#1\vphantom{#2}}\!\bra{#2\vphantom{#1}}}

\def\bra#1{\mathinner{\langle{#1}|}}
\def\bbra#1{\mathinner{\langle \! \langle{#1}|}}
\def\ket#1{\mathinner{|{#1}\rangle}}
\def\kket#1{\mathinner{|{#1}
\rangle \! \rangle}}
\def\braket#1{\mathinner{\langle{#1}\rangle}}

\newtheorem{lemma}{Lemma}

\begin{document}

\title{An introduction to operational quantum dynamics}

\author{Simon Milz}
\email{simon.milz@monash.edu}
\affiliation{School of Physics and Astronomy, Monash University, Clayton, Victoria 3800, Australia}

\author{Felix A. Pollock}
\email{felix.pollock@monash.edu}
\affiliation{School of Physics and Astronomy, Monash University, Clayton, Victoria 3800, Australia}

\author{Kavan Modi}
\email{kavan.modi@monash.edu}
\affiliation{School of Physics and Astronomy, Monash University, Clayton, Victoria 3800, Australia}

\begin{abstract}
 In the summer of 2016, physicists gathered in Toru{\'n}, Poland for the 48th annual \emph{Symposium on Mathematical Physics}. This Symposium was special; it celebrated the 40th anniversary of the discovery of the Gorini-Kossakowski-Sudarshan-Lindblad master equation, which is widely used in quantum physics and quantum chemistry. This article forms part of a \emph{Special Volume} of the journal \emph{Open Systems~\&~Information Dynamics} arising from that conference; and it aims to celebrate a related discovery -- also by Sudarshan -- that of \textit{Quantum Maps} (which had their 55th anniversary in the same year). Nowadays, much like the master equation, quantum maps are ubiquitous in physics and chemistry. Their importance in quantum information and related fields cannot be overstated. In this manuscript, we motivate quantum maps from a tomographic perspective, and derive their well-known representations. We then dive into the murky world beyond these maps, where recent research has yielded their generalisation to non-Markovian quantum processes.

\end{abstract}

\date{\today}
\maketitle
\tableofcontents

\pagebreak

Describing changes in a system's state is the principal goal of any mathematical theory of dynamics. In order to be physically relevant, this description must be faithful to what is observed in experiments. For quantum systems, a dynamical theory must quantify how measurement statistics of different observables can change from one moment to the next, even when the system in question may be interacting with its wider environment, which is typically large, uncontrollable and experimentally inaccessible.

While unitary evolution of vectors in Hilbert space (according to Schr{\"o}dinger's equation) is sufficient to describe the behaviour of a deterministically prepared closed quantum system, more is required when the system is open to its environment, or when there is classical uncertainty in its preparation. The complete statistical state of such a system (or, more properly, an ensemble of identical and independent preparations of the system) is encoded in its density operator $\rho$, which can be determined operationally in a quantum state tomography experiment. Namely, by combining the measurement statistics of a set of linearly independent observables. A reader who is unfamiliar with the concept of density operators or quantum state tomography can find more information in Ref.~\cite{Nielsen00a}.

In this Special Issue article, we concern ourselves with the dynamical description of open quantum systems, primarily in terms of mappings from density operators at one time or place to another, \textit{i.e.}, the quantum generalisation of classical stochastic maps. 
These mappings are \emph{superoperators} -- operators on an operator space -- and depending on context, are referred to as \textit{quantum maps}, \textit{quantum channels}, \textit{quantum operations}, \textit{dynamical maps}, and so on. In this article, we stick with the term \textit{quantum maps} throughout. 

Quantum maps are ubiquitous in the quantum sciences, particularly in quantum information theory. They are natural for describing quantum communication channels~\cite{wilde2013quantum}, crucial for quantum error correction~\cite{lidarQEC}, and form the basis for generalised quantum measurements~\cite{peresQT}. Yet their origins, motivation and applicability are not always transparent. Their discovery dates back to 1961, in the work of George Sudarshan and collaborators~\cite{SudarshanMatthewsRau61, SudarshanJordan61}. A decade later, Karl Kraus also discovered them~\cite{kraus_general_1971}, and quantum maps are perhaps most widely known through his 1984 book~\cite{kraus_states_1983}. Along the way, there have been many other players. For example, the works of Stinespring~\cite{stinespring1955} and Choi~\cite{choi72a, choi75} are crucial for understanding the structure of quantum maps. Stinespring's result predates that of Sudarshan, though both his and Choi's works are purely mathematical in nature and not concerned with quantum physics \textit{per se}. On the physics side, the works of Davies and Lewis~\cite{Davies1970}, Jamiolkowski~\cite{jamiolkowski_linear_1972}, Lindblad \cite{Lindblad1975}, and Accardi \textit{et al.}~\cite{accardi_quantum_1982}, to name but a few\footnote{A complete list of important contributions would constitute an entire article in itself, and we apologise to any who feel they have been unjustly omitted.}, have led to a deep understanding of quantum stochastic processes. Here, we put history aside, and describe quantum maps and their generalisations in a pedagogical manner. We present an operationally rooted and thorough introduction to the theory of open quantum dynamics.

The article has two main sections. In Section~\ref{sec::QuantumMaps}, we introduce quantum maps in the context of quantum process tomography -- that is, what can be inferred about the evolution of the density operator in experiment -- before exploring how they can be represented mathematically. Along the way, we take care to point out the relationships between different representations, and the physical motivation behind mathematical properties such as linearity and complete positivity. In Section~\ref{sec::beyond}, we discuss open quantum dynamics in situations where the formalism developed in the first Section is insufficient to successfully describe experimental observations. Namely, when the system is initially correlated with its environment and when joint statistics across multiple time steps is important. After demonstrating how a na{\"i}ve extension of the conventional theory fails to deliver useful conclusions, we outline a more general, operational framework, where evolution is described in terms of mappings from preparations to measurement outcomes.

\section{Quantum maps and their representations}
\label{sec::QuantumMaps}

A quantum map $\Ecal$ is a mapping from density operators to density operators: $\rho \mapsto \rho' = \Ecal[\rho]$. Here $\rho$ and $\rho'$ are operators on the `input' and `output' Hilbert spaces of the map, respectively\footnote{Strictly speaking, the mapping is between a preparation that yields $\rho$, and a measurement that interrogates $\rho'$.}. Formally, this can be written $\Ecal: \mathcal{B} (\Hcal_{d_\mathrm{in}}) \rightarrow \mathcal{B} (\Hcal_{d_\mathrm{out}})$, \textit{i.e.}, as a mapping from bounded operators on the input Hilbert space to bounded operators on the output Hilbert space. In fact, the map can be seen as a bounded operator on the space of bounded operators, $\Ecal\in \mathcal{B}(\mathcal{B}(\mathcal{H}_{d_\mathrm{in}}))$. Here $d = \mathrm{dim}(\Hcal_{d})$ denotes the dimension of Hilbert space $\Hcal_{d}$. Throughout this article, we work in the Schr{\"o}dinger picture with finite dimensional quantum systems (see Ref.~\cite{kretschmann_quantum_2005} for a description of quantum maps in the Heisenberg picture). In general, the input and output Hilbert space need not be the same, but for simplicity we will, for the most part, assume $\Hcal_{d_\mathrm{in}} \cong \Hcal_{d_\mathrm{out}}$ and omit the subscripts ``in'' and ``out'' from this point on.

To represent a deterministic physical process, the quantum map has to preserve the basic properties of the density operator, \textit{i.e.}, it has to preserve trace, Hermiticity, and positivity (as we detail more explicitly at the end of this section). Moreover, the action of the quantum map must be linear: 
\begin{gather}
\Ecal \left[\sum p_k \rho_k \right] = \sum p_k \Ecal[ \rho_k] = \sum p_k \rho'_k.
\end{gather}
It is worth noting that this requirement does \emph{not} follow from the fact that quantum mechanics is a linear theory, in the sense of quantum state vectors formed from linear superpositions of a basis set (in fact, $\Ecal$ is not generally linear in this sense). Instead, the linearity of the quantum map is analogous to the linearity of mixing in a statistical theory.

To better appreciate this, consider a quantum channel from Alice to Bob, where Alice prepares a system in either state $\rho_1$ or $\rho_2$; she then sends the system to Bob. Upon receiving the system Bob performs state tomography on the state Alice sent by measuring it. They do this many times. Suppose Alice sends $\rho_1$ on day-one and $\rho_2$ on day-two. From the measurement outcomes Bob will conclude that the received states are $\rho'_1 = \Ecal[\rho_1]$ on day-one and $\rho'_2 =\Ecal[\rho_2]$ on day-two. Now, suppose Alice sends the two states randomly with probabilities $p$ and $1-p$ respectively. Without knowing which state is being sent on which run, Bob would conclude that he receives state $\bar \rho' = \Ecal [\bar \rho]$, where $\bar \rho = p \rho_1 + (1-p) \rho_2$. That is, we can interpret Alice's preparation to be the average state. Now suppose that, at some later point, Alice reveals which state was sent in which run; Bob can now go back to his logbook and conclude that he received the state $\rho'_1 \, (\rho'_2)$ whenever Alice sent him $\rho_1 \, (\rho_2)$. Conversely, averaging over that data would amount to Bob receiving $\bar \rho '$. Thus we must have $\bar \rho' = p \rho'_1 + (1-p) \rho'_2$. This simple thought experiment demands that the action of quantum channels must be linear. However, note that, while we have used the language of quantum mechanics in this paragraph, there is nothing quantum about this experiment\footnote{The same argument would hold for a nonlinear map on the space of pure states. However, care has to be taken in differentiating between proper and improper mixtures~\cite{mixtures}.}. Linearity of mixing is a general concept that applies to all stochastic theories.

Now, using the fact that the quantum map $\Ecal$ is linear, we will derive several useful representations for it.

\subsection{Structure of linear maps}
\label{subsec::LinMaps}

\textit{Any} linear map $M$ on a (complex) vector space $V$ is unambiguously defined by its action on a (not necessarily orthogonal) basis $\left\{\hat{\mathbf{r}}_i\right\}_{i=1}^{d_V}$\footnote{Here, and throughout this article, the caret is used to indicate that the object is an element of some fixed (not necessarily normalised) basis set used for tomography.} of $V$, where $d_V$ is the dimension of $V$. That is to say, the input-output relation $M[\hat{\mathbf{r}}_i] = \hat{\mathbf{r}}_i'$ entirely defines $M$. For any basis $\left\{\hat{\mathbf{r}}_i\right\}$ of $V$, there exists a \emph{dual set} $\left\{\hat{\mathbf{d}}_i\right\}_{i=1}^{d_V}\subset V$ such that $(\hat{\mathbf{d}}_i,\hat{\mathbf{r}}_j) = \delta_{ij}$, where $(\cdot,\cdot)$ is the scalar product in $V$. With this, for any $\mathbf{v}\in V$, the action of $M$ can be written as 
\begin{gather}
\label{eqn::ActionLinear}
M[\mathbf{v}] = \sum_{i=1}^{d_V} \hat{\mathbf{r}}_i'\, (\hat{\mathbf{d}}_i,\mathbf{v})\, .
\end{gather}
This equation is correct by construction, as it maps every basis element $\hat{\mathbf{r}}_i$ to the correct output $\hat{\mathbf{r}}_i'$. In other words, it says that knowing the images under a map $M: V\rightarrow V$ for a basis of $V$ completely defines the action of the map.

Eq.~\eqref{eqn::ActionLinear} can be rewritten as
\begin{gather}
\label{eqn::OuterProd}
M[\mathbf{v}] = \sum_{i=1}^{d_V} \hat{\mathbf{r}}_i'\, (\hat{\mathbf{d}}_i,\mathbf{v}) \equiv \sum_{i=1}^{d_V} \left(\hat{\mathbf{r}}_i' \times \hat{\mathbf{d}}_i^{*}\right)[\mathbf{v}]\, ,
\end{gather}
where we have defined the \emph{outer product}
\begin{gather}
\label{eqn::ComponentsOuter}
\left(\hat{\mathbf{r}}_i' \times \hat{\mathbf{d}}_i^{*}\right)_{kl} =  (\hat{\mathbf{r}}_i')_k(\hat{\mathbf{d}}_i)^{*}_l\, .
\end{gather}
For an orthonormal basis $\left\{\hat{\mathbf{e}}_i\right\}$ of $V$, we have $(\mathbf{a})_i= (\mathbf{a},\hat{\mathbf{e}}_i)$, and $N_{ij} = (\hat{\mathbf{e}}_i, N[\hat{\mathbf{e}}_j])$ for any $\mathbf{a} \in V$ and any linear operator $N$ on $V$. Consequently, we obtain a matrix representation $\mathbf{M}$ of the map $M$:
\begin{gather}
\label{eqn::MatLambda}
(\mathbf{M})_{kl} = \sum_{i=1}^{d_V} \left(\hat{\mathbf{r}}_i' \times \hat{\mathbf{d}}_i^{*}\right)_{kl} 
\end{gather}
and the action of $M$ can be written in terms of the matrix $\mathbf{M}$:
\begin{gather}
(M[\mathbf{v}])_k = \sum_l (\mathbf{M})_{kl} \, v_l = \sum_{i=1}^{d_V} \left(\hat{\mathbf{r}}_i' \times \hat{\mathbf{d}}_i^{*}\right)_{kl} \, v_l\, ,
\end{gather}
where $\mathbf{v} = \sum_m v_m \hat{\mathbf{e}}_m$. Note that there is a distinction between $M$ and $\mathbf{M}$; the former is a map, while the latter is its representation as a matrix. This distinction is often not made when dealing with quantum maps, but here we will make it explicit.

\begin{figure}
\centering
\subfigure[]
{
\includegraphics[scale=1.4]{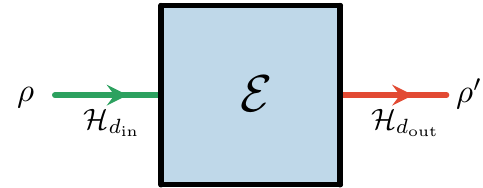}
}
\hspace{1.1cm}
\subfigure[]
{
\includegraphics[scale=0.8]{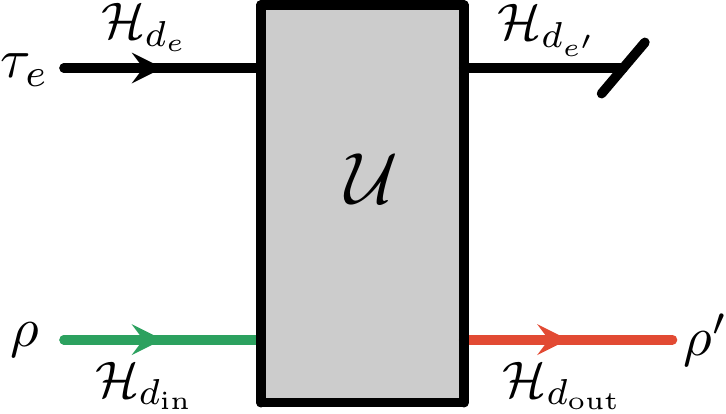}
}
\caption{
\textbf{(a)} \emph{Action of a linear map} $\Ecal:\mathcal{B}(\Hcal_{d_\mathrm{in}})\rightarrow \mathcal{B}(\Hcal_{d_\mathrm{out}})$. The action of the map $\Ecal$ on a random $\rho$ is entirely defined by its 
input-output relation $\{\hat{\rho}_i\} \rightarrow \{\hat{\rho}_i'\}$. \textbf{(b)} \emph{Stinespring dilation}. Any completely positive trace preserving map can be represented (non-uniquely) as the contraction 
of a unitary dynamics $\mathcal{U}$ in a larger space, \textit{i.e.} $\Ecal[\rho] = \tr_{e}\left(\mathcal{U}[\rho\otimes \tau_e]\right)$, where $\mathcal{U}[X] = U X \, U^{\dagger}$ and 
$U$ is a unitary matrix (see Sec.~\ref{subsec::CP}).}
\label{fig::InputOutput}
\end{figure}

\subsubsection{Tomographic representation}

A quantum map $\Ecal$ is a linear map on the vector space $\mathcal{B}(\Hcal_d)$. Since $\mathcal{B}(\Hcal_d)$ is isomorphic to the vector space of $d \times d$ matrices (where $d$ is the dimension of $\Hcal_d$), we can make use of the natural inner product on the latter space, the Hilbert-Schmidt inner product $(\rho,\eta) = \tr(\rho^{\dagger} \, \eta)$, to define an inner product on the space of density operators. Consequently, we can express the action of $\Ecal$ in a way equivalent to Eq.~\eqref{eqn::ActionLinear}; different generalisations of the outer product defined in Eq.~\eqref{eqn::OuterProd} will then lead to different representations of $\Ecal$ (see Sec.~\ref{subsec::Forms}).

To proceed, we need a basis set of the input space. There always exists a set of operators that constitutes a (generally non-orthogonal) basis of $\mathcal{B}(\Hcal_d)$. For example, the set of elementary matrices form such a basis, as do Pauli and Gell-Mann matrices. Both of these sets are orthonormal with respect to Hilbert-Schmidt inner product, but neither of them consists of physical density operators. However, as explained above, the map $\Ecal$ is unambiguously defined by its input-output relation $\Ecal[\hat{\rho}_i] = \hat{\rho}_i'$. Thus, we can use density operators for the basis set: $\left\{\hat{\rho}_i\right\}_{i=1}^{d^2} \subset \mathcal{B}(\Hcal_d)$. For example, for a two-level quantum system we can use the following density operators
\begin{gather}\label{eq:basisstates}
\hat{\rho}_1 = \frac{1}{2}\begin{pmatrix*}[r] 1 & 1 \\ 1 & 1   \end{pmatrix*}, \,
\hat{\rho}_2 = \frac{1}{2}\begin{pmatrix*}[r] 1 & -i \\ i & 1  \end{pmatrix*}, \,
\hat{\rho}_3 = \begin{pmatrix*}[r]1 & 0 \\ 0 & 0  \end{pmatrix*}, \,
\hat{\rho}_4 = \frac{1}{2}\begin{pmatrix*}[r] 1 & -1 \\ -1 & 1  \end{pmatrix*}.
\end{gather}
These matrices are linearly independent and form a basis, but clearly, they are not orthonormal. However, for any choice of basis, there exists a set of \emph{dual matrices} $\left\{\hat{D}_i\right\}_{i=1}^{d^2}$~\cite{modi_positivity_2012} such that $\tr(\hat{D}_i^{\dagger} \hat{\rho}_j) = \delta_{ij}$ (see App.~\ref{subsec::DualMat} for proof). Consequently, in analogy to Eq.~\eqref{eqn::ActionLinear}, the action of $\Ecal$ on $\rho$ can be written as 
\begin{gather}
\label{eqn::ActionMap}
\Ecal[\rho] = \sum_{i=1}^{d^2} \hat{\rho}_i' \, \tr(\hat{D}_i^{\dagger}\rho)\, ,
\end{gather}
which means that determining the output states for a basis of input states entirely defines the action of the map $\Ecal$.

The dual matrices for the states in Eq.~\eqref{eq:basisstates} are
\begin{gather}
\label{eqn::DualsEx}
\hat{D}_1 = \frac{1}{2}\begin{pmatrix*}[c] 0 & 1+i \\ 1-i & 2   \end{pmatrix*}, \,
\hat{D}_2 = \frac{1}{2}\begin{pmatrix*}[r] 0 & -i \\ i & 0   \end{pmatrix*}, \,
\hat{D}_3 = \begin{pmatrix*}[r] 1 & 0 \\ 0 & -1   \end{pmatrix*}, \,
\hat{D}_4 = \frac{1}{2}\begin{pmatrix*}[c] 0 & -1+i \\ -1-i & 2   \end{pmatrix*}.
\end{gather}
Clearly these dual matrices are not positive. In fact, if both the outputs $\hat{\rho}_i'$ and the duals are positive, then $\Ecal$ is necessarily an entanglement breaking channel~\cite{horodecki_entanglement_2003,holevo_1998} (the converse also holds). In general, neither set of matrices in Eq.~\eqref{eqn::ActionMap}, $\{\hat{\rho}_i'\}$ and $\{\hat{D}_i\}$, have to be positive, and it can sometimes even be advantageous to choose non-positive matrices $\hat{\rho}_i'$ and $\hat{D}_i$ for the representation of $\Ecal$. 

However, for a proper quantum map, we can choose $\{\hat{\rho}_i'\}$ to be states, and $\{\hat{D}_i\}$ to be the dual set corresponding to a set of basis states $\{\hat{\rho}_i\}$. Then Eq.~\eqref{eqn::ActionMap} captures  precisely the idea of \emph{quantum process tomography}~\cite{JModOpt.44.2455, PhysRevLett.78.390}, where  the dynamics of a quantum system is experimentally reconstructed by relating a basis of input states to their corresponding outputs. The action of the map $\Ecal$ on any state $\rho$ is then simply determined from the linearity of the map. From here on, we will -- for obvious reasons -- refer to this representation as the \emph{input/output} or \emph{tomographic} representation of $\Ecal$.

\subsubsection{Operator-sum representation}

Based on Eq.~\eqref{eqn::ActionMap}, the action of $\Ecal$ can be rewritten in a form that is used more widely in the literature. Both $\hat{\rho}_i'$ and $\hat{D}_i$ can be expressed in terms of their left- and right-singular vectors, \textit{i.e.},
\begin{gather}
\hat{\rho}_i' = \sum_{\alpha} \ketbra{s^i_\beta}{t^i_\beta}\, , \quad \text{and} \quad \hat{D}_i = \sum_\mu \ketbra{u^i_\mu}{v^i_\mu}\, ,
\end{gather}
where $\{\ket{s_\alpha^i}\}, \{\ket{t_\alpha^i}\}$ and $\{\ket{u_\mu^i}\}, \{\ket{v_\mu^i}\}$ are the respective unnormalised left- and right-singular vectors of $\hat{\rho}_i'$ and $\hat{D}_i$. With this decomposition, the action of $\Ecal$ reads
\begin{align}
\Ecal[\rho] &= \sum_i\hat{\rho}_i'\, \tr(\hat{D}_i^{\dagger}\rho) = \sum_i\sum_{\beta,\mu} \ketbra{s^i_\beta}{t^i_\beta} \tr\left(\ketbra{v^i_\mu}{u^i_\mu} \rho\right) \\
&= \sum_{\beta,\mu} \sum_i \left(\ketbra{s^i_\beta}{u^i_\mu}\right) \rho \left(\ketbra{v^i_\mu}{t^i_\beta}\right)\, .
\end{align}
Compressing the indeces $\{i,\beta,\mu\}$ into one common index yields the \emph{operator sum representation} of $\Ecal$:
\begin{gather}
\label{eqn::OpSum}
\Ecal[\rho] = \sum_{\beta,\mu} \sum_i \left(\ketbra{s^i_\beta}{u^i_\mu}\right)\rho \left(\ketbra{v^i_\mu}{t^i_\beta}\right) \equiv \sum_\alpha L_\alpha \rho \, R_\alpha^{\dagger}\, ,
\end{gather}
where $L_\alpha$ and $R_\alpha$ have the same shape, but are not necessarily square (if the input and output space are not of the same size). In exactly the same vein, the tomographic representation of a map can be recovered from its operator sum representation via a singular value decomposition.

\textbf{Unitary freedom.} We have shown that any linear map can be expressed in the operator sum representation, but the set of matrices $\{L_\alpha,R_\alpha\}$ in  Eq.~\eqref{eqn::OpSum} is not unique. Any set $\{L'_\mu,R'_\mu\}$ of matrices that is connected to $\{L_\alpha,R_\alpha\}$ by an isometry, \textit{i.e.}, $L'_\mu = \sum_{\alpha} (U)_{\mu\alpha}L_{\alpha}$ and $R'_\mu = \sum_{\alpha'} (U)_{\mu\alpha'}R_{\alpha'}$, where $U^{\dagger}U = \openone$, gives rise to the same linear map:
\begin{gather}
\sum_{\mu}L'_{\mu}\rho R_{\mu}^{\prime\,\dagger} = \sum_{\alpha\alpha'}\sum_{\mu} (U)_{\mu\alpha}L_{\alpha} \rho  R_{\alpha'}^{\dagger}(U)_{\mu\alpha'}^{*} = \sum_{\alpha\alpha'} (U^{\dagger}U)_{\alpha'\alpha} L_{\alpha'} \rho  R_{\alpha'}^{\dagger} = \sum_{\alpha} L_\alpha \rho R_\alpha^{\dagger}.
\end{gather}

Both of the representations we have presented so far consist of sets of operator pairs -- $\{\hat{\rho}_i',\hat{D}_i'\}$ for the tomographic representation and $\{L_\alpha,R_\alpha\}$ for the operator sum representation. These will be explored further later in this section. Next, however, we will present two matrix representations for the map.

\subsection{Matrix representations}
\label{subsec::Forms}

Since $\mathcal{B}(\mathcal{H}_d)$ is itself a vector space, it should be possible to represent $\Ecal$ -- a linear map on that space -- as a matrix. Indeed, two such representations were first discovered back in 1961~\cite{SudarshanMatthewsRau61}. To derive these representations, we note that there are (at least) two different ways to generalise the outer product Eq.~\eqref{eqn::ComponentsOuter}, and hence two different ways to obtain representations of $\Ecal$ in terms of outputs and dual matrices.

\subsubsection{Sudarshan's A form}

In clear analogy to Eq.~\eqref{eqn::MatLambda}, one possible matrix representation of $\Ecal$ (in an orthonormal basis of $\Hcal_d\otimes\Hcal_d$) is given by 
\begin{gather}
\label{eqn::EcalA}
\Ecal_A = \sum_{i=1}^{d^2} \hat{\rho}_i' \times \hat{D}_i^{*}\, ,\qquad \text{with} \quad \left(\hat{\rho}_i' \times \hat{D}_i^{*} \right)_{rs;r's'} = (\hat{\rho}_i')_{rs} (\hat{D}_i)^*_{r's'}\, .
\end{gather}
In Dirac notation, this means that we have generalised the outer product defined in \eqref{eqn::ComponentsOuter} as
\begin{gather}
\label{eqn::OuterMatrix}
\ketbra{r}{s}\times \ketbra{r'}{s'} \equiv \ketbra{rs}{r's'}.
\end{gather}
The action of $\Ecal$ can be simply written as 
\begin{gather}
\label{eqn::ActionA}
(\Ecal[\rho])_{rs} = \sum_{r's'}^d(\Ecal_A)_{rs;r's'}(\rho)_{r's'}.
\end{gather}
This is what Sudarshan \textit{et al.} called the A form of the dynamical map~\cite{SudarshanMatthewsRau61}. They observed that the matrix $\Ecal_A$ is not Hermitian even if $\Ecal$ is Hermiticity preserving. Indeed, this matrix is -- quite naturally -- not even square if the input dimensions are different from the output dimensions. 

Mathematically, the outer product `flips' the bra (ket) $\bra{s}$ $(\ket{r'})$ into the ket (bra) $\ket{s}$ $(\bra{r'})$. By \emph{vectorizing} $\rho$ and $\Ecal[\rho]$, we can write Eq.~\eqref{eqn::ActionA} in a more compact way:
\begin{gather}
\label{eqn::VecAction}
\kket{\Ecal[\rho]} = \Ecal_A \kket{\rho}, 
\quad \mbox{where} \quad 
\kket{\rho} = \sum_{rs}(\rho)_{rs} \ket{rs}
\quad \mbox{for} \quad
\rho = \sum_{rs} (\rho)_{rs} \ketbra{r}{s}.
\end{gather}
For the details of \emph{vectorisation} of matrices see, \textit{e.g.}, Refs.~\cite{dariano_orthogonality_2000,gilchrist_vectorization_2009}. Because the action of $\Ecal_A$ onto $\kket{\rho}$ is simply a multiplication of a vector by a matrix, this representation is often favourable for numerical studies.

\begin{figure}
\centering
\includegraphics[scale=1.]{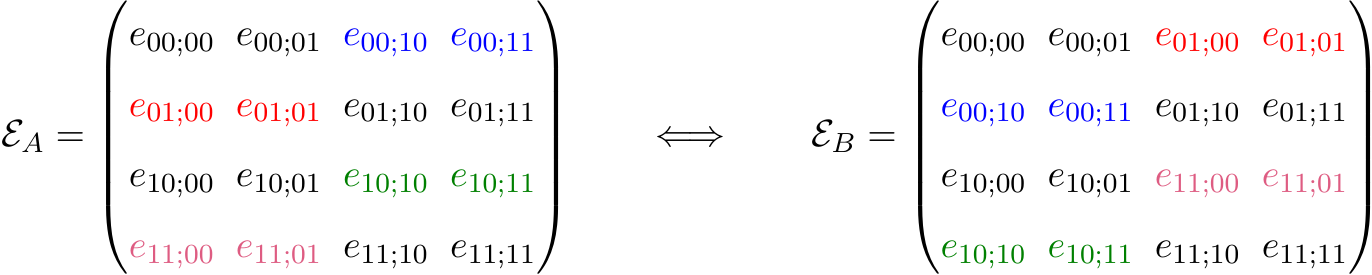}
\caption{\textit{Converting between A and B form.} In a given orthonormal basis, the A and B form of a map $\Ecal$ are related by a simple reshuffling of the matrix elements. For a better orientation, the matrix elements that change position are depicted in colour.}
\label{fig::Reshuffling}
\end{figure}

\subsubsection{Sudarshan's B form}

Next, we consider what Sudarshan \textit{et al.} called the B form of the dynamical map~\cite{SudarshanMatthewsRau61}. Instead of the outer product in Eq.~\eqref{eqn::EcalA}, let us consider a tensor product: 
\begin{gather}
\label{eqn::EcalB}
\Ecal_B = \sum_{i=1}^{d^2} \hat{\rho}_i' \otimes \hat{D}_i^{*},\quad \text{with} \quad \left(\hat{\rho}_i' \otimes \hat{D}_i^{*} \right)_{rr';ss'} = (\hat{\rho}_i')_{rs} (\hat{D}_i)^*_{s'r'}\, ;
\end{gather}
that is, the product in Eq.~\eqref{eqn::ComponentsOuter} is generalised to $\ketbra{r}{s}\otimes\ketbra{r'}{s'} = \ketbra{rr'}{ss'}$. The action of $\Ecal$ can be written as
\begin{gather}
\label{eqn::ActionB}
(\Ecal[\rho])_{rs} = \sum_{r's'}^d(\Ecal_B)_{rr';ss'}(\rho)_{r's'}.
\end{gather}

While the A form is closer in spirit to the general considerations about linear maps on vector spaces, the B form possesses nicer mathematical properties (see below), and from the point of view of quantum mechanics, the tensor product $\otimes$ seems `more natural' than the outer product $\times$. Comparing the matrices $\Ecal_A$ and $\Ecal_B$, it can be seen, from the relation between the outer product and the tensor product, that they coincide up to reshuffling~\cite{SudarshanMatthewsRau61, zyczkowski_duality_2004, bengtsson_geometry_2007}. In Fig.~\ref{fig::Reshuffling}, we show how to go between the two forms for a map acting on a two-level system (qubit). However, unlike $\Ecal_A$, the matrix $\Ecal_B$ is Hermitian iff $\Ecal_B$ is Hermiticity preserving. A quantum map is trace preserving iff $\tr_\mathrm{out}(\Ecal_B) = \openone_\mathrm{in}$, where $\tr_{\mathrm{out}}$ denotes the trace over the output Hilbert space of the map $\Ecal$ [\textit{i.e.}, the trace over the unprimed indices in Eq.~\eqref{eqn::ActionB}] and $\openone_\mathrm{in}$ is the identity matrix on the input space. We will prove these properties in the following subsections.

\subsection{Choi-Jamio{\l}kowski isomorphism}\label{sec::CJI}

Consider the action of a quantum map on one part of an (unnormalised) maximally entangled state $\ket{I} = \sum_{k=1}^{d_{\mathrm{in}}} \ket{kk}$:
\begin{gather}
\label{eqn::Choi}
\Upsilon_{\Ecal} = \Ecal \otimes \Ical \left[\ketbra{I}{I}\right] = \sum_{k,l=1}^{d_{\mathrm{in}}} \Ecal\left[\ketbra{k}{l}\right]\otimes \ketbra{k}{l}\, ,
\end{gather}
where $\{{\ket{k}}\}$ is an orthonormal basis of $\Hcal_{d_\mathrm{in}}$ and $\Ical$ is the identity operator on $\mathcal{B}(\Hcal_d)$. The resultant matrix $\Upsilon_{\Ecal}$ can be shown to be, element-by-element, identical to the quantum map $\Ecal$. In principal, any vector $\ket{I}$ with full Schmidt rank could be used for this isomorphism~\cite{dariano_imprinting_2003}. In the form of~\eqref{eqn::Choi} it is known as the \emph{Choi-Jamio{\l}kowski isomorphism} (CJI)~\cite{jamiolkowski_linear_1972, choi75}, an isomorphism between linear maps, $\Ecal: \mathcal{B} (\Hcal_{d_\mathrm{in}}) \rightarrow \mathcal{B} (\Hcal_{d_\mathrm{out}} )$, and matrices $\Upsilon_{\Ecal} \in \mathcal{B}(\Hcal_{d_\mathrm{out}})\otimes \mathcal{B}(\Hcal_{d_\mathrm{in}})$. In order to keep better track of the input and output spaces of the map $\Ecal$, we explicitly distinguish between the spaces $\Hcal_{d_\mathrm{in}}$ and $\Hcal_{d_\mathrm{out}}$ in this subsection.

Usually, $\Upsilon_{\Ecal}$ is called the \emph{Choi matrix} or \emph{Choi state} of the map $\Ecal$ (we will refer to it as the latter when it is a valid quantum state, up to normalisation)\footnote{By means of the CJI, the density matrix $\rho$ itself can also be considered the Choi state of a map $\Ecal: \mathbb{C} \rightarrow \mathcal{B}(\Hcal)$~\cite{chiribella_theoretical_2009}.}. Given $\Upsilon_{\Ecal}$, the action of $\Ecal$ can be written as 
\begin{gather}
\label{eqn::ChoiAction}
\Ecal[\rho] = \tr_{\mathrm{in}}\left[\left(\openone_{\mathrm{out}} \otimes \rho^{\mathrm{T}}\right)\Upsilon_\Ecal\right] \, ,
\end{gather}
where $\openone_{\mathrm{out}}$ is the identity matrix on $\Hcal_{d_\mathrm{out}}$ and $\tr_{\mathrm{in}}$ denotes the partial trace over the input Hilbert space $\Hcal_{d_\mathrm{in}}$. Eq.~\eqref{eqn::ChoiAction} can be shown by insertion:
\begin{align}
\tr_{\mathrm{in}}\left[\left(\openone_{\mathrm{out}} \otimes \rho^{\mathrm{T}}\right) \Upsilon_\Ecal\right] &=  \sum_{k,l}^{d_\mathrm{in}} \tr_{\mathrm{in}}\left[\left(\openone_{\mathrm{out}} \otimes \rho^{\mathrm{T}}\right) \left(\Ecal[\ketbra{k}{l}]\otimes \ketbra{k}{l} \right)\right] \\
&= \sum_{k,l,m}^{d_\mathrm{in}} \Ecal[\ketbra{k}{l}] \!\bra{m}\rho^{\mathrm{T}}\ket{k}\!\braket{l|m} = \sum_{k,l}^{d_\mathrm{in}} \rho_{kl} \Ecal[\ketbra{k}{l}] = \Ecal[\rho]\, .
\end{align}

The CJI is by no means restricted to quantum maps; \textit{any} linear map $\Ecal$ can be mapped to a Choi matrix $\Upsilon_\Ecal$ via the CJI. For instance, one can imprint a classical stochastic process onto a state by inputting one part of a maximally classically correlated state into the process. For quantum maps, however, the Choi matrix has particularly nice properties. Complete positivity of $\Ecal$ is equivalent to $\Upsilon_\Ecal \geq 0$ (see Sec.~\ref{Sec::OpSum} for a proof), and it is straightforward to deduce from Eq.~\eqref{eqn::ChoiAction} that $\Ecal$ is trace preserving iff $\tr_{\mathrm{out}}(\Upsilon_\Ecal) = \openone_\mathrm{in}$ (see Sec.~\ref{subsec::Quantum}).

Besides these appealing mathematical properties, the CJI is also of experimental importance. Given that a (normalised) maximally entangled state can in principal be created in practice, the CJI enables another way of reconstructing the map $\Ecal$, by letting it act on one half of a maximally entangled state and tomographically determining the resulting state. While this so-called \emph{ancilla-assisted process tomography}~\cite{PhysRevLett.90.193601,PhysRevLett.86.4195} requires the same number of measurements as the input-output procedure, it can be, depending on the experimental situation, easier to implement in the laboratory.

The mathematical properties of $\Upsilon_\Ecal$ are reminiscent of the properties of the B form. However, at first sight, it is not clear how the Choi matrix $\Upsilon_\Ecal$ is related to the different matrix representations of $\Ecal$ in terms of the dual matrices and outputs presented in Sec.~\ref{subsec::LinMaps}. The relation can be made manifest by using the fact that the set $\{\hat{\rho}_i\}_{i=1}^{d_\mathrm{in}^2}$ forms a basis of $\mathcal{B}({\Hcal_{d_\mathrm{in}}})$. With this, we can write $\ket{k}\bra{l} = \sum_{i=1}^{d_\mathrm{in}^2} \alpha_{i}^{(kl)} \hat{\rho}_i$, where $\alpha_{i}^{(kl)} \in \mathbb{C}$ is given by $\alpha_{i}^{(kl)} = \tr(\hat{D}_i^{\dagger} \ket{k}\bra{l})$. Consequently, we obtain
\begin{align}
\label{eqn::Choi_BForm}
\Upsilon_{\Ecal} &= \sum_{k,l} \Ecal[\ketbra{k}{l}]\otimes \ketbra{k}{l} = \sum_{k,l,i} \alpha_{i}^{(kl)}\,\Ecal[\hat{\rho}_i]\otimes  \ketbra{k}{l} \\
\label{eqn::Choi_BForm2}
&= \sum_i \Ecal[\hat{\rho}_i]\otimes  \sum_{k,l} \tr\left(\hat{D}_i^{\dagger} \ketbra{k}{l}\right) \ketbra{k}{l} = \sum_i \Ecal[\hat{\rho}_i] \otimes \hat{D}_i^*\, ,
\end{align}
where, in the last step, we have used the fact that $\{\ket{k}\bra{l}\}_{i,j=1}^{d_\mathrm{in}}$ also forms a basis of $\mathcal{B}(\Hcal_{d_\mathrm{in}})$. By comparison with Eq.~\eqref{eqn::EcalB}, we see that the Choi matrix of $\Ecal$ is exactly equal to the B form of $\Ecal$, \textit{i.e.}, $\Upsilon_\Ecal = \Ecal_B$, and henceforth, we will use the terms Choi matrix and B form interchangeably. 

\subsection{Operator sum representation revisited}
\label{Sec::OpSum}

As mentioned in Sec.~\ref{subsec::LinMaps}, any linear map can be written in terms of an operator sum representation. We proved this statement using the input-output action of the linear map, given in Eq.~\eqref{eqn::ActionLinear}. We now provide an alternative proof employing $\Upsilon_\Ecal$. The Choi matrix $\Upsilon_\Ecal$ can be written in terms of its unnormalised left- and right-singular vectors, \textit{i.e.} $\Upsilon_\Ecal = \sum_{\alpha=1}^{D} \ket{w_\xi}\bra{y_\xi}$, where $D= d_\mathrm{out} d_\mathrm{in}$. We have
\begin{align}
\Ecal[\rho] &= \sum_{\alpha=1}^D \tr_{\mathrm{in}}\left[\left(\openone_{\mathrm{out}} \otimes \rho^{\mathrm{T}}\right)\ketbra{w_\alpha}{y_\alpha}\right] = \sum_{\alpha=1}^D \sum_{k,l=1}^{d_\mathrm{in}}  \braket{l|w_\alpha}\!\bra{k}\rho^{\mathrm{T}} \ket{l}\! \braket{y_\alpha|k} \\
&= \sum_{\alpha=1}^D \left(\sum_l^{d_\mathrm{in}} \braket{l|w_\alpha}\! \bra{l}\right) \rho \left(\sum_{k=1}^{d_\mathrm{in}} \ket{k}\!\braket{y_\alpha|k}\right) \equiv \sum_{\alpha = 1}^D L_\alpha \rho \, R_\alpha^{\dagger}\,, 
\end{align}
which is the operator sum representation of $\Ecal$ already encountered in Sec.~\ref{subsec::LinMaps}.

Given the operator sum representation of a linear map $\Ecal$, it is possible to find another way of writing its A and B form. The B form $\Ecal_B$ is obtained via 
\begin{gather}\label{eqn::Kraus_B}
\Ecal_B = \Upsilon_\Ecal = \sum_\alpha \sum_{i,j = 1}^{d_\mathrm{in}} L_\alpha \ketbra{i}{j} R_\alpha^{\dagger} \otimes \ketbra{i}{j} = \sum_\alpha L_\alpha \times R_\alpha^*\, .
\end{gather}
Correspondingly, the A form of $\Ecal$ can be written as~\cite{usha_devi_open-system_2011}
\begin{gather}
\label{eqn::Kraus_A}
\Ecal_A = \sum_\alpha L_{\alpha} \otimes R_\alpha^{*}\, . 
\end{gather}
Indeed, substituting Eq.~\eqref{eqn::Kraus_A} into Eq.~\eqref{eqn::ActionA}, we obtain \begin{gather}
(\Ecal[\rho])_{rs} = \sum_{r's'}(\Ecal_A)_{rs;r's'}\rho_{r's'} = \sum_\alpha \sum_{r's'} (L_\alpha)_{rr'} \rho_{r's'} (R_\alpha^{*})_{ss'} = \left(\sum_\alpha L_\alpha \rho  R_\alpha^{\dagger}\right)_{rs}\, .
\end{gather}

The operators $\{ L_\alpha, \, R_\alpha \}$ are operationally different from $\{ \hat{\rho}_i', \hat{D}_i\}$ in Eq.~\eqref{eqn::EcalB}. A quantum map in the form $\Ecal_A$ ($\Ecal_B$) is obtained by tensor (outer) product of the former, and outer (tensor) product of the latter. Therefore, in clear analogy to the corresponding statement for the B form and the operator sum representation, we can recover the tomographic representation of $\Ecal$ via a singular value decomposition of $\Ecal_A$.

\subsection{Properties of quantum maps}
\label{subsec::Quantum}

The four representations derived above (input-output, operator sum, A form and B form) are valid for any linear map on a finite-dimensional complex operator space. However, not every such map represents the dynamics of a physical system. In order to do so, it must ensure that the statistical character of quantum states is preserved. Here, we lay out the mathematical constraints imposed on quantum maps by this requirement, and explore the corresponding implications for different representations.

\subsubsection{Trace preservation}
\label{subsubsec::Tr}

Since the trace of the density operator represents its normalisation, a deterministic quantum map should be trace preserving (more general, trace non-increasing maps do not have this property, and represent probabilistic quantum processes). This requirement can be stated as 
\begin{gather}
\tr(\rho') = \tr\left( \Ecal[\rho] \right) = \sum_i \tr[\hat{\rho}_i'] \, \tr[\hat{D}_i^\dag \rho] \qquad \forall\rho.
\end{gather}
Since $\tr[\hat{D}_i^\dag \hat{\rho}_i] = 1$, by linearity, the trace-preservation condition holds iff $\tr(\hat{\rho}_i') = \tr(\hat{\rho}_i)$. That is, the map $\Ecal$ is trace preserving iff it is trace preserving on a basis of inputs. Equivalently, a map $\Ecal$ is trace preserving iff $\sum_i (\tr\hat{\rho}_i')\hat{D}_i^{*} = \openone$.

Trace preservation can also be stated in a succinct way in terms of the operator sum representation. We have
\begin{gather}
\tr \left(\Ecal[\rho] \right) = \tr\left(\sum_\alpha L_\alpha \rho R_\alpha^{\dagger}\right) = \tr \left(\sum_\alpha R_\alpha^{\dagger} L_\alpha \rho \right)\, ,
\end{gather}
and hence $\Ecal$ is trace preserving for all $\rho$ iff $\sum_\alpha R_\alpha^{\dagger} L_\alpha= \openone$.

In a similar way, we can express trace preservation of $\Ecal$ in terms of the B form. If $\Ecal$ is trace preserving, we have $\tr(\Ecal[\rho]) = \rho$ for all $\rho$. In terms of $\Ecal_B$, this means
\begin{gather}
\label{eqn::TrChoi}
\tr(\Ecal[\rho]) = \tr\left[\left(\openone_{\mathrm{out}} \otimes \rho^{\mathrm{T}}\right) \Ecal_B\right] = \tr\left[\rho^{\mathrm{T}}\tr_\mathrm{out}(\Ecal_B)\right] = \tr(\rho)\, ,
\end{gather}
which is true iff $\tr_\mathrm{out}(\Ecal_B) = \openone$.

\subsubsection{Hermiticity preservation}
\label{subsubsec::Herm}

Given a valid quantum state as input, a physical quantum map should produce a valid quantum state as output; hence, it should preserve the Hermiticity of the density operator.
A map with this property satisfies $\Ecal[\rho] = \left(\Ecal[\rho]\right)^\dag$ for all $\rho = \rho^{\dagger}$. In terms of output matrices and duals, this condition reads
\begin{gather}
(\rho')^{\dagger} = \sum_i (\hat{\rho}_i')^{\dagger} \tr\left(\hat{D}_i \rho\right) = \sum_i \hat{\rho}_i' \tr\left(\hat{D}_i^{\dagger} \rho\right) = \rho'\, , \qquad \forall \ \rho = \rho^{\dagger}\, .
\end{gather}

If $\Ecal$ is Hermiticity preserving, then its B form (Choi matrix) $\Ecal_B$ is Hermitian. This follows from the fact that Hermiticity preservation of $\Ecal$ implies Hermiticity preservation of $\Ecal \otimes \Ical_\an$, where $\Ical_\an$ is the identity map on an arbitrary ancilla. Consequently, the decomposition of $\Ecal_B$ in terms of its left- and right-singular vectors becomes an eigendecomposition, \textit{i.e.}, $\Ecal_B = \displaystyle \sum_{\alpha = 1}^D\lambda_\alpha \ketbra{\alpha}{\alpha}$, where all the eigenvalues $\lambda_\alpha \in \mathbb{R}$ and we have $\braket{\alpha|\alpha'} = \delta_{\alpha\alpha'}$. Hence, the action of $\Ecal$ can be written as  
\begin{align}
\Ecal[\rho] &= \sum_{\alpha=1}^D \tr_{\mathrm{in}}\left[\left(\openone_{\mathrm{out}} \otimes \rho^{\mathrm{T}}\right)\lambda_\alpha \ketbra{\alpha}{\alpha}\right]   \\
\label{eqn::HermPres}
 &=\sum_{\alpha=1}^D \lambda_\alpha \,\left(\sum_{l=1}^{d_\mathrm{in}} \braket{l|\alpha}\!\bra{l}\right) \rho \left( \sum_{k=1}^{d_\mathrm{in}}\ket{k}\! \braket{\alpha|k}\right)
\equiv \sum_{\alpha=1}^D \lambda_\alpha\, \widetilde{K}_\alpha \rho \widetilde{K}_\alpha^{\dagger}\, ,
\end{align}
which implies that the matrices $\{L_\alpha,R_\alpha\}$ of the map's operator sum representation satisfy $L_\alpha = \pm R_\alpha \ \forall \alpha$.
In fact, the ability to write the map's action in the form of Eq.~\eqref{eqn::HermPres} is a necessary and sufficient condition for Hermiticity preservation~\cite{verstraete_quantum_2002, jordan:052110}, as is the Hermiticity of its B form $\Ecal_B$.

\subsubsection{Complete positivity}
\label{subsec::CP}

In addition to being Hermiticity and trace preserving, a physical quantum map $\Ecal$ must be positive, \textit{i.e.}, it must map positive matrices $\rho$ to positive matrices $\rho'$. What is more, it must be \emph{completely positive} (CP): \textit{any} trivial extension $\Ecal\otimes \Ical_{\an}: \mathcal{B}(\Hcal_{d}) \otimes  \mathcal{B}(\Hcal_{n_\an}) \rightarrow \mathcal{B}(\Hcal_{d}) \otimes  \mathcal{B}(\Hcal_{n_\an})$, where $\Hcal_{n_\an}$ is $n$-dimensional and $\Ical_\an$ is the identity map on $\mathcal{B}(\Hcal_{n_\an})$, must also be positive. In other words, a meaningful operation $\Ecal$ that acts non-trivially only on a subset of the degrees of freedom of a quantum state should not yield a non-physical result: $(\Ecal \otimes \Ical_\an)[\eta] \geq 0$ for any $\eta \geq 0$ and any $n \geq 1$.

The above justification for CP is an operational one. However, CP also guarantees that the action of the (trace preserving) map comes from a joint unitary dynamics of the system with an environment, as proven by Stinespring in 1955~\cite{stinespring1955} (see Fig.~\ref{fig::InputOutput}). This result was recently generalised for trace non-increasing CP maps~\cite{chiribella_transforming_2008}, which can, in principle, be physically realised within quantum mechanics as joint unitary dynamics with postselection.

If the map $\Ecal$ is CP, than its B form $\Ecal_B = \Ecal \otimes \Ical\left[\ketbra{I}{I}\right]$ is non-negative by definition\footnote{It is clear from this that CP also implies Hermiticity preservation.}, and hence all its eigenvalues satisfy $\lambda_\alpha \geq 0$. This allows Eq.~\eqref{eqn::HermPres} to be further simplified:
\begin{gather}
\label{eqn::canonical}
\Ecal[\rho] = \sum_{\alpha=1}^D \left(\sum_{l=1}^{d_\mathrm{in}}\sqrt{\lambda_\alpha} \braket{l|\alpha}\!\bra{l}\right) \rho \left(\sqrt{\lambda_\alpha} \sum_{k=1}^{d_\mathrm{in}}\ket{k}\! \braket{\alpha|k}\right) \equiv \sum_{\alpha = 1}^D K_\alpha \rho K_\alpha^{\dagger}\, ,
\end{gather}
This form of the map was first noticed by Sudarshan \textit{et al.}~\cite{SudarshanMatthewsRau61} in 1961 by means of eigendecomposition of $\Ecal_B$. However, it is now commonly referred to as the Kraus form~\cite{Nielsen00a} and the $d_{\mathrm{out}}\times d_\mathrm{in}$ matrices $\{K_\alpha\}$ are called the \emph{Kraus operators} of $\Ecal$~\cite{kraus_general_1971,kraus_states_1983}. CP therefore implies that $L_\alpha = R_\alpha, \, \forall \alpha$ in the general operator sum representation of $\Ecal$.
 
\textbf{Properties of the Kraus form.}
As in the case of general linear maps, the set of Kraus operators that corresponds to $\Ecal$ is not unique. Any set $\{K'_\mu\}$ of $d_{\mathrm{out}}\times d_\mathrm{in}$ matrices that is related to $\{K_\alpha\}$ by an isometry gives rise to the same map, \textit{i.e.}, $\{K'_\mu = \sum_{\alpha'} (U)_{\mu\alpha'}K_{\alpha'}\}$, where $U^{\dagger}U = \openone$, is also a valid set of Kraus operators for $\Ecal$.
The minimal number of operators needed for the operator sum representation of a CP map $\Ecal$ is called its \emph{Kraus rank}. It coincides with the rank of the Choi state $\Upsilon_{\Ecal}$~\cite{verstraete_quantum_2002}.

Every CP map allows for a \emph{canonical} Kraus decomposition,  where the number of Kraus operators is minimal and they are mutually orthogonal, \textit{i.e.}, $\tr(K_\alpha K^{\dagger}_{\alpha'}) \propto \delta_{\alpha \alpha'}$. In fact, the Kraus decomposition derived in~\eqref{eqn::canonical} is already canonical:
\begin{gather}
\tr(K_{\alpha}K_{\alpha'}^{\dagger}) = \sum_{k,l} \sqrt{\lambda_\alpha \lambda_{\alpha'}} \tr( \braket{l|\alpha}\!\braket{l|k} \!\braket{\alpha'|k} ) = \lambda_{\alpha}\delta_{\alpha\alpha'}\,.
\end{gather}

So far, we have shown that the action of a CP map can be expressed in terms of a Kraus decomposition. The inverse of this statement is also true. If the action of a map $\Ecal$ can be written as $\Ecal[\rho] = \sum_{\alpha}K_\alpha\rho K_{\alpha}^{\dagger}$, then 
\begin{gather}
\label{eqn::KrausProof}
\left(\Ecal \otimes \Ical_\an \right)[\eta] = \sum_{\alpha}(K_\alpha \otimes \openone)\eta (K_\alpha^{\dagger} \otimes \openone) = \sum_{\alpha}[(K_\alpha \otimes \openone)\sqrt{\eta}\,][\sqrt{\eta}\, (K_\alpha^{\dagger} \otimes \openone)]\,,
\end{gather}
where $\sqrt{\eta}$ exists and is positive due to the positivity of $\eta$. The last term in~\eqref{eqn::KrausProof} is of the form $\sum_\alpha A_\alpha A_\alpha^{\dagger}$ which is positive, as every term $A_\alpha   A_\alpha^{\dagger}$ is positive on its own. As this is true independent of the size of $\eta$, we have shown that $\Ecal \otimes \Ical_\an$ is positive if the action of $\Ecal$ can be written in terms of a Kraus decomposition. This means that a map $\Ecal$ is CP iff its action can be written in terms of a Kraus decomposition. Equivalently, $\Ecal$ is CP iff $\Ecal_B$ is positive, which implies that a map for which $\Ecal \otimes \Ical_\an \geq 0,\, \forall \ \mathrm{dim}(\Hcal_{n_\an}) \le d$ satisfies $\Ecal \otimes \Ical_\an \geq 0$, for any dimension of $\mathrm{dim}(\Hcal_{n_\an})$.

\begin{figure}
\centering
\includegraphics[scale=1.1]
{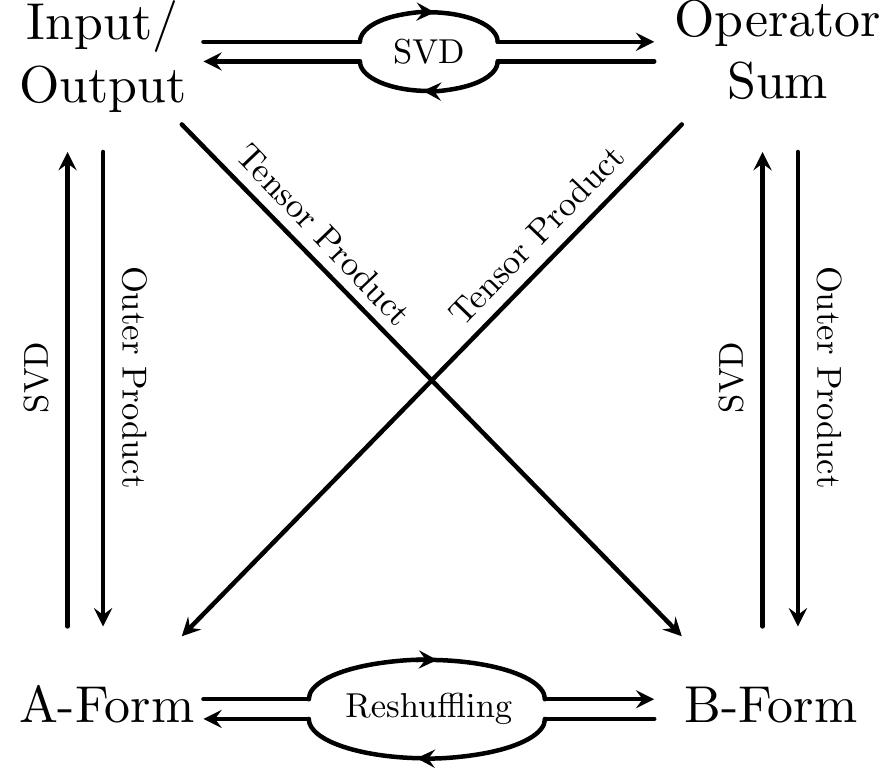}
\caption{\emph{Converting between different representations.} Even though we have not drawn the corresponding arrows explicitly, it is, \textit{e.g.}, possible to get from the A form to the operator sum representation by reshuffling followed by a singular value decomposition (SVD) (analogously for B form to input/output representation).}
\label{fig::CommutationDiag}
\end{figure}

\subsection{Representations of quantum maps -- a summary}
\label{sec::SummRep}

All the representations introduced above constitute different concrete ways of expressing the action of the same abstract map $\Ecal$. This situation is reminiscent of differential geometry, where (abstract) geometrical objects can be expressed in terms of different coordinate systems. And, just like in differential geometry, where ``in practice few things are more useful than a well-chosen coordinate system''~\cite{bengtsson_geometry_2007}, which representation of $\Ecal$ is most advantageous depends on the respective experimental or computational context. 

While the A form does not possess particularly nice mathematical properties, even for the case of quantum maps, the fact that its action can be written in terms of a simple matrix multiplication, makes it appealing for numerical simulations (where differential equations must often be expressed in vector form). On the other hand, the properties of the B form make it easy to check whether or not a map corresponds to a physical process. It also embodies the CJI between quantum maps and quantum states, which can be used for the ancilla-assisted tomography of quantum maps. 

The tomographic representation of $\Ecal$ is closest in spirit to the experimental reconstruction of the action of $\Ecal$. Given the experimentally obtained input/output relation between a basis of inputs and their corresponding outputs, it allows one to directly infer the action of $\Ecal$ on an arbitrary input state.

Lastly, the operator sum representation is particularly advantageous from a theoretical point of view. Proving that the dynamics of an open system for a particular initial state is CP amounts to showing that it can be written in terms of a Kraus decomposition~\cite{kraus_states_1983, alicki_semi_1987, rodriguez-rosario_completely_2008, pollock_complete_2015}, and the existence of a minimal Kraus decomposition can be employed to show the existence of generalised Stinespring dilations~\cite{chiribella_transforming_2008, chiribella_theoretical_2009}.

The above list of applications of different representations is by no means exhaustive, but it gives a flavour of when they are each most useful. We have summarised the different representations and their properties in Table~\ref{tab::Summary}, while Fig.~\ref{fig::CommutationDiag} depicts how to convert from one representation to another. 

\begin{table}
\setlength{\tabcolsep}{3pt}
\renewcommand{\arraystretch}{1.3}
\centering
\begin{tabular}{c||c|c|c|c}
& \makecell{Kraus  \vspace{-4pt}\\ (Operator Sum)}  & \makecell{Input/Output \vspace{-4pt} \\  (Tomographic)} & \makecell{Sudarshan B form  \vspace{-4pt}\\ (Choi Matrix)} & \makecell{Sudarshan A Form } \\\hline \hline
Rep. &
$\{L_\alpha,R_\alpha\}$ & $\{\hat{\rho}_i',\hat{D}_i\}$ & $ \phantom{\begin{array}{l}
. \\ .\end{array}} \Ecal_B = \begin{cases}
 \sum_i \hat{\rho}_i' \otimes \hat{D}_i^*\\ \sum_\alpha L_\alpha \times R_\alpha^* \end{cases}$&  $ \Ecal_A = \begin{cases}\sum_i \hat{\rho}_i' \times \hat{D}_i^* \\ \sum_\alpha L_\alpha \otimes R_\alpha^*\end{cases}$    \\ \hline

Action  &
$\rho' =\sum_\alpha L_\alpha \rho \, R_\alpha^{\dagger}$ & $\rho'=\sum_i \hat{\rho}_i'\tr(\hat{D}_i^{\dagger}\rho)$ &  $\rho' = \tr_{\mathrm{in}}[(\openone_{\mathrm{out}} \otimes \rho^{\mathrm{T}})\, \Ecal_B]$  &   $\ket{\rho'}\rangle = \Ecal_A\ket{\rho}\rangle$ \\ 
\hline
TP &
$\sum_\alpha R_\alpha^{\dagger} L_\alpha = \openone$ & $\sum_i\tr(\hat{\rho}_i')\hat{D}_i^{\dagger} = \openone$  &  $\tr_{\mathrm{out}}(\Ecal_B) = \openone_\mathrm{in} $  & \multirow{4}{*}{\diagbox[dir=NE]{\hspace{15pt} }{\\ \hspace{30pt}} }\\ \cline{1-4}

HP &$L_\alpha = \pm R_\alpha \ \ \forall \alpha$ & $\begin{array}{@{}r@{}l@{\quad}} \hat{\rho}_i^{\dagger}&{}= \hat{\rho}_i \ \ \forall i \\ \Rightarrow \ (\hat{\rho}_i')^\dagger&{}= \hat{\rho}_i' \ \ \forall i \end{array}$ & $\Ecal_B^{\dagger} = \Ecal_B$ &  \\ \cline{1-4}

CP & $L_\alpha = R_\alpha \ \ \forall \alpha$ & $\sum_i \hat{\rho}_i' \otimes \hat{D}_i^* \geq 0$ & $ \Ecal_B \geq 0 $ & \\ \hline

\end{tabular}
\caption{\emph{Linear maps in different representations.} Note that the A form does not possess particularly nice properties for trace preserving (TP), Hermiticity preserving (HP) or completely positive (CP) maps. Hermiticity preservation for the Input/Output representation is denoted only for the case where all inputs $\hat{\rho}_i$ are Hermitian.}
\label{tab::Summary}
\end{table}

\section{Generalisations of quantum maps}
\label{sec::beyond}

The quantum maps described in the previous section take the initial state $\rho$ of the system at a particular point in time $t_0$ to that at a particular later time $t$. Consequently, they allow for the calculation of two-time correlation functions between observables. Their experimental reconstruction, as introduced in Sec.~\ref{subsec::Forms}, is well-defined if the relation between input and output states is linear; this, in turn, means that the system can be prepared independently of its environment. We will see below that this implies that the system and environment are in a product state $\rho\otimes \tau_{\e}$ at $t_0$. If the experimental situation is such that the initial system state is correlated with the environment, or \emph{multi}-time correlation functions are of interest, quantum maps from density operators to density operators are neither well-defined nor sufficient as a description of the experimental situation. 

We first discuss the problem of initial correlations, and various attempts to solve it. We will then offer an operational resolution, which opens up the door to describe arbitrary quantum processes.

\subsection{Initial correlation problem}
\label{sec::IC}

In the late 1990s and early 2000s, experimentalists began reconstructing quantum gates -- the fundamental elements of a quantum computer -- by means of quantum process tomography~\cite{Nielsen:1998py, PhysRevA.64.012314, PhysRevLett.91.120402, Wein:121.13, orien:080502, NeeleyNature, chow:090502, Howard06, myrskog:013615}. Ideally a quantum gate is a unitary operation, but in practice they can be noisy. Therefore, the results of these experimental reconstructions were expected to be CP quantum maps. Yet, to the surprise of many researchers, the reconstructed maps were often not CP, and it was not clear why. 

This initiated a flurry of theoretical explanations, one of which suggested that, if the initial state of the system is correlated with its environment, the quantum map describing the dynamics of the system need not be completely positive~\cite{shaji_whos_2005}. As mentioned above, Stinespring's theorem~\cite{stinespring1955} guarantees that any CP dynamics for the system $\s$ can be thought of as coming from unitary dynamics of the system with an environment $\e$. However, this construction assumes that the initial state of the system-environment ($\se$) state is uncorrelated -- a very restrictive assumption in practice. For instance, consider the case where the initial $\se$ state at $t_0$ is uncorrelated, meaning that the dynamics to some later $t_1$ is CP. In general, at $t_1$ the state of $\se$ will be correlated, and if we want to describe the dynamics from $t_1$ to later $t_2$, the quantum maps discussed in Sec.~\ref{sec::QuantumMaps} no longer apply.

\subsubsection{Not completely positive maps, not completely useful}
\label{sec::NCP}

As most clearly elucidated by Pechukas in his seminal paper~\cite{pechukas94a} (and in a subsequent exchange between him and Alicki~\cite{Alicki95, Pechukas95}), a map whose argument is the state of $\s$ is both completely positive and linear iff there are no initial $\se$ correlations. Pechukas originally proved the theorem for qubits, but it was later generalised to $d$-dimensional systems in Ref.~\cite{jordan:052110}; here, we give a version of this result that closely resembles Ref.~\cite{PhysRevA.81.012313}.

Pechukas introduced an \textit{assignment map}  $\mathfrak{A}: \mathcal{B}(\Hcal_{d_\s}) \rightarrow \mathcal{B}(\Hcal_{d_\s} \otimes\Hcal_{d_\e})$, which assigns a $\se$ operator for every $\s$ state, with a consistency condition: $\tr_\e\mathfrak{A}[\rho] = \rho \ \ \forall \rho$. Concatenating the assignment map $\mathfrak{A}$ with a unitary $U_\se$, and tracing over $\e$, gives a map $\Fcal$:
\begin{gather}
\Fcal[\rho] = \tr_\e\left\{U_\se \mathfrak{A}[\rho] U^\dagger_\se \right\} \equiv \tr_\e\left\{U_\se \rho^0_\se U^\dagger_\se \right\}. \label{eqn::dilation}
\end{gather}
The unitary $U_\se$ and trace over $\e$ are both CP maps; therefore, if we require that $\mathfrak{A}$ is linear and CP, then it follows that $\Fcal$ must also have these properties (and is therefore a legitimate quantum map of the sort discussed in the previous section).  

Now, for a consistent and CP assignment it follows that, for a basis $\{\hat{\rho}_i\}$ consisting of pure states, $\mathfrak{A}[\hat{\rho_i}] = \hat{\rho_i}\otimes {\tau_\e}_i$, where ${\tau_\e}_i$ have to be density operators (as required for positivity of the assignment). By the same argument, the action of the assignment map must also give a product $\se$ state on \textit{any} pure state. Let us take a pure state not in the basis and linearly express it as $\sigma = \sum_i c_i \hat{\rho_i}$, where $c_i$ are real, with $\sum_i c_i =1$, but not necessarily positive. The action of the assignment map gives us $\sum_i c_i \hat{\rho_i}\otimes {\tau_\e}_i = \sum_i c_i \hat{\rho_i}\otimes {\tau_\e}$ and, therefore, ${\tau_\e}_i=\tau_\e$~$\forall i$. That is, the initial $\se$ state is uncorrelated: $\mathfrak{A}[\rho]=\rho\otimes\tau_\e$. Conversely, if the initial $\se$ state $\rho_\se^0$ is correlated, then either complete positivity or linearity must be abandoned\footnote{Another option is to give up consistency, but this too is not desirable. We will address this matter in some detail at the end of this subsection.}~\cite{Alicki95, Pechukas95}. 

\begin{figure}
\centering
\includegraphics[scale=1.7]
{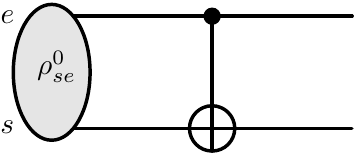}
\caption{A simple circuit, for which the reduced dynamics of $\s$ is not describable by a CP quantum map when $\rho^0_\se$ is initially correlated.}
\label{fig::NCPexample}
\end{figure}

From an experimental standpoint, this state of affairs is problematic. On the one hand, complete positivity is a useful property -- giving up CP means giving up the Holevo quantity~\cite{holevo-assingment}, data processing inequality~\cite{buscemi_complete_2014}, and entropy production inequality~\cite{argentieri2014violations} -- and a CP description naturally predicts the physical fact that one always reconstructs positive probabilities (even for correlated preparations). On the other hand, dropping linearity is not a viable option either: complete tomography is not possible when the dynamics is nonlinear -- at least not in a finite number of experiments.

Giving up either is undesirable; however, faced with this choice, many researchers have opted to relinquish complete positivity of dynamics in favour of a framework for open dynamics based on not-completely positive (\textbf{NCP}) maps~\cite{StelmachovicBuzek01, jordan:052110}. In brief, NCP maps $\Fcal^{\rm NCP}$ are linear maps that preserve positivity for some subset of the space of system density operators, but fail to do so on the remaining set. They take as their starting point an assignment map $\mathfrak{A}$, as above, but do not require it to produce a positive $\se$ operator for all inputs. Instead, $\mathfrak{A}$ is required to be consistent, such that $\mathfrak{A}[\tr_\e\rho_\se^0] = \rho_\se^0$ for some correlated $\rho_\se^0$, and the action of $\Fcal^{\rm NCP}$ is only defined on the set $\{\rho : \mathfrak{A}[\rho]\geq0\}$ (which always contains $\tr_\e\rho_\se^0$), called the compatibility domain of the map. Its action can then be defined through the dilation in Eq.~\eqref{eqn::dilation}, which will only result in a positive output when $\rho$ is in the compatibility domain.

While mathematically well-defined (though not unique), the NCP framework lacks a clear link to the operational reality of quantum dynamics. It assumes that there is a family of initial system states (the compatibility domain) available, and that the experimenter knows exactly which of these states is the input in each run of the experiment.  However, unlike in a classical stochastic process -- where an experimenter can observe initial and final states of a system without disturbing it -- there is no operational mechanism for identifying which initial state $\rho$ will undergo the evolution in the quantum case. That is, there is no way for the experimentalist to differentiate between initial states in any given run without disturbing the system, and hence changing the correlations between $\s$ and $\e$. Without such a mechanism, the concept of a compatibility domain becomes a purely mathematical notion, void of physical meaning. Instead, the experimenter is presented in each run with a fixed average $\se$ state, which they can then prepare by performing a \textit{control operation}.

In other words, there is no unambiguous way to go into the laboratory and directly reconstruct a NCP map through process tomography. In fact, if one were to attempt such a reconstruction, it would quickly become apparent that the dynamics depends not on the initial state, but on how that state is prepared. To see this more clearly, consider the two-qubit circuit in Fig.~\ref{fig::NCPexample} with initially correlated pure state $\rho^0_\se = \ketbra{\psi}{\psi}$, where $\ket{\psi} = \frac{1}{\sqrt{2}} (\mu \ket{00} + \nu \ket{11})$, and an $\s\e$ dynamics given by a CNOT gate. An experimenter wishing to tomographically reconstruct the dynamics of just one of the qubits would have to prepare a variety of initial system states (we will return to this point later). Say they intended to prepare the initial state $\ketbra{0}{0}$. This would involve making a projective measurement (for sake of argument, in the computational basis $\{\ket{0},\ket{1}\}$) followed by a unitary transformation that depends on the outcome -- $\openone$ in the case that the outcome corresponding to projector $\Pi_0=\ketbra{0}{0}$ is observed, and $\sigma_x$ in the case that the outcome corresponding to $\Pi_1=\ketbra{1}{1}$ is observed. However, the state of the environment qubit, and hence the subsequent dynamics, would also depend on the measurement outcome: $\tau_{\e|0} = \tr_\s\{\Pi_0\otimes\openone\ketbra{\psi}{\psi}\} = \ketbra{0}{0} \neq \tau_{\e|1} = \tr_\s\{\Pi_1\otimes\openone \ketbra{\psi}{\psi}\} = \ketbra{1}{1}$. That is, despite the fact that the initial state is the same in these two cases, the final density operator differs (in fact, the two different output states are orthogonal). Choosing a different preparation procedure will not alleviate these issues, and similar problems arise for any initially correlated state. This leads us to conclude that there is no unique way to prepare a state, and the preparation procedure plays a role in determining the future evolution~\cite{modi_preparation_2011}.

Let us take this argument one step further and attempt to perform quantum process tomography by preparing the basis states with projections. For simplicity, we will confine the tomography to the $x-z$ plane of the Bloch sphere. We prepare basis states $\Pi_0$ and $\Pi_1$ by projecting the system in the $z$ basis, and basis state $\Pi_+ = \ketbra{+}{+}$ by projecting in the system in the $x$ basis. In the latter case, sometimes we will find the system in state $\Pi_- = \ketbra{-}{-}$, which is linearly related to the basis states as $\Pi_- = \Pi_0 + \Pi_1 - \Pi_+$. The output states corresponding to basis states $\Pi_0, \Pi_1, \Pi_+$ are easily computed to be $\Pi_0, \Pi_0, \Pi_+$. And similarly, by examining the global dynamics we find that $\Pi_-$ maps to itself. However, if we try to construct a linear map using the input/output data, we find that it predicts $\Pi_-$ will be mapped to matrix $\Pi_0 + \Pi_0 - \Pi_+$, which is non-positive. The constructed map is therefore NCP, and it makes a nonsensical prediction. Clearly, we can prepare $\Pi_-$ and observe the subsequent output of the process. But the constructed map does not capture this physics, and we have to conclude that NCP maps are not useful. Indeed, there are many examples where quantum process tomography, without properly taking the preparation procedure into account, leads to NCP and nonlinear maps\footnote{On the other hand, it is possible to construct a meaningful map where all preparations are projective~\cite{kuah:042113}, or with any other restricted set of preparations~\cite{milz_reconstructing_2016}, when these are correctly accounted for.}~\cite{modi_role_2010}.

Before introducing a resolution to the problem of initial correlations, we discuss the matter of giving up consistency to retain CP and linearity of the dynamics. Research along these lines led to the claim that ``vanishing quantum discord is necessary and sufficient for completely positive maps"~\cite{PhysRevLett.102.100402} which received a great deal of attention, but then was subsequently proven to be  incorrect~\cite{PhysRevA.87.042301}, leading to an erratum~\cite{PhysRevLett.116.049901}. In Ref.~\cite{rodriguez-rosario_completely_2008}, it was shown that if the initial $\se$ state has vanishing quantum discord, then a CP map can be ascribed to the dynamics of $\s$. Consequently, by projectively measuring the system part of any initial state $\rho^0_\se$ -- which will always produce a discord zero state -- one can associate a CP map from the measurement outcome at the initial time to the quantum state at the final time. The problem with this approach is that the CP maps depend on the choice of measurement, which does not depend on the pre-measurement state of the system. The corresponding assignment map is therefore not consistent, and one is left with only a partial description of the dynamics (with similar issues to the example above).

\subsubsection{Operational resolution: Superchannels}

As already mentioned, the first step of any experiment is to prepare the system in a desired state by applying a control operation. The control operations can be anything, including unitary transformations, projective 
measurements, projective measurements followed by a unitary transformation (like in the example above) and everything in-between.  Mathematically, a control operation $\Acal_\s$ is just a (trace non-increasing) CP quantum map (as described in Sec.~\ref{sec::QuantumMaps}). In a dilated picture the final state is related to the control operation as the following:
\begin{gather}\label{eqn::SCD}
\rho'  = \tr_\e \{U (\Acal_\s \otimes \Ical_\e [\rho^0_\se]) U^\dag \} \equiv \Mcal[\Acal_\s]\, ,
\end{gather}
where we have defined the \emph{superchannel} $\mathcal{M}$~\cite{modi_operational_2012}, a linear operator that maps preparations to final states. From here on we omit the subscript $\s$ on the control operation $\Acal_\s$, and assume it only acts on the system. 

From Eq.~\eqref{eqn::SCD}, it is clear that the superchannel is linear in the same way as $\Ecal$, as argued at the beginning of Sec.~\ref{sec::QuantumMaps}. The set of all control operations is isomorphic to the set of positive $d^2\times d^2$ matrices of trace less than or equal to $d$ (the B forms of control operations). Henceforth, whenever we write $\Acal$, we always mean this representation of the control operation, if not explicitly stated otherwise. With this, by taking the square root of the initial state and combining it with $U$ we can write the action of $\mathcal{M}$
\begin{gather}
\label{eqn::CPsuper}
\quad \Mcal[\Acal] = \sum_\alpha \mu_\alpha \ \Acal \ \mu_\alpha^\dag \qquad \mbox{with} \qquad \mu_{\epsilon x}= \sqrt{\lambda_x} \bra{\epsilon} U \otimes_\s \ket{\Psi_x}^{\mathrm{T}_\s} \qquad \mbox{and} \qquad \alpha = \epsilon x.
\end{gather}
Here $\lambda_x$ and $\ket{\Psi_x}$ are the eigenvalues and eigenectors of $\rho^0_\se$ respectively and $\otimes_\s$ and $\mathrm{T}_\s$  means a tensor product and transpose (in computational basis) on the space of $\s$ only respectively, while the normal matrix product applies to the space of $\e$. The last equation is an analogue of Eq.~\eqref{eqn::canonical}; it is the Kraus representation for the superchannel, which means that it is CP~\cite{modi_operational_2012}. In fact, the operators $\mu_\alpha$ have similar properties to those in Sec.~\ref{subsec::CP} and it is straightforward to show that $\Mcal$ is trace preserving in the sense that it maps trace preserving preparations to unit trace matrices. From a mathematical point of view, the superchannel is a CP map just as the ones encountered in Sec.~\ref{sec::QuantumMaps}, but with input and output spaces of different size, \textit{i.e.}, $\Mcal:  \mathcal{B}(\Hcal) \otimes \mathcal{B}(\Hcal) \rightarrow \mathcal{B}(\Hcal)$. The CP nature of the superchannel also has an operational implication: Suppose we bring in an auxiliary system $\an$ of dimension $n$ and perform an entangling control operation $\Acal_{\s \an}$, before letting $\s$ undergo the process in question (\textit{i.e.}, interact with $\e$). The complete positivity of the superchannel guarantees that the final state $\rho'_{\s\an}$ will always be positive.

Operationally speaking, the superchannel is simply the logical consequence of the input/output picture presented at the very beginning of the paper; it maps the actual controllable inputs (the preparations $\Acal$) to the actual measurable outputs (the final system state $\rho'$) of the experiment. When there are no initial correlations, \textit{i.e.}, the initial $\se$ state in Eq.~\eqref{eqn::SCD} is a product state we find that the Kraus operators in Eq.~\eqref{eqn::CPsuper} become $\mu_{\epsilon x} = K_\epsilon \otimes \sqrt{\lambda_x^\s} \bra{\psi_x^\s}^*$, where $K_\epsilon$ are the Kraus operators of the quantum map $\Ecal$ in Eq.~\eqref{eqn::canonical}, $\lambda_x^\s$ and $\ket{\psi_x^\s}$ are eigenvalues and eigenvectors of $\rho^0_\s$ respectively. Subsequently, the action of the superchannel reduces to $\Mcal[\Acal] = \Ecal(\rho_\s)$, where $\rho_\s$ is the result of applying the control operation $\Acal$ on the fiducial initial state $\rho_\s^0$, \textit{i.e.},  $\rho_\s \equiv \Acal[\rho^0_\s]$. Consequently, the superchannel formalism includes the experimental situation depicted in the first chapter and naturally extends quantum maps to the more general case of initial correlations.

Proponents of the NCP map formalism would claim that the superchannel framework (and our experimenter in the example in the previous subsection) is setting up a different dynamical experiment each time the system is prepared. However, the superchannel only depends on the initial $\se$ state and subsequent $\se$ unitary operation; it is independent of the choice of the control operation, which is the choice of the experimenter. Moreover, the superchannel, along with some data processing, contains all NCP maps one could construct for the process (though still void of any operational meaning)~\cite{modi_operational_2012}. Conversely, the only way the predictions of the superchannel could be reproduced in the NCP formalism is by enumerating the NCP maps corresponding to all (infinitely many) possible preparation procedures.

The construction of the superchannel does not \textit{a priori} assume linearity or complete positivity. However, by simply following the operational reality of a quantum experiment, we have arrived at a map that has these familiar (and desirable) features. In doing so, we have overcome Pechukas' theorem; the superchannel is a consistent, linear, and CP description for dynamics in the presence of initial correlations. Unlike NCP maps, it has a clear operational meaning, and has been unambiguously reconstructed in a tomography experiment~\cite{PhysRevLett.114.090402}. Finally, the CP nature of the superchannel allows for the extension of useful results, such as the Holevo quantity, data processing inequality, and entropy production inequalities, to the case of initial correlations~\cite{PhysRevA.92.052310}.

We will now show how the superchannel concept can be generalised to processes involving multiple time steps, before discussing its structure and representations.

\subsection{Multiple time steps and the process tensor}

Like the quantum maps $\Ecal$ from Sec.~\ref{sec::QuantumMaps}, the superchannel only accounts for two time correlations between preparations at the initial time and measurements at the final time. In a more general experiment -- for example, in a multi-dimensional spectroscopy experiment~\cite{KassalBook} -- one may want to know about correlations across multiple time steps. It is relatively straightforward to generalise the superchannel to this scenario; imagine that the experimenter performs (CP) control operations $\Acal_0, \Acal_1,\dots,\Acal_{k-1}$ at the $k$ times $t_0, t_1,\dots, t_{k-1}$ and measures the corresponding output state $\rho'_k$\footnote{If the control operations are not trace preserving (and therefore not performable deterministically), then the trace of $\rho_k'$ gives the probability of performing those operations.} at $t_k$. This scenario is illustrated in Fig.~\ref{fig::process-a} (our only assumption is that these operations can be performed on a much shorter time scale than any other dynamics of $\s$ or $\se$). This setup is very general; for instance, the control operations could be quantum gates, with the final state corresponding to the outcome of a quantum computation. Or, perhaps, the process could be a series of chemical reactions, where the control operations represent the addition of reactants.

\begin{figure}
\centering
\includegraphics[scale=1.5]
{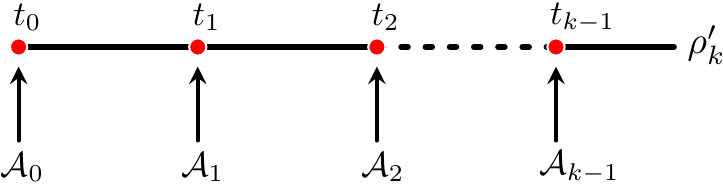}
\caption{\emph{A $k$-step process.} At each time step $t_i$, a CP operation $\Acal_i$ is performed, and the resulting state $\rho'_k$ at $t_k$ is determined by quantum state tomography. The scenario in which dynamics is described by the quantum maps of Sec.~\ref{sec::QuantumMaps} is also included in this schematic; it corresponds to a single-step process, where $\Acal_0$ is the preparation of the input state.}
\label{fig::process-a}
\end{figure}

Just as the superchannel is linear in its argument, the final state $\rho'_k$ depends in a \emph{multilinear} way on the operations $\Acal_j$. Mathematically, this means that the dynamics is a mapping $\Tcal^{k:0}: \mathcal{B}(\mathcal{B}(\Hcal))^{\otimes k}\rightarrow \mathcal{B}(\Hcal)$, called a \emph{process tensor}~\cite{pollock_complete_2015}, whose action can be written as 
\begin{gather}
\label{eqn::processtensor}
\rho_k' = \Tcal^{k:0}\left[\mathbf{A}_{k-1:0} \right]\, .
\end{gather}
In terms of their B form, we have $\mathbf{A}_{k-1:0}\in \left[\mathcal{B}(\Hcal)\otimes \mathcal{B}(\Hcal)\right]^{\otimes k}$  and $\Tcal^{k:0}$ becomes a mapping $\Tcal^{k:0}: \left[\mathcal{B}(\Hcal) \otimes \mathcal{B}(\Hcal)\right]^{\otimes k} \rightarrow \mathcal{B}(\Hcal)$. 
To keep better track of the different terms, we give superscripts to the process and subscripts to the control operations. For the case of independent control operations, $\mathbf{A}_{k-1:0}$ is simply given by $\mathbf{A}_{k-1:0}=\Acal_{k-1}\otimes\cdots \Acal_1\otimes \Acal_0$. In a more general scenario, the operations could be correlated, either classically (\textit{e.g.}, transformations conditioned on earlier measurement outcomes) or quantum mechanically, through successive interactions with an ancilla.

In Refs.~\cite{chiribella_theoretical_2009,pollock_complete_2015} the existence of a \emph{generalised Stinespring dilation} was proven; a map $\mathcal{T}^{k:0}$ is consistent with $\se$ unitary dynamics if it is linear, CP, trace preserving in the sense that it maps sequences of trace preserving control operations to unit trace matrices, and possesses a containment property, $\Tcal^{j:k} \subset \Tcal^{i:l} \, \forall \, i \le j \le k \le l$. The latter property is a causality property ensuring that future actions do not affect past dynamics. Conversely, the process tensor can be derived starting from a dilated (unitary) $\se$ evolution, as shown in Fig.~\ref{fig::Process-b}. We sketch this derivation now.

\begin{figure}
\centering
\includegraphics[scale=1.1]
{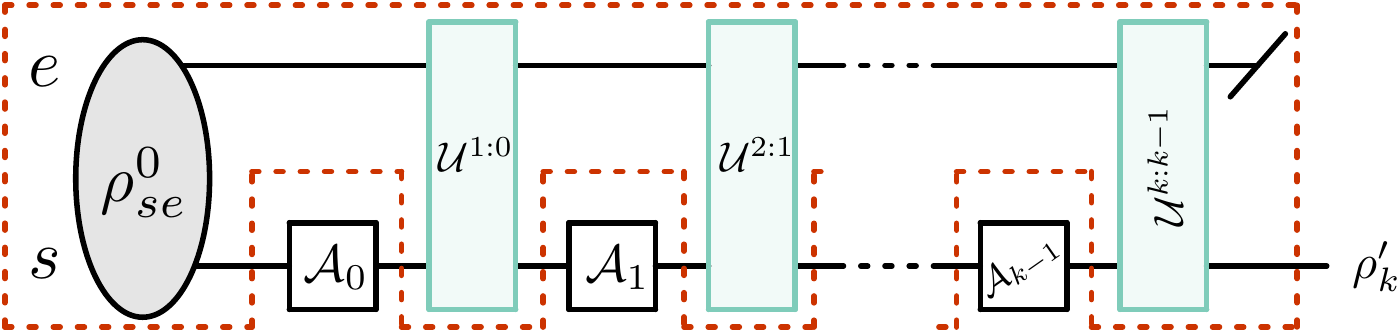}
\caption{
\textit{Generalised Stinespring dilation.} Any $k$-step process tensor has a dilated representation, \textit{i.e.}, there exists a set of unitary maps $\{\mathcal{U}^{1:0},\dots,\mathcal{U}^{k:k-1}\}$ and an initial system environment state $\rho^0_{se}$, such that $\rho_k'=\Tcal^{k:0}[\Acal_{k-1} \otimes \cdots \otimes \Acal_0] = \tr_{e}\left\{\mathcal{U}^{k:k-1}\left(\Acal_{k-1}\otimes \Ical_e \right) \cdots \mathcal{U}^{1:0}\left(\Acal_0\otimes \Ical_e\right)[\rho^0_{se}]\right\}$. The process tensor $\Tcal^{k:0}$ corresponds to `everything that cannot be controlled by the experimenter', \textit{i.e.}, the area framed by the orange dotted lines in the figure.}
\label{fig::Process-b}
\end{figure}

Any $\se$ dynamics can be written as
\begin{gather}
\label{eqn::processdilated}
\rho'_k = \tr_\e \left[\Ucal^{k:k-1} \Acal_{k-1} \ \Ucal^{k-1:k-2} \Acal_{k-2} \dots \Ucal^{1:0} \Acal_0 (\rho^0_\se) \right],
\end{gather}
where $\Acal$ act on $\s$ alone and the unitary maps $\Ucal^{l:k} (\rho^{k}_\se) = U^{l:k} \rho^{k}_\se {U^{l:k}}^\dag= \rho^l_\se$ act on the full system environment space. Everything in this equation other than the control operations, \textit{i.e.}, everything in the red box in Fig.~\ref{fig::Process-b}, can be considered as part of the process. In analogy to Eq.~\eqref{eqn::CPsuper}, by contracting (`matrix multiplying') the unitary operators in the space of $\e$, along with the initial state $\se$ and taking the final trace, we can define
\begin{gather}
T_{\epsilon x}^{k:0} \equiv \sqrt{\lambda_x} \bra{\epsilon} U^{k:k-1} \otimes_\s \bbra{U^{k-1:k-2}}_\s \otimes_\s \dots \otimes_\s \bbra{U^{1:0}}_\s \otimes_\s \ket{\Psi_x}^{\mathrm{T}_\s},
\end{gather}
where again $\lambda_x$ and $\ket{\Psi_x}$ are the eigenvalues and eigenvectors of $\rho^0_\se$ respectively, $\otimes_\s$, $\bbra{\;}_s$, and $\mathrm{T}_\s$ mean tensor product, vectorisation (in the sense of Eq.~\eqref{eqn::VecAction}), and transpose on the space of $\s$ only. Note that the last unitary matrix is not vectorised. With this, we can rewrite Eq.~\eqref{eqn::processdilated} as
\begin{gather}
\label{eqn::processdilated2}
\rho'_k =
\sum_\alpha T_\alpha^{k:0} \ \mathbf{A}_{k-1:0} \
{T_\alpha^{k:0}}^\dag = \Tcal^{k:0}[\mathbf{A}_{k-1:0}], \quad \text{with} \quad \alpha = \epsilon x\, ,
\end{gather}
where $\Tcal^{k:0}$ is the process tensor. This equation is an analogue of Eqs.~\eqref{eqn::canonical} and~\eqref{eqn::CPsuper}. That is, it is the operator sum (or  Kraus) representation for the process tensor, which (like the superchannel) implies that it is CP~\cite{pollock_complete_2015}. It also clearly satisfies the containment property, \textit{i.e.}, $\Tcal^{j:k} \subset \Tcal^{i:l} \, \forall \, i \le j \le k \le l$ and it is trace preserving. Indeed, the process tensor reduces to the \emph{superchannel} for a single-step process and it is the natural extension of the formalism in Sec.~\ref{sec::QuantumMaps} to more general experimental situations.

Comparable approaches to general quantum stochastic processes were already developed by Lindblad~\cite{lindblad_non-markovian_1979} and Accardi \textit{et al.}~\cite{accardi_nonrelativistic_1976, accardi_quantum_1982}, but have not gained traction with the community of researchers working on open quantum systems. The process tensor framework straightforwardly leads to several important results, most notably an operationally well-defined quantum Markov condition, measures of non-Markovianity (which we will briefly expand on below) ~\cite{pollock_complete_2015}, and a generalisation of the Kolmogorov extension theorem to general quantum stochastic processes~\cite{accardi_quantum_1982}.

Finally, similar mathematical structures (maps whose inputs and outputs are quantum maps themselves) have also been developed in other contexts, and are referred to as quantum combs~\cite{chiribella_transforming_2008, chiribella_quantum_2008, chiribella_theoretical_2009}, causal automata/non-anticipatory quantum channels~\cite{kretschmann_quantum_2005, caruso_quantum_2014}, process matrices for causal modelling~\cite{1367-2630-18-6-063032,oreshkov_causal_2016}, causal
boxes~\cite{portmann_causal_2015} and operator tensors~\cite{hardy_operational_2016,hardy_operator_2012}. Most of the results for the process tensor, including the representations we will now go on to describe, will also be applicable (or, at least, adaptable) to many of these frameworks. 

\subsection{Structure and representation of the superchannel and process tensor}

Since the process tensor, and hence the superchannel, are CP maps of the sort described in Sec.~\ref{subsec::LinMaps} with input and output spaces of different size,  we are able to represent them mathematically in all the ways discussed in the first half of this paper. For the most part, these representations are the same as for the quantum maps case (with the same mathematical properties), however it is insightful to present them explicitly. Given that the superchannel is simply a single step process tensor, its representations are a special case of what we will now present for an arbitrary number of time steps.

Performing quantum process tomography of process tensors is very similar to the usual case. At each time step $j$ we choose a basis set of linearly independent operations $\{\hat{\Acal}_{i_j} \}_{{i_j}=1}^{d^4}$. The index $i_j$ denotes both the basis element, as well as the time step, \textit{i.e.},  $\hat{\Acal}_{i_j}$ is the $i$th basis element at time step $j$. For example, at time step 3, we would have $\{\Acal_{1_3}, \ \Acal_{2_3}, \dots, \Acal_{d^4_3} \}$.  The basis elements at different times need not be the same, $\{\hat{\Acal}_{i_j}\} \ne \{\hat{\Acal}_{i_{j'}}\}$. An arbitrary control operation $\Acal_j$ at time step $j$ can be expressed as a linear combination of the basis operations $\Acal_j = \sum_{i_j} c_{i_j} \hat{\Acal}_{i_j}$. The basis operations come with a dual set $\{\hat{\Delta}_{i_j}\}$ satisfying $\tr[\hat{\Acal}_{i_j} \hat{\Delta}_{i'_j}^\dag] = \delta_{i_j i_j'}$. From the local basis, we can construct a basis sequence as $\hat{\mathbf{A}}_{\mathbf{i}_{k-1:0}} = \hat{\Acal}_{i_{k-1}} \otimes \dots \otimes \hat{\Acal}_{i_{0}}$, where $\mathbf{i}_{k-1:0} = (i_{k-1} \dots i_0)$. Naturally, we have $\tr[\hat{\mathbf{A}}_{\mathbf{i}'_{k-1:0}} \hat{\mathbf{D}}_{\mathbf{i}_{k-1:0}}^\dag] = \delta_{\mathbf{i}'_{k-1:0}\ \mathbf{i}_{k-1:0}}$, where $\hat{\mathbf{D}}_{\mathbf{i}_{k-1:0}} = \hat{\Delta}_{i_{k-1}} \otimes \dots \otimes \hat{\Delta}_{i_0}$. As before, using the basis operations we can express any (possibly correlated) sequence of control operation as $\mathbf{A}_{k-1:0} = \sum_{\mathbf{i}_{k-1:0}} \ c_{\mathbf{i}_{k-1:0}} \ \hat{\mathbf{A}}_{\mathbf{i}_{k-1:0}}$.

Tomography then involves performing a set of experiments where we apply each basis sequence of control operations $\hat{\mathbf{A}}_{\mathbf{i}_{k-1:0}}$ and measure the corresponding state $\hat{\rho}'_{k|\mathbf{i}_{k-1:0}}$.  In analogy with Eq.~\eqref{eqn::ActionMap} we can then write the action of the process tensor as
\begin{gather}\label{eqn::process-tomo}
\Tcal^{k:0}[\mathbf{A}_{k-1:0}] = \sum_k \hat{\rho}'_{k|\mathbf{i}_{k-1:0}} \ \tr \left[ \hat{\mathbf{D}}_{\mathbf{i}_{k-1:0}}^\dag \mathbf{A}_{k-1:0} \right],
\end{gather}
The set $\{\hat{\mathbf{D}}_{\mathbf{i}_{k-1:0}},\hat{\rho}'_{k|\mathbf{i}_{k-1:0}}\}$ constitutes the \textit{tomographic representation} of the process tensor.

From this, we can use the results of Sec.~\ref{subsec::Forms} to write down the process tensor in Sudarshan's A and B forms, in analogy to Eqs.~\eqref{eqn::EcalA} and~\eqref{eqn::EcalB}, as
\begin{gather}
\Tcal^{k:0}_A = \sum_k \hat{\rho}'_{k|\mathbf{i}_{k-1:0}} \times \hat{\mathbf{D}}_{\mathbf{i}_{k-1:0}}^* 
\qquad \mbox{and} \qquad
\Tcal^{k:0}_B = \sum_k \hat{\rho}'_{k|\mathbf{i}_{k-1:0}} \otimes \hat{\mathbf{D}}_{\mathbf{i}_{k-1:0}}^*.
\end{gather}
Similarly from the operator sum representation, given in Eq.~\eqref{eqn::processdilated}, we can write these in terms of Kraus operators (in analogy to Eqs.~\eqref{eqn::Kraus_A} and~\eqref{eqn::Kraus_B}):
\begin{align}
\Tcal^{k:0}_A = \sum_\alpha T^{k:0}_\alpha \otimes {T^{k:0}_\alpha}^* 
\qquad \mbox{and} \qquad
\Tcal^{k:0}_B = \sum_\alpha T^{k:0}_\alpha \times {T^{k:0}_\alpha}^*.
\end{align}
Just as in the case of quantum maps $\Ecal$, $\Tcal^{k:0}_A$ is a (rectangular) matrix which acts on a vectorised input -- in this case, it is the B form of the control operations $\mathbf{A}_{k-1:0}$ that is vectorised. In contrast, $\Tcal^{k:0}_B$ is a square matrix, whose positivity depends on the complete positivity of the process tensor. It acts as
\begin{align}
\label{eqn::ChoiSuper}
\Tcal^{k:0}[\mathbf{A}_{k-1:0}]=\tr_{\rm in} \left[\Tcal^{k:0}_B \left(\openone_{\rm out} \otimes \mathbf{A}_{k-1:0}^\mathrm{T} \right)\right]\, .
\end{align}
In fact, analogously to the case of quantum maps, the B form can be seen as arising from a generalisation of the CJI to process tensors~\cite{pollock_complete_2015}. The isomorphism can be implemented operationally by preparing $k$ maximally entangled states $\ket{I}$ (introduced at the beginning of Sec.~\ref{sec::CJI}) and swapping the system with one part of the maximally entangled state at each time step. Defining $\Psi_{\an b}$ as the superchannel describing an initial state $\ket{I}_{\an b}$ that later evolves under an identity map (\textit{i.e.}, with B form $\mathcal{I}_B\otimes \ketbra{I}{I}$), we can write down the CJI:
\begin{align}\label{eqn::ChoiState}
\Upsilon_\Tcal^{k:0} = \Tcal^{k:0}_\s \otimes \Psi^{\otimes k}_{\an b} [\mathcal{S}_{\s\an}^{\otimes k} \otimes \Ical_b^{\otimes k}],
\end{align}
where $\mathcal{S}_{\s\an}$ is the swap gate on $\s\an$ and $\Ical_b$ is the identity map on $b$ (see Fig.~\ref{fig::CJI}). The resultant $2k+1$ body state $\Upsilon_\Tcal^{k:0}$ is element by element identical to the process tensor. Again, using the equivalence of the B form and the Choi state, the action of the map can be written as $\Tcal^{k:0}[\mathbf{A}_{k-1:0}]=\tr_{\rm in}[\Upsilon^{k:0}_\Tcal \ (\openone_{\rm out} \otimes \mathbf{A}_{k-1:0}^\mathrm{T})]$.

\begin{figure}
\centering
\includegraphics[scale=0.6]{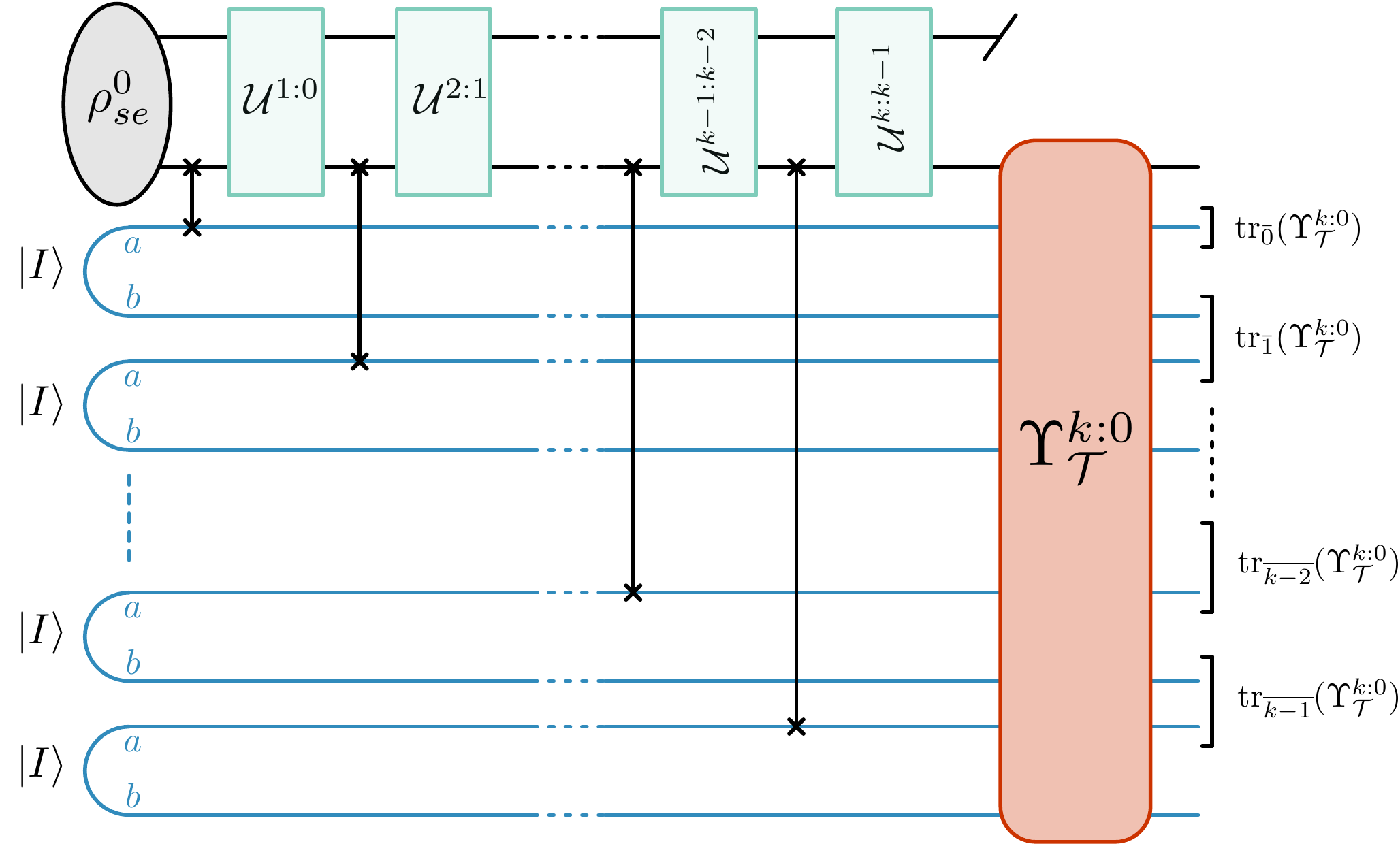}
\caption{\textit{CJI of a process tensor.} At each time $t_i$, one half (`the $a$ part') of a maximally entangled state is fed into the process by means of a swap operation (depicted by the black vertical lines). The resulting many-body quantum state (after tracing over the degrees of freedom of the environment) is the Choi state $\Upsilon^{k:0}_\mathcal{T}$ of $\mathcal{T}^{k:0}$. This generalised CJI includes the traditional CJI for quantum maps as well as the CJI of superchannels as special cases. The brackets denote the degrees of freedom of $\Upsilon^{k:0}_\mathcal{T}$ that correspond to the partial traces mentioned below Eq.~\eqref{eqn::CorrelationExpansion}. The remaining degrees of freedom -- \textit{i.e.}, the top and bottom wire -- correspond to $\tr_{\bar{k}}(\Upsilon^{k:0}_\Tcal)$.}
\label{fig::CJI}
\end{figure}

Since pairs of subsystems of the Choi state $\Upsilon^{k:0}_\Tcal$ correspond to different time steps, correlations between them directly relate to memory effects in the process. That is, temporal correlations in $\Tcal^{k:0}$ become spatial correlations in $\Upsilon^{k:0}_\Tcal$. This can most clearly be seen by decomposing the B form/Choi state as
\begin{align}
\label{eqn::CorrelationExpansion}
\Upsilon^{k:0}_\Tcal =& \Ecal_B^{k:k-1}\otimes\Ecal_B^{k-1:k-2}\otimes\cdots\otimes \Ecal_B^{1:0}\otimes \rho^0 \nonumber\\&
+ \sum_{j>j'} \chi_{j,j'} + \sum_{j>j'>j''} \chi_{j,j',j''} + \cdots + \chi_{k,k-1,\dots,1,0}, 
\end{align}
with $\tr_{j_i}[\chi_{j_n,\dots,j_1}] =0 \ \forall \ 1\leq i\leq n$, and where the indices in the sums represent pairs of subsystems belonging to the input of one time step and the output of the previous one (with the exception of $j=0$, which refers a single subsystem -- the initial input). We denote by $\tr_{\bar j}$ the partial trace over all subsystems but the ones that correspond to the dynamics from time $j-1$ to $j$ (see Fig.~\ref{fig::CJI}). With this, we obtain $\Ecal_B^{j:j-1} = \tr_{\bar{j}}[\Upsilon^{k:0}_\Tcal]$, the B form of a quantum map connecting an adjacent pair of time steps, and $\rho^0 = \tr_{\bar{0}}[\Upsilon^{k:0}_\Tcal]$ -- the initial, pre-preparation state of the system undergoing the process. The traceless matrices $\chi$ encode correlations between time steps, and it is precisely these which will contract with the B forms of measurement operations at different time steps in Eq.~\eqref{eqn::ChoiSuper} to produce multi-time correlation functions.

\subsection{Quantum Markov processes and measuring non-Markovianity}

If a classical stochastic process has no correlations between observables at different times, beyond those mediated by adjacent time steps, then it is called Markovian; formally
\begin{gather}\label{eqn::Cl-Markov}
P(X_k,t_k|X_{k-1},t_{k-1},\dots,X_0,t_0) = P(X_k,t_k|X_{k-1},t_{k-1}) \quad \forall k.
\end{gather}
Generalising Eq.~\eqref{eqn::Cl-Markov}, to give a necessary and sufficient condition for a quantum process to be Markovian has been a difficult task. In recent years, researchers have built a zoo of ``measures" of non-Markovianity~\cite{NMrev, RevModPhys.88.021002}. Most of these measures are based only on necessary conditions for classical processes to be Markovian\footnote{Some of these measures are claimed to be \textit{necessary and sufficient}, but only with respect to a quantum Markov condition which does not reduce to the classical one in the correct limit.}. For instance, the trace distance between two probability distributions must monotonically decrease under a classical Markov process. This should also be true for any pair of density matrices undergoing a quantum Markov process. Conversely, if the trace distance between any two density matrices does not decrease monotonically, then it implies that the underlying process is non-Markovian. A measure of non-Markovianity can be defined by summing up the non-monotonicity in time~\cite{PhysRevLett.103.210401}. Other witnesses are based on: how quantum Fisher information changes~\cite{PhysRevA.82.042103}; the detection of initial correlations~\cite{mazzola2012dynamical, rodriguez2012unification}; changes to quantum correlations~\cite{PhysRevA.86.044101}; positivity of quantum maps~\cite{PhysRevLett.101.150402, sabri}; and, most notably, witnessing the breakdown of the divisibility of a process~\cite{PhysRevLett.105.050403}. These witnesses are turned into measures by quantifying the degree to which they witness the departure from Markov dynamics.

All of these methods are perfectly valid ways of witnessing memory effects. Unfortunately though, they often lack a clear operational basis. Moreover, different measures of non-Markovianity do not always agree with each other, neither on the degree of non-Markovianity, nor on deciding whether a given process is Markovian~\cite{PhysRevA.83.052128}. In other words, each of them fails to quantify some demonstrable memory effects. These inconsistencies have led some researchers to conclude that there can be no unique condition for a quantum Markov process.

This is not correct. Using the process tensor framework it is possible to write down a \textit{necessary and sufficient} condition for quantum Markov processes~\cite{pollock_complete_2015},
that is mathematically unique and operationally sound. It encompasses all the other definitions
by objectively identifying all possible temporal correlations responsible for all possible memory effects -- including the correlations missed by the methods listed above (see examples in Ref.~\cite{pollock_complete_2015}). For classical dynamics, this quantum condition reduces to the Markov condition given in Eq.~\eqref{eqn::Cl-Markov}.

We can use the expansion in Eq.~\eqref{eqn::CorrelationExpansion} to make this condition more explicit. A process whose Choi state can be written as $\Upsilon^{k:0}_\Tcal = \Ecal_B^{k:k-1}\otimes\cdots\otimes \Ecal_B^{1:0}\otimes \rho^0$ (with all $\chi$ operators zero) will only lead to joint probability distributions which satisfy the Markov property for \emph{any} choice of measurements (not necessarily projective) at different time steps\footnote{Though the distributions for different choices of measurements will not be compatible in general (this could be seen as the defining feature of quantum theory).}, and we could take the product form as a definition for a quantum Markov process. Operationally, this means that a causal break in the system's evolution at any point prevents information flowing from past to future, see Ref.~\cite{pollock_complete_2015} for a rigorous derivation.

From this Markov condition we can derive a family of measures for non-Markovianity that are operationally meaningful for specific tasks. For instance, through Eq.~\eqref{eqn::CorrelationExpansion}, any process can be related to a corresponding Markov process $\Tcal^{k:0}_{\rm Mkv}$ (with Choi state $\Upsilon^{k:0}_{\Tcal_{\rm Mkv}} = \Ecal_B^{k:k-1}\otimes\cdots\otimes \Ecal_B^{1:0}\otimes \rho^0$) by simply setting the correlation terms to zero (removing the $\chi$'s). The total `amount' of memory in the process, or the \emph{degree of non-Markovianity} $\mathcal{N}$ can then be quantified by the distance of its Choi state from that of its Markov counterpart:
\begin{gather}\label{eqn::NMmeasure}
\mathcal{N} = \mathcal{D}\left(\Upsilon^{k:0},\Upsilon^{k:0}_{\Tcal_{\rm Mkv}}\right)
\end{gather}
where $\mathcal{D}$ could be any (pseudo) distance measure on quantum states, such as the trace distance. In particular, when relative entropy is chosen as the distance measure in Eq.~\eqref{eqn::NMmeasure}, the 
measure has a clear interpretation in terms of the probability of surprise $P_{\textrm{surprise}} = e^{-n \mathcal{N}}$. That is, suppose we have an experimental process that is non-Markovian and a model for this experiment that is Markovian. Then, after $n$ experiments how surprising are the results, given our Markov assumption? If $\mathcal{N}$ is small, then it will take many experiments (large $n$), before we observe statistically significant deviations in our data from the assumed model, and if $\mathcal{N}$ is large then we are surprised after only a small number of experiments $n$.

Different choices of distance will lead to different operational meanings for $\mathcal{N}$. Other measures of non-Markovianity could, for instance, indicate how much of the original state of $\s$ can be recovered, or how many extra degrees of freedom are needed to model the dynamics of $\s$ to a desired accuracy.

\section{Discussion and conclusions}

In this article, we have laid bare the operational motivation and underlying structure of quantum maps. We began by describing the familiar quantum maps that act on density operators and transform them into density operators, before going on to derive and relate their most widely used forms, and discuss their most important properties. While we worked only with finite dimensional systems, all of the maps presented here can also be extended to the case of infinite dimensional systems~\cite{davies76a, Brukner2016}.

Next, we described the problem of characterising quantum dynamics in the presence of initial correlations between a system and its environment. We outlined the attempts to describe such dynamics with not completely positive maps and the operational shortcomings of this approach. Then we presented a resolution to this problem in terms of the quantum superchannel, which generalises the quantum maps from the first section of the paper, and has all of the same desirable properties, like complete positivity and trace preservation. The development of the superchannel paved the way for us to introduce the process tensor framework, which can be used to describe any quantum process -- importantly, including its multi-time 
correlations. Major results enabled by this framework are a necessary and sufficient condition for quantum Markov processes and, consequently, a family of operationally meaningful measures for non-Markovianity.

The different mathematical representations we have presented arise from the statistical and linear algebraic framework on which quantum theory is based. In fact, they could also be used to describe a more general 
linear theory, such as one based on quaternionic vector spaces~\cite{Graydon2013}, as well as other generalised statistical theories~\cite{Barrett2007,MasanesMuller2011}. It is worth mentioning tensor network
calculus as a helpful tool for graphically representing quantum maps (and other linear algebraic objects). Diagrammatic proofs of the statements in this paper more clearly reveal the connections between different 
representations, as well as the similarity between the approaches of the first and second Section of this paper. For a comprehensive introduction in the context of open quantum systems theory, see Ref.~\cite{wood_tensor_2011}.

The process tensor is a powerful tool, and we have only just scratched the surface when it comes to unsolved open quantum dynamics problems. There remains a great deal of work to be done in order to better 
understand the properties of non-Markovian quantum processes. This includes, but is not limited to, characterising the length and strength of memory and investigating typical properties of random multi-time processes.
It remains to be seen whether something like the process tensor can be derived for setups where continuous control is applied, or where the experimenter's operations also influence the environment to some degree.

It should also be possible to use the process tensor framework to develop new methods for simulating open quantum systems. An approach based on tomographically reconstructed quantum maps has already been shown to be 
efficient~\cite{CerrilloCao2014, Rosenbach2016, PollockModi2017}, and it seems a natural step forward to generalise this to the multi-time case. Furthermore, such a method would be easy to adapt into a simpler 
approximate description; the process tensor quantifies exactly the observable influence of the environment on a system, therefore its structure should indicate exactly which quantities can be safely neglected in the 
global dynamics.

\begin{acknowledgements}
KM thanks Usha Devi, V. Jagdish, and A. K. Rajagopal for discussions. KM is supported through ARC FT160100073. SM is supported by the Monash Graduate Scholarship (MGS), Monash International Postgraduate Research Scholarship (MIPRS) and the J L William Scholarship.
\end{acknowledgements}

\appendix
\section*{APPENDIX}
\section{Dual matrices}
\label{subsec::DualMat}

In this appendix, we prove the existence of a set of dual matrices for any set of linearly independent matrices $\rho_i$. Note, that this proof is a slight generalisation of the one presented in~\cite{modi_positivity_2012} for the case of Hermitian matrices $\rho_i$.

\begin{lemma} For any set of Linearly independent matrices $\{\rho_i\}$, there exists the dual set $\{\hat{D}_i\}$ satisfying $\mathrm{tr}[\hat{D}_i^{\dagger} \; \rho_j] =\delta_{ij}$.
\end{lemma}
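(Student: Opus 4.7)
The plan is to construct the duals explicitly via the Gram matrix of the given set with respect to the Hilbert--Schmidt inner product. Let $\{\rho_i\}_{i=1}^{n}$ be linearly independent in $\mathcal{B}(\mathcal{H}_d)$ and define $G_{ij}=\tr(\rho_i^{\dagger}\rho_j)$. I would first verify that $G$ is Hermitian (immediate from the definition) and strictly positive definite: for any nonzero coefficient vector $c$, one has $\sum_{ij}\bar c_i c_j G_{ij}=\|\sum_i c_i\rho_i\|_{\rm HS}^{2}>0$, where strict positivity follows from linear independence. Hence $G$ is invertible, and $G^{-1}$ is also Hermitian.

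Next I would propose the ansatz $\hat D_i=\sum_{l} (G^{-1})_{li}\,\rho_l$ and verify the duality relation by a direct computation:
\begin{align}
\tr(\hat D_i^{\dagger}\rho_j)
=\sum_{l}\overline{(G^{-1})_{li}}\,\tr(\rho_l^{\dagger}\rho_j)
=\sum_{l}(G^{-1})_{il}\,G_{lj}
=\delta_{ij},
\end{align}
where in the middle step I used that $G^{-1}$ is Hermitian, $\overline{(G^{-1})_{li}}=(G^{-1})_{il}$, and in the last step the definition of the matrix inverse.

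This gives the existence claim. The only point that requires any care at all is keeping track of complex conjugation in the Hilbert--Schmidt inner product (so that the duality condition $\tr(\hat D_i^{\dagger}\rho_j)=\delta_{ij}$ is matched correctly), which is why the ansatz uses the $(l,i)$ rather than the $(i,l)$ entry of $G^{-1}$. There is no genuine obstacle: the argument is essentially the observation that a positive-definite Gram matrix always provides a biorthogonal partner basis.

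I would also briefly note (without belabouring it) that when $n<d^{2}$ the constructed $\hat D_i$ lie in the span of $\{\rho_l\}$ and are the unique such duals, but any element of the orthogonal complement of this span may be added to each $\hat D_i$ without spoiling the duality relation. When $n=d^{2}$ (the case actually needed in Section~\ref{subsec::LinMaps} for process tomography), the span is all of $\mathcal{B}(\mathcal{H}_d)$ and the dual set is unique.
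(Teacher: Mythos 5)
Your proof is correct, and it takes a somewhat different route from the paper's. You work directly with the Gram matrix $G_{ij}=\tr(\rho_i^{\dagger}\rho_j)$ of the given set, prove its invertibility from positive definiteness (which follows immediately from linear independence), and define the duals as $\hat D_i=\sum_l (G^{-1})_{li}\rho_l$; the conjugation bookkeeping via Hermiticity of $G^{-1}$ is handled correctly. The paper instead expands each $\rho_i$ in an auxiliary Hermitian, self-dual, orthogonal operator basis $\{\Gamma_j\}$ with $\tr[\Gamma_i\Gamma_j]=2\delta_{ij}$, and inverts the coefficient matrix $\mathsf{H}$ (setting $\mathsf{F}^{\dagger}=\mathsf{H}^{-1}$ and $\hat D_i=\tfrac{1}{2}\sum_j f_{ij}\Gamma_j$). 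The two constructions are close cousins, but they differ in the key object inverted and in what they buy: your Gram-matrix argument is self-contained (no appeal to the existence of a generalised Gell-Mann-type basis), and, as you note, it applies verbatim when the set has fewer than $d^2$ elements, since $G$ is always a square $n\times n$ invertible matrix --- whereas the paper's argument as written needs $\mathsf{H}$ to be square, i.e.\ implicitly assumes a full operator basis. The paper's construction, on the other hand, delivers the duals' coordinates in a fixed Hermitian basis, which makes the subsequent remarks (Hermiticity of the duals for Hermitian inputs, and $\sum_i\hat D_i^{\dagger}=\openone$ for unit-trace bases) read off naturally; your version yields the Hermiticity remark just as easily (real symmetric $G$ implies real $G^{-1}$), so nothing essential is lost. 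Your closing observation about uniqueness within the span and the freedom to add elements of its orthogonal complement is accurate and a worthwhile addition.
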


\begin{proof} Write $\rho_i = \sum_j h_{ij} \Gamma_j$, where $h_{ij}$ are complex numbers and $\{\Gamma_j\}$ form a Hermitian self-dual linearly independent basis satisfying $\tr[\Gamma_i \Gamma_j]=2 \delta_{ij}$~\cite{byrd:062322}. Since $\{\rho_i\}$ constitute a linearly independent set, the columns of matrix $\mathsf{H} = \sum_{ij} h_{ij} \ketbra{i}{j}$ are linearly independent vectors, which means that $\mathsf{H}$ has an inverse. Let the matrix $\mathsf{F}^\dagger=\mathsf{H}^{-1}$, then $\mathsf{H} \mathsf{F}^\dagger = \openone$, implying that the columns of $\mathsf{F}^*$ are orthonormal to the columns of $\mathsf{H}$. We define $\hat{D}_i = \frac{1}{2} \sum_j f_{ij} \Gamma_j$, where $f_{ij}$ are elements of $\mathsf{F}$.
\end{proof}
Our definition of dual matrices differs from the one in~\cite{modi_positivity_2012} by an adjoint to make the relation to the scalar product explicit. As already mentioned, the dual matrices are generally not all positive, even if the basis $\{\rho_i\}$ only consists of positive matrices. However, for the case where all basis matrices $\rho_i$ are Hermitian, we have $\hat{D}_i^{\dagger} = \hat{D}_i$. Furthermore, the duals satisfy $\sum_i \hat{D}_i^\dag = \sum_i \hat{D}_i^* = \openone$ if all $\rho_i$ are of unit trace. We have
\begin{gather}
\tr \left(\sum_i\hat{D}_i^{\dagger}\rho \right) = \sum_{i,j} r_j \tr\left(\hat{D}_i^\dagger \rho_j\right) = \sum_j r_j = \tr(\rho) \quad \forall \rho\, ,
\end{gather}
where we have used $\rho = \sum_j r_j \rho_j$. The only matrix $M$ that satisfies $\tr(M\rho) = \tr(\rho) \ \forall \rho$ is the identity matrix.

\FloatBarrier

\bibliographystyle{apsrev4-1}
\bibliography{product}

\begin{thebibliography}{96}%
\makeatletter
\providecommand \@ifxundefined [1]{%
 \@ifx{#1\undefined}
}%
\providecommand \@ifnum [1]{%
 \ifnum #1\expandafter \@firstoftwo
 \else \expandafter \@secondoftwo
 \fi
}%
\providecommand \@ifx [1]{%
 \ifx #1\expandafter \@firstoftwo
 \else \expandafter \@secondoftwo
 \fi
}%
\providecommand \natexlab [1]{#1}%
\providecommand \enquote  [1]{``#1''}%
\providecommand \bibnamefont  [1]{#1}%
\providecommand \bibfnamefont [1]{#1}%
\providecommand \citenamefont [1]{#1}%
\providecommand \href@noop [0]{\@secondoftwo}%
\providecommand \href [0]{\begingroup \@sanitize@url \@href}%
\providecommand \@href[1]{\@@startlink{#1}\@@href}%
\providecommand \@@href[1]{\endgroup#1\@@endlink}%
\providecommand \@sanitize@url [0]{\catcode `\\12\catcode `\$12\catcode
  `\&12\catcode `\#12\catcode `\^12\catcode `\_12\catcode `\%12\relax}%
\providecommand \@@startlink[1]{}%
\providecommand \@@endlink[0]{}%
\providecommand \url  [0]{\begingroup\@sanitize@url \@url }%
\providecommand \@url [1]{\endgroup\@href {#1}{\urlprefix }}%
\providecommand \urlprefix  [0]{URL }%
\providecommand \Eprint [0]{\href }%
\providecommand \doibase [0]{http://dx.doi.org/}%
\providecommand \selectlanguage [0]{\@gobble}%
\providecommand \bibinfo  [0]{\@secondoftwo}%
\providecommand \bibfield  [0]{\@secondoftwo}%
\providecommand \translation [1]{[#1]}%
\providecommand \BibitemOpen [0]{}%
\providecommand \bibitemStop [0]{}%
\providecommand \bibitemNoStop [0]{.\EOS\space}%
\providecommand \EOS [0]{\spacefactor3000\relax}%
\providecommand \BibitemShut  [1]{\csname bibitem#1\endcsname}%
\let\auto@bib@innerbib\@empty
\bibitem [{\citenamefont {Nielsen}\ and\ \citenamefont
  {Chuang}(2000)}]{Nielsen00a}%
  \BibitemOpen
  \bibfield  {author} {\bibinfo {author} {\bibfnamefont {M.}~\bibnamefont
  {Nielsen}}\ and\ \bibinfo {author} {\bibfnamefont {I.}~\bibnamefont
  {Chuang}},\ }\href@noop {} {\emph {\bibinfo {title} {Quantum {C}omputation
  and {Q}uantum {I}nformation}}}\ (\bibinfo  {publisher} {Cambridge University
  Press},\ \bibinfo {year} {2000})\BibitemShut {NoStop}%
\bibitem [{\citenamefont {Wilde}(2013)}]{wilde2013quantum}%
  \BibitemOpen
  \bibfield  {author} {\bibinfo {author} {\bibfnamefont {M.~M.}\ \bibnamefont
  {Wilde}},\ }\href@noop {} {\emph {\bibinfo {title} {Quantum information
  theory}}}\ (\bibinfo  {publisher} {Cambridge University Press},\ \bibinfo
  {year} {2013})\BibitemShut {NoStop}%
\bibitem [{\citenamefont {Lidar}\ and\ \citenamefont {Brun}(2013)}]{lidarQEC}%
  \BibitemOpen
  \bibfield  {author} {\bibinfo {author} {\bibfnamefont {D.~A.}\ \bibnamefont
  {Lidar}}\ and\ \bibinfo {author} {\bibfnamefont {T.~A.}\ \bibnamefont
  {Brun}},\ }\href@noop {} {\emph {\bibinfo {title} {Quantum Error
  Correction}}}\ (\bibinfo  {publisher} {Cambridge University Press},\ \bibinfo
  {year} {2013})\BibitemShut {NoStop}%
\bibitem [{\citenamefont {Peres}(1995)}]{peresQT}%
  \BibitemOpen
  \bibfield  {author} {\bibinfo {author} {\bibfnamefont {A.}~\bibnamefont
  {Peres}},\ }\href@noop {} {\emph {\bibinfo {title} {Quantum Theory}}}\
  (\bibinfo  {publisher} {Kluwer Academic},\ \bibinfo {year}
  {1995})\BibitemShut {NoStop}%
\bibitem [{\citenamefont {Sudarshan}\ \emph {et~al.}(1961)\citenamefont
  {Sudarshan}, \citenamefont {Mathews},\ and\ \citenamefont
  {Rau}}]{SudarshanMatthewsRau61}%
  \BibitemOpen
  \bibfield  {author} {\bibinfo {author} {\bibfnamefont {E.}~\bibnamefont
  {Sudarshan}}, \bibinfo {author} {\bibfnamefont {P.}~\bibnamefont {Mathews}},
  \ and\ \bibinfo {author} {\bibfnamefont {J.}~\bibnamefont {Rau}},\ }\href
  {https://journals.aps.org/pr/abstract/10.1103/PhysRev.121.920} {\bibfield
  {journal} {\bibinfo  {journal} {Phys. Rev.}\ }\textbf {\bibinfo {volume}
  {121}},\ \bibinfo {pages} {920} (\bibinfo {year} {1961})}\BibitemShut
  {NoStop}%
\bibitem [{\citenamefont {Sudarshan}\ and\ \citenamefont
  {Jordan}(1961)}]{SudarshanJordan61}%
  \BibitemOpen
  \bibfield  {author} {\bibinfo {author} {\bibfnamefont {E.~C.~G.}\
  \bibnamefont {Sudarshan}}\ and\ \bibinfo {author} {\bibfnamefont {T.~F.}\
  \bibnamefont {Jordan}},\ }\href
  {http://aip.scitation.org/doi/10.1063/1.1724221} {\bibfield  {journal}
  {\bibinfo  {journal} {J. Math. Phys.}\ }\textbf {\bibinfo {volume} {2}},\
  \bibinfo {pages} {772} (\bibinfo {year} {1961})}\BibitemShut {NoStop}%
\bibitem [{\citenamefont {Kraus}(1971)}]{kraus_general_1971}%
  \BibitemOpen
  \bibfield  {author} {\bibinfo {author} {\bibfnamefont {K.}~\bibnamefont
  {Kraus}},\ }\href {\doibase 10.1016/0003-4916(71)90108-4} {\bibfield
  {journal} {\bibinfo  {journal} {Ann. Phys.}\ }\textbf {\bibinfo {volume}
  {64}},\ \bibinfo {pages} {311} (\bibinfo {year} {1971})}\BibitemShut
  {NoStop}%
\bibitem [{\citenamefont {Kraus}(1983)}]{kraus_states_1983}%
  \BibitemOpen
  \bibfield  {author} {\bibinfo {author} {\bibfnamefont {K.}~\bibnamefont
  {Kraus}},\ }\href@noop {} {{\selectlanguage {english}\emph {\bibinfo {title}
  {States, {Effects}, and {Operations}: {Fundamental} {Notions} of {Quantum}
  {Theory}}}}}\ (\bibinfo  {publisher} {Springer},\ \bibinfo {address} {Berlin,
  Heidelberg},\ \bibinfo {year} {1983})\BibitemShut {NoStop}%
\bibitem [{\citenamefont {Stinespring}(1955)}]{stinespring1955}%
  \BibitemOpen
  \bibfield  {author} {\bibinfo {author} {\bibfnamefont {W.~F.}\ \bibnamefont
  {Stinespring}},\ }\href {\doibase
  https://doi.org/10.1090/S0002-9939-1955-0069403-4} {\bibfield  {journal}
  {\bibinfo  {journal} {Proc. Amer. Math. Soc.}\ }\textbf {\bibinfo {volume}
  {6}},\ \bibinfo {pages} {211 } (\bibinfo {year} {1955})}\BibitemShut
  {NoStop}%
\bibitem [{\citenamefont {Choi}(1972)}]{choi72a}%
  \BibitemOpen
  \bibfield  {author} {\bibinfo {author} {\bibfnamefont {M.~D.}\ \bibnamefont
  {Choi}},\ }\href {https://cms.math.ca/10.4153/CJM-1972-044-5} {\bibfield
  {journal} {\bibinfo  {journal} {Can. J. Math.}\ }\textbf {\bibinfo {volume}
  {24}},\ \bibinfo {pages} {520} (\bibinfo {year} {1972})}\BibitemShut
  {NoStop}%
\bibitem [{\citenamefont {Choi}(1975)}]{choi75}%
  \BibitemOpen
  \bibfield  {author} {\bibinfo {author} {\bibfnamefont {M.~D.}\ \bibnamefont
  {Choi}},\ }\href
  {http://www.sciencedirect.com/science/article/pii/0024379575900750}
  {\bibfield  {journal} {\bibinfo  {journal} {Linear Algebra Appl.}\ }\textbf
  {\bibinfo {volume} {10}},\ \bibinfo {pages} {285} (\bibinfo {year}
  {1975})}\BibitemShut {NoStop}%
\bibitem [{\citenamefont {Davies}\ and\ \citenamefont
  {Lewis}(1970)}]{Davies1970}%
  \BibitemOpen
  \bibfield  {author} {\bibinfo {author} {\bibfnamefont {E.~B.}\ \bibnamefont
  {Davies}}\ and\ \bibinfo {author} {\bibfnamefont {J.~T.}\ \bibnamefont
  {Lewis}},\ }\href {\doibase 10.1007/BF01647093} {\bibfield  {journal}
  {\bibinfo  {journal} {Comm. Math. Phys.}\ }\textbf {\bibinfo {volume} {17}},\
  \bibinfo {pages} {239} (\bibinfo {year} {1970})}\BibitemShut {NoStop}%
\bibitem [{\citenamefont {Jamio{\l}kowski}(1972)}]{jamiolkowski_linear_1972}%
  \BibitemOpen
  \bibfield  {author} {\bibinfo {author} {\bibfnamefont {A.}~\bibnamefont
  {Jamio{\l}kowski}},\ }\href {\doibase 10.1016/0034-4877(72)90011-0}
  {\bibfield  {journal} {\bibinfo  {journal} {Rep. Math. Phys.}\ }\textbf
  {\bibinfo {volume} {3}},\ \bibinfo {pages} {275} (\bibinfo {year}
  {1972})}\BibitemShut {NoStop}%
\bibitem [{\citenamefont {Lindblad}(1975)}]{Lindblad1975}%
  \BibitemOpen
  \bibfield  {author} {\bibinfo {author} {\bibfnamefont {G.}~\bibnamefont
  {Lindblad}},\ }\href {\doibase 10.1007/BF01609396} {\bibfield  {journal}
  {\bibinfo  {journal} {Comm. Math. Phys.}\ }\textbf {\bibinfo {volume} {40}},\
  \bibinfo {pages} {147} (\bibinfo {year} {1975})}\BibitemShut {NoStop}%
\bibitem [{\citenamefont {Accardi}\ \emph {et~al.}(1982)\citenamefont
  {Accardi}, \citenamefont {Frigerio},\ and\ \citenamefont
  {Lewis}}]{accardi_quantum_1982}%
  \BibitemOpen
  \bibfield  {author} {\bibinfo {author} {\bibfnamefont {L.}~\bibnamefont
  {Accardi}}, \bibinfo {author} {\bibfnamefont {A.}~\bibnamefont {Frigerio}}, \
  and\ \bibinfo {author} {\bibfnamefont {J.~T.}\ \bibnamefont {Lewis}},\ }\href
  {https://www.jstage.jst.go.jp/article/kyotoms1969/18/1/18_1_97/_article}
  {\bibfield  {journal} {\bibinfo  {journal} {Publ. Rest. Inst. Math. Sci.}\
  }\textbf {\bibinfo {volume} {18}},\ \bibinfo {pages} {97} (\bibinfo {year}
  {1982})}\BibitemShut {NoStop}%
\bibitem [{\citenamefont {Kretschmann}\ and\ \citenamefont
  {Werner}(2005)}]{kretschmann_quantum_2005}%
  \BibitemOpen
  \bibfield  {author} {\bibinfo {author} {\bibfnamefont {D.}~\bibnamefont
  {Kretschmann}}\ and\ \bibinfo {author} {\bibfnamefont {R.~F.}\ \bibnamefont
  {Werner}},\ }\href {\doibase 10.1103/PhysRevA.72.062323} {\bibfield
  {journal} {\bibinfo  {journal} {Phys. Rev. A}\ }\textbf {\bibinfo {volume}
  {72}},\ \bibinfo {pages} {062323} (\bibinfo {year} {2005})}\BibitemShut
  {NoStop}%
\bibitem [{\citenamefont {d'Espagnat}(1966)}]{mixtures}%
  \BibitemOpen
  \bibfield  {author} {\bibinfo {author} {\bibfnamefont {B.}~\bibnamefont
  {d'Espagnat}},\ }\enquote {\bibinfo {title} {In \emph{Preludes in Theoretical
  Physics}},}\ \ (\bibinfo  {publisher} {North Holland Wiley},\ \bibinfo {year}
  {1966})\ Chap.\ \bibinfo {chapter} {An elementary note about
  `mixtures'}\BibitemShut {NoStop}%
\bibitem [{\citenamefont {Modi}\ \emph {et~al.}(2012)\citenamefont {Modi},
  \citenamefont {Rodr\'{\i}guez-Rosario},\ and\ \citenamefont
  {Aspuru-Guzik}}]{modi_positivity_2012}%
  \BibitemOpen
  \bibfield  {author} {\bibinfo {author} {\bibfnamefont {K.}~\bibnamefont
  {Modi}}, \bibinfo {author} {\bibfnamefont {C.~A.}\ \bibnamefont
  {Rodr\'{\i}guez-Rosario}}, \ and\ \bibinfo {author} {\bibfnamefont
  {A.}~\bibnamefont {Aspuru-Guzik}},\ }\href {\doibase
  10.1103/PhysRevA.86.064102} {\bibfield  {journal} {\bibinfo  {journal} {Phys.
  Rev. A}\ }\textbf {\bibinfo {volume} {86}},\ \bibinfo {pages} {064102}
  (\bibinfo {year} {2012})}\BibitemShut {NoStop}%
\bibitem [{\citenamefont {Horodecki}\ \emph {et~al.}(2003)\citenamefont
  {Horodecki}, \citenamefont {Shor},\ and\ \citenamefont
  {Ruskai}}]{horodecki_entanglement_2003}%
  \BibitemOpen
  \bibfield  {author} {\bibinfo {author} {\bibfnamefont {M.}~\bibnamefont
  {Horodecki}}, \bibinfo {author} {\bibfnamefont {P.~W.}\ \bibnamefont {Shor}},
  \ and\ \bibinfo {author} {\bibfnamefont {M.~B.}\ \bibnamefont {Ruskai}},\
  }\href {\doibase 10.1142/S0129055X03001709} {\bibfield  {journal} {\bibinfo
  {journal} {Rev. Math. Phys.}\ }\textbf {\bibinfo {volume} {15}},\ \bibinfo
  {pages} {629} (\bibinfo {year} {2003})}\BibitemShut {NoStop}%
\bibitem [{\citenamefont {Holevo}(1998)}]{holevo_1998}%
  \BibitemOpen
  \bibfield  {author} {\bibinfo {author} {\bibfnamefont {A.~S.}\ \bibnamefont
  {Holevo}},\ }\href {\doibase 10.1070/RM1998v053n06ABEH000091} {\bibfield
  {journal} {\bibinfo  {journal} {Russian Math. Surveys}\ }\textbf {\bibinfo
  {volume} {53}},\ \bibinfo {pages} {1295} (\bibinfo {year}
  {1998})}\BibitemShut {NoStop}%
\bibitem [{\citenamefont {Chuang}\ and\ \citenamefont
  {Nielsen}(1997)}]{JModOpt.44.2455}%
  \BibitemOpen
  \bibfield  {author} {\bibinfo {author} {\bibfnamefont {I.~L.}\ \bibnamefont
  {Chuang}}\ and\ \bibinfo {author} {\bibfnamefont {M.~A.}\ \bibnamefont
  {Nielsen}},\ }\href
  {http://www.tandfonline.com/doi/abs/10.1080/09500349708231894} {\bibfield
  {journal} {\bibinfo  {journal} {J. Mod. Opt.}\ }\textbf {\bibinfo {volume}
  {44}},\ \bibinfo {pages} {2455} (\bibinfo {year} {1997})}\BibitemShut
  {NoStop}%
\bibitem [{\citenamefont {Poyatos}\ \emph {et~al.}(1997)\citenamefont
  {Poyatos}, \citenamefont {Cirac},\ and\ \citenamefont
  {Zoller}}]{PhysRevLett.78.390}%
  \BibitemOpen
  \bibfield  {author} {\bibinfo {author} {\bibfnamefont {J.~F.}\ \bibnamefont
  {Poyatos}}, \bibinfo {author} {\bibfnamefont {J.~I.}\ \bibnamefont {Cirac}},
  \ and\ \bibinfo {author} {\bibfnamefont {P.}~\bibnamefont {Zoller}},\ }\href
  {\doibase 10.1103/PhysRevLett.78.390} {\bibfield  {journal} {\bibinfo
  {journal} {Phys. Rev. Lett.}\ }\textbf {\bibinfo {volume} {78}},\ \bibinfo
  {pages} {390} (\bibinfo {year} {1997})}\BibitemShut {NoStop}%
\bibitem [{\citenamefont {D{'}Ariano}\ \emph {et~al.}(2000)\citenamefont
  {D{'}Ariano}, \citenamefont {Maccone},\ and\ \citenamefont
  {Paris}}]{dariano_orthogonality_2000}%
  \BibitemOpen
  \bibfield  {author} {\bibinfo {author} {\bibfnamefont {G.}~\bibnamefont
  {D{'}Ariano}}, \bibinfo {author} {\bibfnamefont {L.}~\bibnamefont {Maccone}},
  \ and\ \bibinfo {author} {\bibfnamefont {M.}~\bibnamefont {Paris}},\ }\href
  {\doibase 10.1016/S0375-9601(00)00660-5} {\bibfield  {journal} {\bibinfo
  {journal} {Phys. Lett. A}\ }\textbf {\bibinfo {volume} {276}},\ \bibinfo
  {pages} {25} (\bibinfo {year} {2000})}\BibitemShut {NoStop}%
\bibitem [{\citenamefont {Gilchrist}\ \emph {et~al.}(2009)\citenamefont
  {Gilchrist}, \citenamefont {Terno},\ and\ \citenamefont
  {Wood}}]{gilchrist_vectorization_2009}%
  \BibitemOpen
  \bibfield  {author} {\bibinfo {author} {\bibfnamefont {A.}~\bibnamefont
  {Gilchrist}}, \bibinfo {author} {\bibfnamefont {D.~R.}\ \bibnamefont
  {Terno}}, \ and\ \bibinfo {author} {\bibfnamefont {C.~J.}\ \bibnamefont
  {Wood}},\ }\href {http://arxiv.org/abs/0911.2539} {\bibfield  {journal}
  {\bibinfo  {journal} {arXiv:0911.2539}\ } (\bibinfo {year}
  {2009})}\BibitemShut {NoStop}%
\bibitem [{\citenamefont {\.Zyczkowski}\ and\ \citenamefont
  {Bengtsson}(2004)}]{zyczkowski_duality_2004}%
  \BibitemOpen
  \bibfield  {author} {\bibinfo {author} {\bibfnamefont {K.}~\bibnamefont
  {\.Zyczkowski}}\ and\ \bibinfo {author} {\bibfnamefont {I.}~\bibnamefont
  {Bengtsson}},\ }\href {\doibase 10.1023/B:OPSY.0000024753.05661.c2}
  {\bibfield  {journal} {\bibinfo  {journal} {Open Syst. Inf. Dyn.}\ }\textbf
  {\bibinfo {volume} {11}},\ \bibinfo {pages} {3} (\bibinfo {year}
  {2004})}\BibitemShut {NoStop}%
\bibitem [{\citenamefont {Bengtsson}\ and\ \citenamefont
  {Zyczkowski}(2007)}]{bengtsson_geometry_2007}%
  \BibitemOpen
  \bibfield  {author} {\bibinfo {author} {\bibfnamefont {I.}~\bibnamefont
  {Bengtsson}}\ and\ \bibinfo {author} {\bibfnamefont {K.}~\bibnamefont
  {Zyczkowski}},\ }\href@noop {} {{\selectlanguage {english}\emph {\bibinfo
  {title} {Geometry of {Quantum} {States}: {An} {Introduction} to {Quantum}
  {Entanglement}}}}}\ (\bibinfo  {publisher} {Cambridge University Press},\
  \bibinfo {year} {2007})\BibitemShut {NoStop}%
\bibitem [{\citenamefont {D{'}Ariano}\ and\ \citenamefont
  {Lo~Presti}(2003)}]{dariano_imprinting_2003}%
  \BibitemOpen
  \bibfield  {author} {\bibinfo {author} {\bibfnamefont {G.~M.}\ \bibnamefont
  {D{'}Ariano}}\ and\ \bibinfo {author} {\bibfnamefont {P.}~\bibnamefont
  {Lo~Presti}},\ }\href {\doibase 10.1103/PhysRevLett.91.047902} {\bibfield
  {journal} {\bibinfo  {journal} {Phys. Rev. Lett.}\ }\textbf {\bibinfo
  {volume} {91}},\ \bibinfo {pages} {047902} (\bibinfo {year}
  {2003})}\BibitemShut {NoStop}%
\bibitem [{\citenamefont {Chiribella}\ \emph {et~al.}(2009)\citenamefont
  {Chiribella}, \citenamefont {D{'}Ariano},\ and\ \citenamefont
  {Perinotti}}]{chiribella_theoretical_2009}%
  \BibitemOpen
  \bibfield  {author} {\bibinfo {author} {\bibfnamefont {G.}~\bibnamefont
  {Chiribella}}, \bibinfo {author} {\bibfnamefont {G.~M.}\ \bibnamefont
  {D{'}Ariano}}, \ and\ \bibinfo {author} {\bibfnamefont {P.}~\bibnamefont
  {Perinotti}},\ }\href {\doibase 10.1103/PhysRevA.80.022339} {\bibfield
  {journal} {\bibinfo  {journal} {Phys. Rev. A}\ }\textbf {\bibinfo {volume}
  {80}},\ \bibinfo {pages} {022339} (\bibinfo {year} {2009})}\BibitemShut
  {NoStop}%
\bibitem [{\citenamefont {Altepeter}\ \emph {et~al.}(2003)\citenamefont
  {Altepeter}, \citenamefont {Branning}, \citenamefont {Jeffrey}, \citenamefont
  {Wei}, \citenamefont {Kwiat}, \citenamefont {Thew}, \citenamefont
  {O\char39{}Brien}, \citenamefont {Nielsen},\ and\ \citenamefont
  {White}}]{PhysRevLett.90.193601}%
  \BibitemOpen
  \bibfield  {author} {\bibinfo {author} {\bibfnamefont {J.~B.}\ \bibnamefont
  {Altepeter}}, \bibinfo {author} {\bibfnamefont {D.}~\bibnamefont {Branning}},
  \bibinfo {author} {\bibfnamefont {E.}~\bibnamefont {Jeffrey}}, \bibinfo
  {author} {\bibfnamefont {T.~C.}\ \bibnamefont {Wei}}, \bibinfo {author}
  {\bibfnamefont {P.~G.}\ \bibnamefont {Kwiat}}, \bibinfo {author}
  {\bibfnamefont {R.~T.}\ \bibnamefont {Thew}}, \bibinfo {author}
  {\bibfnamefont {J.~L.}\ \bibnamefont {O\char39{}Brien}}, \bibinfo {author}
  {\bibfnamefont {M.~A.}\ \bibnamefont {Nielsen}}, \ and\ \bibinfo {author}
  {\bibfnamefont {A.~G.}\ \bibnamefont {White}},\ }\href {\doibase
  10.1103/PhysRevLett.90.193601} {\bibfield  {journal} {\bibinfo  {journal}
  {Phys. Rev. Lett.}\ }\textbf {\bibinfo {volume} {90}},\ \bibinfo {pages}
  {193601} (\bibinfo {year} {2003})}\BibitemShut {NoStop}%
\bibitem [{\citenamefont {D{'}Ariano}\ and\ \citenamefont
  {Lo~Presti}(2001)}]{PhysRevLett.86.4195}%
  \BibitemOpen
  \bibfield  {author} {\bibinfo {author} {\bibfnamefont {G.~M.}\ \bibnamefont
  {D{'}Ariano}}\ and\ \bibinfo {author} {\bibfnamefont {P.}~\bibnamefont
  {Lo~Presti}},\ }\href {\doibase 10.1103/PhysRevLett.86.4195} {\bibfield
  {journal} {\bibinfo  {journal} {Phys. Rev. Lett.}\ }\textbf {\bibinfo
  {volume} {86}},\ \bibinfo {pages} {4195} (\bibinfo {year}
  {2001})}\BibitemShut {NoStop}%
\bibitem [{\citenamefont {Usha~Devi}\ \emph {et~al.}(2011)\citenamefont
  {Usha~Devi}, \citenamefont {Rajagopal},\ and\ \citenamefont
  {{Sudha}}}]{usha_devi_open-system_2011}%
  \BibitemOpen
  \bibfield  {author} {\bibinfo {author} {\bibfnamefont {A.~R.}\ \bibnamefont
  {Usha~Devi}}, \bibinfo {author} {\bibfnamefont {A.~K.}\ \bibnamefont
  {Rajagopal}}, \ and\ \bibinfo {author} {\bibnamefont {{Sudha}}},\ }\href
  {\doibase 10.1103/PhysRevA.83.022109} {\bibfield  {journal} {\bibinfo
  {journal} {Phys. Rev. A}\ }\textbf {\bibinfo {volume} {83}},\ \bibinfo
  {pages} {022109} (\bibinfo {year} {2011})}\BibitemShut {NoStop}%
\bibitem [{\citenamefont {Verstraete}\ and\ \citenamefont
  {Verschelde}(2002)}]{verstraete_quantum_2002}%
  \BibitemOpen
  \bibfield  {author} {\bibinfo {author} {\bibfnamefont {F.}~\bibnamefont
  {Verstraete}}\ and\ \bibinfo {author} {\bibfnamefont {H.}~\bibnamefont
  {Verschelde}},\ }\href {http://arxiv.org/abs/quant-ph/0202124} {\bibfield
  {journal} {\bibinfo  {journal} {arXiv:quant-ph/0202124}\ } (\bibinfo {year}
  {2002})}\BibitemShut {NoStop}%
\bibitem [{\citenamefont {Jordan}\ \emph {et~al.}(2004)\citenamefont {Jordan},
  \citenamefont {Shaji},\ and\ \citenamefont {Sudarshan}}]{jordan:052110}%
  \BibitemOpen
  \bibfield  {author} {\bibinfo {author} {\bibfnamefont {T.~F.}\ \bibnamefont
  {Jordan}}, \bibinfo {author} {\bibfnamefont {A.}~\bibnamefont {Shaji}}, \
  and\ \bibinfo {author} {\bibfnamefont {E.~C.~G.}\ \bibnamefont {Sudarshan}},\
  }\href
  {https://journals.aps.org/pra/abstract/10.1103/PhysRevA.70.052110#fulltext}
  {\bibfield  {journal} {\bibinfo  {journal} {Phys. Rev. A}\ }\textbf {\bibinfo
  {volume} {70}},\ \bibinfo {eid} {052110} (\bibinfo {year}
  {2004})}\BibitemShut {NoStop}%
\bibitem [{\citenamefont {Chiribella}\ \emph
  {et~al.}(2008{\natexlab{a}})\citenamefont {Chiribella}, \citenamefont
  {D{'}Ariano},\ and\ \citenamefont
  {Perinotti}}]{chiribella_transforming_2008}%
  \BibitemOpen
  \bibfield  {author} {\bibinfo {author} {\bibfnamefont {G.}~\bibnamefont
  {Chiribella}}, \bibinfo {author} {\bibfnamefont {G.~M.}\ \bibnamefont
  {D{'}Ariano}}, \ and\ \bibinfo {author} {\bibfnamefont {P.}~\bibnamefont
  {Perinotti}},\ }\href {\doibase 10.1209/0295-5075/83/30004} {\bibfield
  {journal} {\bibinfo  {journal} {Europhys. Lett.}\ }\textbf {\bibinfo {volume}
  {83}},\ \bibinfo {pages} {30004} (\bibinfo {year}
  {2008}{\natexlab{a}})}\BibitemShut {NoStop}%
\bibitem [{\citenamefont {Alicki}\ and\ \citenamefont
  {Lendi}(1987)}]{alicki_semi_1987}%
  \BibitemOpen
  \bibfield  {author} {\bibinfo {author} {\bibfnamefont {R.}~\bibnamefont
  {Alicki}}\ and\ \bibinfo {author} {\bibfnamefont {K.}~\bibnamefont {Lendi}},\
  }\href@noop {} {\emph {\bibinfo {title} {Quantum Dynamical Semi-Groups and
  Applications}}}\ (\bibinfo  {publisher} {Springer},\ \bibinfo {address}
  {Berlin},\ \bibinfo {year} {1987})\BibitemShut {NoStop}%
\bibitem [{\citenamefont {Rodr\'{\i}guez-Rosario}\ \emph
  {et~al.}(2008)\citenamefont {Rodr\'{\i}guez-Rosario}, \citenamefont {Modi},
  \citenamefont {Kuah}, \citenamefont {Shaji},\ and\ \citenamefont
  {Sudarshan}}]{rodriguez-rosario_completely_2008}%
  \BibitemOpen
  \bibfield  {author} {\bibinfo {author} {\bibfnamefont {C.~A.}\ \bibnamefont
  {Rodr\'{\i}guez-Rosario}}, \bibinfo {author} {\bibfnamefont {K.}~\bibnamefont
  {Modi}}, \bibinfo {author} {\bibfnamefont {A.-M.}\ \bibnamefont {Kuah}},
  \bibinfo {author} {\bibfnamefont {A.}~\bibnamefont {Shaji}}, \ and\ \bibinfo
  {author} {\bibfnamefont {E.~C.~G.}\ \bibnamefont {Sudarshan}},\ }\href
  {\doibase 10.1088/1751-8113/41/20/205301} {\bibfield  {journal} {\bibinfo
  {journal} {J. Phys. A}\ }\textbf {\bibinfo {volume} {41}},\ \bibinfo {pages}
  {205301} (\bibinfo {year} {2008})}\BibitemShut {NoStop}%
\bibitem [{\citenamefont {Pollock}\ \emph {et~al.}(2015)\citenamefont
  {Pollock}, \citenamefont {Rodr{\'\i}guez-Rosario}, \citenamefont
  {Frauenheim}, \citenamefont {Paternostro},\ and\ \citenamefont
  {Modi}}]{pollock_complete_2015}%
  \BibitemOpen
  \bibfield  {author} {\bibinfo {author} {\bibfnamefont {F.~A.}\ \bibnamefont
  {Pollock}}, \bibinfo {author} {\bibfnamefont {C.}~\bibnamefont
  {Rodr{\'\i}guez-Rosario}}, \bibinfo {author} {\bibfnamefont {T.}~\bibnamefont
  {Frauenheim}}, \bibinfo {author} {\bibfnamefont {M.}~\bibnamefont
  {Paternostro}}, \ and\ \bibinfo {author} {\bibfnamefont {K.}~\bibnamefont
  {Modi}},\ }\href {http://arxiv.org/abs/1512.00589} {\bibfield  {journal}
  {\bibinfo  {journal} {arXiv:1512.00589}\ } (\bibinfo {year}
  {2015})}\BibitemShut {NoStop}%
\bibitem [{\citenamefont {Nielsen}\ \emph {et~al.}(1998)\citenamefont
  {Nielsen}, \citenamefont {Knill},\ and\ \citenamefont
  {Laflamme}}]{Nielsen:1998py}%
  \BibitemOpen
  \bibfield  {author} {\bibinfo {author} {\bibfnamefont {M.~A.}\ \bibnamefont
  {Nielsen}}, \bibinfo {author} {\bibfnamefont {E.}~\bibnamefont {Knill}}, \
  and\ \bibinfo {author} {\bibfnamefont {R.}~\bibnamefont {Laflamme}},\ }\href
  {http://www.nature.com/nature/journal/v396/n6706/abs/396052a0.html}
  {\bibfield  {journal} {\bibinfo  {journal} {Nature}\ }\textbf {\bibinfo
  {volume} {396}},\ \bibinfo {pages} {52} (\bibinfo {year} {1998})}\BibitemShut
  {NoStop}%
\bibitem [{\citenamefont {Childs}\ \emph {et~al.}(2001)\citenamefont {Childs},
  \citenamefont {Chuang},\ and\ \citenamefont {Leung}}]{PhysRevA.64.012314}%
  \BibitemOpen
  \bibfield  {author} {\bibinfo {author} {\bibfnamefont {A.~M.}\ \bibnamefont
  {Childs}}, \bibinfo {author} {\bibfnamefont {I.~L.}\ \bibnamefont {Chuang}},
  \ and\ \bibinfo {author} {\bibfnamefont {D.~W.}\ \bibnamefont {Leung}},\
  }\href {\doibase 10.1103/PhysRevA.64.012314} {\bibfield  {journal} {\bibinfo
  {journal} {Phys. Rev. A}\ }\textbf {\bibinfo {volume} {64}},\ \bibinfo
  {pages} {012314} (\bibinfo {year} {2001})}\BibitemShut {NoStop}%
\bibitem [{\citenamefont {Mitchell}\ \emph {et~al.}(2003)\citenamefont
  {Mitchell}, \citenamefont {Ellenor}, \citenamefont {Schneider},\ and\
  \citenamefont {Steinberg}}]{PhysRevLett.91.120402}%
  \BibitemOpen
  \bibfield  {author} {\bibinfo {author} {\bibfnamefont {M.~W.}\ \bibnamefont
  {Mitchell}}, \bibinfo {author} {\bibfnamefont {C.~W.}\ \bibnamefont
  {Ellenor}}, \bibinfo {author} {\bibfnamefont {S.}~\bibnamefont {Schneider}},
  \ and\ \bibinfo {author} {\bibfnamefont {A.~M.}\ \bibnamefont {Steinberg}},\
  }\href {\doibase 10.1103/PhysRevLett.91.120402} {\bibfield  {journal}
  {\bibinfo  {journal} {Phys. Rev. Lett.}\ }\textbf {\bibinfo {volume} {91}},\
  \bibinfo {pages} {120402} (\bibinfo {year} {2003})}\BibitemShut {NoStop}%
\bibitem [{\citenamefont {Weinstein}\ \emph {et~al.}(2004)\citenamefont
  {Weinstein}, \citenamefont {Havel}, \citenamefont {Emerson}, \citenamefont
  {Boulant}, \citenamefont {Saraceno}, \citenamefont {Lloyd},\ and\
  \citenamefont {Cory}}]{Wein:121.13}%
  \BibitemOpen
  \bibfield  {author} {\bibinfo {author} {\bibfnamefont {Y.~S.}\ \bibnamefont
  {Weinstein}}, \bibinfo {author} {\bibfnamefont {T.~F.}\ \bibnamefont
  {Havel}}, \bibinfo {author} {\bibfnamefont {J.}~\bibnamefont {Emerson}},
  \bibinfo {author} {\bibfnamefont {N.}~\bibnamefont {Boulant}}, \bibinfo
  {author} {\bibfnamefont {M.}~\bibnamefont {Saraceno}}, \bibinfo {author}
  {\bibfnamefont {S.}~\bibnamefont {Lloyd}}, \ and\ \bibinfo {author}
  {\bibfnamefont {D.~G.}\ \bibnamefont {Cory}},\ }\href
  {http://aip.scitation.org/doi/abs/10.1063/1.1785151} {\bibfield  {journal}
  {\bibinfo  {journal} {J. Chem. Phys.}\ }\textbf {\bibinfo {volume} {121}},\
  \bibinfo {pages} {6117} (\bibinfo {year} {2004})}\BibitemShut {NoStop}%
\bibitem [{\citenamefont {O'Brien}\ \emph {et~al.}(2004)\citenamefont
  {O'Brien}, \citenamefont {Pryde}, \citenamefont {Gilchrist}, \citenamefont
  {James}, \citenamefont {Langford}, \citenamefont {Ralph},\ and\ \citenamefont
  {White}}]{orien:080502}%
  \BibitemOpen
  \bibfield  {author} {\bibinfo {author} {\bibfnamefont {J.~L.}\ \bibnamefont
  {O'Brien}}, \bibinfo {author} {\bibfnamefont {G.~J.}\ \bibnamefont {Pryde}},
  \bibinfo {author} {\bibfnamefont {A.}~\bibnamefont {Gilchrist}}, \bibinfo
  {author} {\bibfnamefont {D.~F.~V.}\ \bibnamefont {James}}, \bibinfo {author}
  {\bibfnamefont {N.~K.}\ \bibnamefont {Langford}}, \bibinfo {author}
  {\bibfnamefont {T.~C.}\ \bibnamefont {Ralph}}, \ and\ \bibinfo {author}
  {\bibfnamefont {A.~G.}\ \bibnamefont {White}},\ }\href
  {https://journals.aps.org/prl/abstract/10.1103/PhysRevLett.93.080502#fulltext}
  {\bibfield  {journal} {\bibinfo  {journal} {Phys. Rev. Lett.}\ }\textbf
  {\bibinfo {volume} {93}},\ \bibinfo {pages} {080502} (\bibinfo {year}
  {2004})}\BibitemShut {NoStop}%
\bibitem [{\citenamefont {Neeley}\ \emph {et~al.}(2008)\citenamefont {Neeley},
  \citenamefont {Ansmann}, \citenamefont {Bialczak}, \citenamefont {Hofheinz},
  \citenamefont {Katz}, \citenamefont {Lucero}, \citenamefont {O~'Connell},
  \citenamefont {Wang}, \citenamefont {Cleland},\ and\ \citenamefont
  {Martinis}}]{NeeleyNature}%
  \BibitemOpen
  \bibfield  {author} {\bibinfo {author} {\bibfnamefont {M.}~\bibnamefont
  {Neeley}}, \bibinfo {author} {\bibfnamefont {M.}~\bibnamefont {Ansmann}},
  \bibinfo {author} {\bibfnamefont {R.~C.}\ \bibnamefont {Bialczak}}, \bibinfo
  {author} {\bibfnamefont {M.}~\bibnamefont {Hofheinz}}, \bibinfo {author}
  {\bibfnamefont {N.}~\bibnamefont {Katz}}, \bibinfo {author} {\bibfnamefont
  {E.}~\bibnamefont {Lucero}}, \bibinfo {author} {\bibfnamefont
  {A.}~\bibnamefont {O~'Connell}}, \bibinfo {author} {\bibfnamefont
  {H.}~\bibnamefont {Wang}}, \bibinfo {author} {\bibfnamefont {A.~N.}\
  \bibnamefont {Cleland}}, \ and\ \bibinfo {author} {\bibfnamefont {J.~M.}\
  \bibnamefont {Martinis}},\ }\href
  {http://www.nature.com/nphys/journal/v4/n7/full/nphys972.html?foxtrotcallback=true}
  {\bibfield  {journal} {\bibinfo  {journal} {Nat. Phys.}\ }\textbf {\bibinfo
  {volume} {4}},\ \bibinfo {pages} {523} (\bibinfo {year} {2008})}\BibitemShut
  {NoStop}%
\bibitem [{\citenamefont {Chow}\ \emph {et~al.}(2009)\citenamefont {Chow},
  \citenamefont {Gambetta}, \citenamefont {Tornberg}, \citenamefont {Koch},
  \citenamefont {Bishop}, \citenamefont {Houck}, \citenamefont {Johnson},
  \citenamefont {Frunzio}, \citenamefont {Girvin},\ and\ \citenamefont
  {Schoelkopf}}]{chow:090502}%
  \BibitemOpen
  \bibfield  {author} {\bibinfo {author} {\bibfnamefont {J.~M.}\ \bibnamefont
  {Chow}}, \bibinfo {author} {\bibfnamefont {J.~M.}\ \bibnamefont {Gambetta}},
  \bibinfo {author} {\bibfnamefont {L.}~\bibnamefont {Tornberg}}, \bibinfo
  {author} {\bibfnamefont {J.}~\bibnamefont {Koch}}, \bibinfo {author}
  {\bibfnamefont {L.~S.}\ \bibnamefont {Bishop}}, \bibinfo {author}
  {\bibfnamefont {A.~A.}\ \bibnamefont {Houck}}, \bibinfo {author}
  {\bibfnamefont {B.~R.}\ \bibnamefont {Johnson}}, \bibinfo {author}
  {\bibfnamefont {L.}~\bibnamefont {Frunzio}}, \bibinfo {author} {\bibfnamefont
  {S.~M.}\ \bibnamefont {Girvin}}, \ and\ \bibinfo {author} {\bibfnamefont
  {R.~J.}\ \bibnamefont {Schoelkopf}},\ }\href {\doibase
  10.1103/PhysRevLett.102.090502} {\bibfield  {journal} {\bibinfo  {journal}
  {Phys. Rev. Lett.}\ }\textbf {\bibinfo {volume} {102}},\ \bibinfo {eid}
  {090502} (\bibinfo {year} {2009})}\BibitemShut {NoStop}%
\bibitem [{\citenamefont {Howard}\ \emph {et~al.}(2006)\citenamefont {Howard},
  \citenamefont {Twamley}, \citenamefont {Wittmann}, \citenamefont {Gaebel},
  \citenamefont {Jelezko},\ and\ \citenamefont {Wrachtrup}}]{Howard06}%
  \BibitemOpen
  \bibfield  {author} {\bibinfo {author} {\bibfnamefont {M.}~\bibnamefont
  {Howard}}, \bibinfo {author} {\bibfnamefont {J.}~\bibnamefont {Twamley}},
  \bibinfo {author} {\bibfnamefont {C.}~\bibnamefont {Wittmann}}, \bibinfo
  {author} {\bibfnamefont {T.}~\bibnamefont {Gaebel}}, \bibinfo {author}
  {\bibfnamefont {F.}~\bibnamefont {Jelezko}}, \ and\ \bibinfo {author}
  {\bibfnamefont {J.}~\bibnamefont {Wrachtrup}},\ }\href
  {http://iopscience.iop.org/article/10.1088/1367-2630/8/3/033/meta} {\bibfield
   {journal} {\bibinfo  {journal} {New J. Phys.}\ }\textbf {\bibinfo {volume}
  {8}},\ \bibinfo {pages} {33} (\bibinfo {year} {2006})}\BibitemShut {NoStop}%
\bibitem [{\citenamefont {Myrskog}\ \emph {et~al.}(2005)\citenamefont
  {Myrskog}, \citenamefont {Fox}, \citenamefont {Mitchell},\ and\ \citenamefont
  {Steinberg}}]{myrskog:013615}%
  \BibitemOpen
  \bibfield  {author} {\bibinfo {author} {\bibfnamefont {S.~H.}\ \bibnamefont
  {Myrskog}}, \bibinfo {author} {\bibfnamefont {J.~K.}\ \bibnamefont {Fox}},
  \bibinfo {author} {\bibfnamefont {M.~W.}\ \bibnamefont {Mitchell}}, \ and\
  \bibinfo {author} {\bibfnamefont {A.~M.}\ \bibnamefont {Steinberg}},\ }\href
  {\doibase 10.1103/PhysRevA.72.013615} {\bibfield  {journal} {\bibinfo
  {journal} {Phys. Rev. A}\ }\textbf {\bibinfo {volume} {72}},\ \bibinfo {eid}
  {013615} (\bibinfo {year} {2005})}\BibitemShut {NoStop}%
\bibitem [{\citenamefont {Shaji}\ and\ \citenamefont
  {Sudarshan}(2005)}]{shaji_whos_2005}%
  \BibitemOpen
  \bibfield  {author} {\bibinfo {author} {\bibfnamefont {A.}~\bibnamefont
  {Shaji}}\ and\ \bibinfo {author} {\bibfnamefont {E.~C.~G.}\ \bibnamefont
  {Sudarshan}},\ }\href {\doibase 10.1016/j.physleta.2005.04.029} {\bibfield
  {journal} {\bibinfo  {journal} {Phys. Lett. A}\ }\textbf {\bibinfo {volume}
  {341}},\ \bibinfo {pages} {48} (\bibinfo {year} {2005})}\BibitemShut
  {NoStop}%
\bibitem [{\citenamefont {Pechukas}(1994)}]{pechukas94a}%
  \BibitemOpen
  \bibfield  {author} {\bibinfo {author} {\bibfnamefont {P.}~\bibnamefont
  {Pechukas}},\ }\href
  {https://journals.aps.org/prl/abstract/10.1103/PhysRevLett.73.1060}
  {\bibfield  {journal} {\bibinfo  {journal} {Phys. Rev. Lett.}\ }\textbf
  {\bibinfo {volume} {73}},\ \bibinfo {pages} {1060} (\bibinfo {year}
  {1994})}\BibitemShut {NoStop}%
\bibitem [{\citenamefont {Alicki}(1995)}]{Alicki95}%
  \BibitemOpen
  \bibfield  {author} {\bibinfo {author} {\bibfnamefont {R.}~\bibnamefont
  {Alicki}},\ }\href {\doibase 10.1103/PhysRevLett.75.3020} {\bibfield
  {journal} {\bibinfo  {journal} {Phys. Rev. Lett.}\ }\textbf {\bibinfo
  {volume} {75}},\ \bibinfo {pages} {3020} (\bibinfo {year}
  {1995})}\BibitemShut {NoStop}%
\bibitem [{\citenamefont {Pechukas}(1995)}]{Pechukas95}%
  \BibitemOpen
  \bibfield  {author} {\bibinfo {author} {\bibfnamefont {P.}~\bibnamefont
  {Pechukas}},\ }\href {\doibase 10.1103/PhysRevLett.75.3021} {\bibfield
  {journal} {\bibinfo  {journal} {Phys. Rev. Lett.}\ }\textbf {\bibinfo
  {volume} {75}},\ \bibinfo {pages} {3021} (\bibinfo {year}
  {1995})}\BibitemShut {NoStop}%
\bibitem [{\citenamefont {Rodr\'{\i}guez-Rosario}\ \emph
  {et~al.}(2010)\citenamefont {Rodr\'{\i}guez-Rosario}, \citenamefont {Modi},\
  and\ \citenamefont {Aspuru-Guzik}}]{PhysRevA.81.012313}%
  \BibitemOpen
  \bibfield  {author} {\bibinfo {author} {\bibfnamefont {C.~A.}\ \bibnamefont
  {Rodr\'{\i}guez-Rosario}}, \bibinfo {author} {\bibfnamefont {K.}~\bibnamefont
  {Modi}}, \ and\ \bibinfo {author} {\bibfnamefont {A.}~\bibnamefont
  {Aspuru-Guzik}},\ }\href {\doibase 10.1103/PhysRevA.81.012313} {\bibfield
  {journal} {\bibinfo  {journal} {Phys. Rev. A}\ }\textbf {\bibinfo {volume}
  {81}},\ \bibinfo {pages} {012313} (\bibinfo {year} {2010})}\BibitemShut
  {NoStop}%
\bibitem [{\citenamefont {Masillo}\ \emph {et~al.}(2011)\citenamefont
  {Masillo}, \citenamefont {Scolarici},\ and\ \citenamefont
  {Solombrino}}]{holevo-assingment}%
  \BibitemOpen
  \bibfield  {author} {\bibinfo {author} {\bibfnamefont {F.}~\bibnamefont
  {Masillo}}, \bibinfo {author} {\bibfnamefont {G.}~\bibnamefont {Scolarici}},
  \ and\ \bibinfo {author} {\bibfnamefont {L.}~\bibnamefont {Solombrino}},\
  }\href {\doibase http://dx.doi.org/10.1063/1.3525832} {\bibfield  {journal}
  {\bibinfo  {journal} {J. Math. Phys.}\ }\textbf {\bibinfo {volume} {52}},\
  \bibinfo {eid} {012101} (\bibinfo {year} {2011})}\BibitemShut {NoStop}%
\bibitem [{\citenamefont {Buscemi}(2014)}]{buscemi_complete_2014}%
  \BibitemOpen
  \bibfield  {author} {\bibinfo {author} {\bibfnamefont {F.}~\bibnamefont
  {Buscemi}},\ }\href {\doibase 10.1103/PhysRevLett.113.140502} {\bibfield
  {journal} {\bibinfo  {journal} {Phys. Rev. Lett.}\ }\textbf {\bibinfo
  {volume} {113}},\ \bibinfo {pages} {140502} (\bibinfo {year}
  {2014})}\BibitemShut {NoStop}%
\bibitem [{\citenamefont {Argentieri}\ \emph {et~al.}(2014)\citenamefont
  {Argentieri}, \citenamefont {Benatti}, \citenamefont {Floreanini},\ and\
  \citenamefont {Pezzutto}}]{argentieri2014violations}%
  \BibitemOpen
  \bibfield  {author} {\bibinfo {author} {\bibfnamefont {G.}~\bibnamefont
  {Argentieri}}, \bibinfo {author} {\bibfnamefont {F.}~\bibnamefont {Benatti}},
  \bibinfo {author} {\bibfnamefont {R.}~\bibnamefont {Floreanini}}, \ and\
  \bibinfo {author} {\bibfnamefont {M.}~\bibnamefont {Pezzutto}},\ }\href
  {\doibase 10.1007/s10955-015-1210-4} {\bibfield  {journal} {\bibinfo
  {journal} {Europhys. Lett.}\ }\textbf {\bibinfo {volume} {107}},\ \bibinfo
  {pages} {50007} (\bibinfo {year} {2014})}\BibitemShut {NoStop}%
\bibitem [{\citenamefont {\v{S}telmachovi\v{c}}\ and\ \citenamefont
  {Bu\v{z}ek}(2001)}]{StelmachovicBuzek01}%
  \BibitemOpen
  \bibfield  {author} {\bibinfo {author} {\bibfnamefont {P.}~\bibnamefont
  {\v{S}telmachovi\v{c}}}\ and\ \bibinfo {author} {\bibfnamefont
  {V.}~\bibnamefont {Bu\v{z}ek}},\ }\href
  {https://journals.aps.org/pra/abstract/10.1103/PhysRevA.64.062106#fulltext}
  {\bibfield  {journal} {\bibinfo  {journal} {Phys. Rev. A}\ }\textbf {\bibinfo
  {volume} {64}},\ \bibinfo {pages} {062106} (\bibinfo {year}
  {2001})}\BibitemShut {NoStop}%
\bibitem [{\citenamefont {Modi}(2011)}]{modi_preparation_2011}%
  \BibitemOpen
  \bibfield  {author} {\bibinfo {author} {\bibfnamefont {K.}~\bibnamefont
  {Modi}},\ }\href {\doibase 10.1142/S1230161211000170} {\bibfield  {journal}
  {\bibinfo  {journal} {Open Syst. Inf. Dyn.}\ }\textbf {\bibinfo {volume}
  {18}},\ \bibinfo {pages} {253} (\bibinfo {year} {2011})}\BibitemShut
  {NoStop}%
\bibitem [{\citenamefont {Kuah}\ \emph {et~al.}(2007)\citenamefont {Kuah},
  \citenamefont {Modi}, \citenamefont {Rodr\'{\i}guez-Rosario},\ and\
  \citenamefont {Sudarshan}}]{kuah:042113}%
  \BibitemOpen
  \bibfield  {author} {\bibinfo {author} {\bibfnamefont {A.-M.}\ \bibnamefont
  {Kuah}}, \bibinfo {author} {\bibfnamefont {K.}~\bibnamefont {Modi}}, \bibinfo
  {author} {\bibfnamefont {C.~A.}\ \bibnamefont {Rodr\'{\i}guez-Rosario}}, \
  and\ \bibinfo {author} {\bibfnamefont {E.~C.~G.}\ \bibnamefont {Sudarshan}},\
  }\href {\doibase 10.1103/PhysRevA.76.042113} {\bibfield  {journal} {\bibinfo
  {journal} {Phys. Rev. A}\ }\textbf {\bibinfo {volume} {76}},\ \bibinfo {eid}
  {042113} (\bibinfo {year} {2007})}\BibitemShut {NoStop}%
\bibitem [{\citenamefont {Milz}\ \emph {et~al.}(2016)\citenamefont {Milz},
  \citenamefont {Pollock},\ and\ \citenamefont
  {Modi}}]{milz_reconstructing_2016}%
  \BibitemOpen
  \bibfield  {author} {\bibinfo {author} {\bibfnamefont {S.}~\bibnamefont
  {Milz}}, \bibinfo {author} {\bibfnamefont {F.~A.}\ \bibnamefont {Pollock}}, \
  and\ \bibinfo {author} {\bibfnamefont {K.}~\bibnamefont {Modi}},\ }\href
  {http://arxiv.org/abs/1610.02152} {\bibfield  {journal} {\bibinfo  {journal}
  {arXiv:1610.02152}\ } (\bibinfo {year} {2016})}\BibitemShut {NoStop}%
\bibitem [{\citenamefont {Modi}\ and\ \citenamefont
  {Sudarshan}(2010)}]{modi_role_2010}%
  \BibitemOpen
  \bibfield  {author} {\bibinfo {author} {\bibfnamefont {K.}~\bibnamefont
  {Modi}}\ and\ \bibinfo {author} {\bibfnamefont {E.~C.~G.}\ \bibnamefont
  {Sudarshan}},\ }\href {\doibase 10.1103/PhysRevA.81.052119} {\bibfield
  {journal} {\bibinfo  {journal} {Phys. Rev. A}\ }\textbf {\bibinfo {volume}
  {81}},\ \bibinfo {pages} {052119} (\bibinfo {year} {2010})}\BibitemShut
  {NoStop}%
\bibitem [{\citenamefont {Shabani}\ and\ \citenamefont
  {Lidar}(2009)}]{PhysRevLett.102.100402}%
  \BibitemOpen
  \bibfield  {author} {\bibinfo {author} {\bibfnamefont {A.}~\bibnamefont
  {Shabani}}\ and\ \bibinfo {author} {\bibfnamefont {D.~A.}\ \bibnamefont
  {Lidar}},\ }\href {\doibase 10.1103/PhysRevLett.102.100402} {\bibfield
  {journal} {\bibinfo  {journal} {Phys. Rev. Lett.}\ }\textbf {\bibinfo
  {volume} {102}},\ \bibinfo {pages} {100402} (\bibinfo {year}
  {2009})}\BibitemShut {NoStop}%
\bibitem [{\citenamefont {Brodutch}\ \emph {et~al.}(2013)\citenamefont
  {Brodutch}, \citenamefont {Datta}, \citenamefont {Modi}, \citenamefont
  {Rivas},\ and\ \citenamefont {Rodr\'{\i}guez-Rosario}}]{PhysRevA.87.042301}%
  \BibitemOpen
  \bibfield  {author} {\bibinfo {author} {\bibfnamefont {A.}~\bibnamefont
  {Brodutch}}, \bibinfo {author} {\bibfnamefont {A.}~\bibnamefont {Datta}},
  \bibinfo {author} {\bibfnamefont {K.}~\bibnamefont {Modi}}, \bibinfo {author}
  {\bibfnamefont {A.}~\bibnamefont {Rivas}}, \ and\ \bibinfo {author}
  {\bibfnamefont {C.~A.}\ \bibnamefont {Rodr\'{\i}guez-Rosario}},\ }\href
  {\doibase 10.1103/PhysRevA.87.042301} {\bibfield  {journal} {\bibinfo
  {journal} {Phys. Rev. A}\ }\textbf {\bibinfo {volume} {87}},\ \bibinfo
  {pages} {042301} (\bibinfo {year} {2013})}\BibitemShut {NoStop}%
\bibitem [{\citenamefont {Shabani}\ and\ \citenamefont
  {Lidar}(2016)}]{PhysRevLett.116.049901}%
  \BibitemOpen
  \bibfield  {author} {\bibinfo {author} {\bibfnamefont {A.}~\bibnamefont
  {Shabani}}\ and\ \bibinfo {author} {\bibfnamefont {D.~A.}\ \bibnamefont
  {Lidar}},\ }\href {\doibase 10.1103/PhysRevLett.116.049901} {\bibfield
  {journal} {\bibinfo  {journal} {Phys. Rev. Lett.}\ }\textbf {\bibinfo
  {volume} {116}},\ \bibinfo {pages} {049901} (\bibinfo {year}
  {2016})}\BibitemShut {NoStop}%
\bibitem [{\citenamefont {Modi}(2012)}]{modi_operational_2012}%
  \BibitemOpen
  \bibfield  {author} {\bibinfo {author} {\bibfnamefont {K.}~\bibnamefont
  {Modi}},\ }\href {\doibase 10.1038/srep00581} {\bibfield  {journal} {\bibinfo
   {journal} {Sci. Rep.}\ }\textbf {\bibinfo {volume} {2}},\ \bibinfo {pages}
  {581} (\bibinfo {year} {2012})}\BibitemShut {NoStop}%
\bibitem [{\citenamefont {Ringbauer}\ \emph {et~al.}(2015)\citenamefont
  {Ringbauer}, \citenamefont {Wood}, \citenamefont {Modi}, \citenamefont
  {Gilchrist}, \citenamefont {White},\ and\ \citenamefont
  {Fedrizzi}}]{PhysRevLett.114.090402}%
  \BibitemOpen
  \bibfield  {author} {\bibinfo {author} {\bibfnamefont {M.}~\bibnamefont
  {Ringbauer}}, \bibinfo {author} {\bibfnamefont {C.~J.}\ \bibnamefont {Wood}},
  \bibinfo {author} {\bibfnamefont {K.}~\bibnamefont {Modi}}, \bibinfo {author}
  {\bibfnamefont {A.}~\bibnamefont {Gilchrist}}, \bibinfo {author}
  {\bibfnamefont {A.~G.}\ \bibnamefont {White}}, \ and\ \bibinfo {author}
  {\bibfnamefont {A.}~\bibnamefont {Fedrizzi}},\ }\href {\doibase
  10.1103/PhysRevLett.114.090402} {\bibfield  {journal} {\bibinfo  {journal}
  {Phys. Rev. Lett.}\ }\textbf {\bibinfo {volume} {114}},\ \bibinfo {pages}
  {090402} (\bibinfo {year} {2015})}\BibitemShut {NoStop}%
\bibitem [{\citenamefont {Vinjanampathy}\ and\ \citenamefont
  {Modi}(2015)}]{PhysRevA.92.052310}%
  \BibitemOpen
  \bibfield  {author} {\bibinfo {author} {\bibfnamefont {S.}~\bibnamefont
  {Vinjanampathy}}\ and\ \bibinfo {author} {\bibfnamefont {K.}~\bibnamefont
  {Modi}},\ }\href {\doibase 10.1103/PhysRevA.92.052310} {\bibfield  {journal}
  {\bibinfo  {journal} {Phys. Rev. A}\ }\textbf {\bibinfo {volume} {92}},\
  \bibinfo {pages} {052310} (\bibinfo {year} {2015})}\BibitemShut {NoStop}%
\bibitem [{\citenamefont {Yuen-Zhou}\ \emph {et~al.}(2014)\citenamefont
  {Yuen-Zhou}, \citenamefont {Krich}, \citenamefont {Kassal}, \citenamefont
  {Johnson},\ and\ \citenamefont {Aspuru-Guzik}}]{KassalBook}%
  \BibitemOpen
  \bibfield  {author} {\bibinfo {author} {\bibfnamefont {J.}~\bibnamefont
  {Yuen-Zhou}}, \bibinfo {author} {\bibfnamefont {J.~J.}\ \bibnamefont
  {Krich}}, \bibinfo {author} {\bibfnamefont {I.}~\bibnamefont {Kassal}},
  \bibinfo {author} {\bibfnamefont {A.~S.}\ \bibnamefont {Johnson}}, \ and\
  \bibinfo {author} {\bibfnamefont {A.}~\bibnamefont {Aspuru-Guzik}},\ }\href
  {\doibase 10.1088/978-0-750-31062-8} {\emph {\bibinfo {title} {Ultrafast
  Spectroscopy: Quantum information and wavepackets}}},\ 2053-2563\ (\bibinfo
  {publisher} {IOP Publishing},\ \bibinfo {year} {2014})\BibitemShut {NoStop}%
\bibitem [{\citenamefont {Lindblad}(1979)}]{lindblad_non-markovian_1979}%
  \BibitemOpen
  \bibfield  {author} {\bibinfo {author} {\bibfnamefont {G.}~\bibnamefont
  {Lindblad}},\ }\href {\doibase 10.1007/BF01197883} {\bibfield  {journal}
  {\bibinfo  {journal} {Comm. Math. Phys.}\ }\textbf {\bibinfo {volume} {65}},\
  \bibinfo {pages} {281} (\bibinfo {year} {1979})}\BibitemShut {NoStop}%
\bibitem [{\citenamefont {Accardi}(1976)}]{accardi_nonrelativistic_1976}%
  \BibitemOpen
  \bibfield  {author} {\bibinfo {author} {\bibfnamefont {L.}~\bibnamefont
  {Accardi}},\ }\href {\doibase 10.1016/0001-8708(76)90201-2} {\bibfield
  {journal} {\bibinfo  {journal} {Adv. Math.}\ }\textbf {\bibinfo {volume}
  {20}},\ \bibinfo {pages} {329} (\bibinfo {year} {1976})}\BibitemShut
  {NoStop}%
\bibitem [{\citenamefont {Chiribella}\ \emph
  {et~al.}(2008{\natexlab{b}})\citenamefont {Chiribella}, \citenamefont
  {D{'}Ariano},\ and\ \citenamefont {Perinotti}}]{chiribella_quantum_2008}%
  \BibitemOpen
  \bibfield  {author} {\bibinfo {author} {\bibfnamefont {G.}~\bibnamefont
  {Chiribella}}, \bibinfo {author} {\bibfnamefont {G.~M.}\ \bibnamefont
  {D{'}Ariano}}, \ and\ \bibinfo {author} {\bibfnamefont {P.}~\bibnamefont
  {Perinotti}},\ }\href {\doibase 10.1103/PhysRevLett.101.060401} {\bibfield
  {journal} {\bibinfo  {journal} {Phys. Rev. Lett.}\ }\textbf {\bibinfo
  {volume} {101}},\ \bibinfo {pages} {060401} (\bibinfo {year}
  {2008}{\natexlab{b}})}\BibitemShut {NoStop}%
\bibitem [{\citenamefont {Caruso}\ \emph {et~al.}(2014)\citenamefont {Caruso},
  \citenamefont {Giovannetti}, \citenamefont {Lupo},\ and\ \citenamefont
  {Mancini}}]{caruso_quantum_2014}%
  \BibitemOpen
  \bibfield  {author} {\bibinfo {author} {\bibfnamefont {F.}~\bibnamefont
  {Caruso}}, \bibinfo {author} {\bibfnamefont {V.}~\bibnamefont {Giovannetti}},
  \bibinfo {author} {\bibfnamefont {C.}~\bibnamefont {Lupo}}, \ and\ \bibinfo
  {author} {\bibfnamefont {S.}~\bibnamefont {Mancini}},\ }\href {\doibase
  10.1103/RevModPhys.86.1203} {\bibfield  {journal} {\bibinfo  {journal} {Rev.
  Mod. Phys.}\ }\textbf {\bibinfo {volume} {86}},\ \bibinfo {pages} {1203}
  (\bibinfo {year} {2014})}\BibitemShut {NoStop}%
\bibitem [{\citenamefont {Costa}\ and\ \citenamefont
  {Shrapnel}(2016)}]{1367-2630-18-6-063032}%
  \BibitemOpen
  \bibfield  {author} {\bibinfo {author} {\bibfnamefont {F.}~\bibnamefont
  {Costa}}\ and\ \bibinfo {author} {\bibfnamefont {S.}~\bibnamefont
  {Shrapnel}},\ }\href {http://stacks.iop.org/1367-2630/18/i=6/a=063032}
  {\bibfield  {journal} {\bibinfo  {journal} {New J. Phys.}\ }\textbf {\bibinfo
  {volume} {18}},\ \bibinfo {pages} {063032} (\bibinfo {year}
  {2016})}\BibitemShut {NoStop}%
\bibitem [{\citenamefont {Oreshkov}\ and\ \citenamefont
  {Giarmatzi}(2016)}]{oreshkov_causal_2016}%
  \BibitemOpen
  \bibfield  {author} {\bibinfo {author} {\bibfnamefont {O.}~\bibnamefont
  {Oreshkov}}\ and\ \bibinfo {author} {\bibfnamefont {C.}~\bibnamefont
  {Giarmatzi}},\ }\href {\doibase 10.1088/1367-2630/18/9/093020} {\bibfield
  {journal} {\bibinfo  {journal} {New J. Phys.}\ }\textbf {\bibinfo {volume}
  {18}},\ \bibinfo {pages} {093020} (\bibinfo {year} {2016})}\BibitemShut
  {NoStop}%
\bibitem [{\citenamefont {Portmann}\ \emph {et~al.}(2015)\citenamefont
  {Portmann}, \citenamefont {Matt}, \citenamefont {Maurer}, \citenamefont
  {Renner},\ and\ \citenamefont {Tackmann}}]{portmann_causal_2015}%
  \BibitemOpen
  \bibfield  {author} {\bibinfo {author} {\bibfnamefont {C.}~\bibnamefont
  {Portmann}}, \bibinfo {author} {\bibfnamefont {C.}~\bibnamefont {Matt}},
  \bibinfo {author} {\bibfnamefont {U.}~\bibnamefont {Maurer}}, \bibinfo
  {author} {\bibfnamefont {R.}~\bibnamefont {Renner}}, \ and\ \bibinfo {author}
  {\bibfnamefont {B.}~\bibnamefont {Tackmann}},\ }\href
  {http://arxiv.org/abs/1512.02240} {\bibfield  {journal} {\bibinfo  {journal}
  {arXiv:1512.02240}\ } (\bibinfo {year} {2015})}\BibitemShut {NoStop}%
\bibitem [{\citenamefont {Hardy}(2016)}]{hardy_operational_2016}%
  \BibitemOpen
  \bibfield  {author} {\bibinfo {author} {\bibfnamefont {L.}~\bibnamefont
  {Hardy}},\ }\href {http://arxiv.org/abs/1608.06940} {\bibfield  {journal}
  {\bibinfo  {journal} {arXiv:1608.06940}\ } (\bibinfo {year}
  {2016})}\BibitemShut {NoStop}%
\bibitem [{\citenamefont {Hardy}(2012)}]{hardy_operator_2012}%
  \BibitemOpen
  \bibfield  {author} {\bibinfo {author} {\bibfnamefont {L.}~\bibnamefont
  {Hardy}},\ }\href {\doibase 10.1098/rsta.2011.0326} {\bibfield  {journal}
  {\bibinfo  {journal} {Phil. Trans. R. Soc. A}\ }\textbf {\bibinfo {volume}
  {370}},\ \bibinfo {pages} {3385} (\bibinfo {year} {2012})}\BibitemShut
  {NoStop}%
\bibitem [{\citenamefont {Rivas}\ \emph {et~al.}(2014)\citenamefont {Rivas},
  \citenamefont {Huelga},\ and\ \citenamefont {Plenio}}]{NMrev}%
  \BibitemOpen
  \bibfield  {author} {\bibinfo {author} {\bibfnamefont {A.}~\bibnamefont
  {Rivas}}, \bibinfo {author} {\bibfnamefont {S.~F.}\ \bibnamefont {Huelga}}, \
  and\ \bibinfo {author} {\bibfnamefont {M.~B.}\ \bibnamefont {Plenio}},\
  }\href {\doibase 10.1088/0034-4885/77/9/094001} {\bibfield  {journal}
  {\bibinfo  {journal} {Rep. Prog. Phys.}\ }\textbf {\bibinfo {volume} {77}},\
  \bibinfo {pages} {094001} (\bibinfo {year} {2014})}\BibitemShut {NoStop}%
\bibitem [{\citenamefont {Breuer}\ \emph {et~al.}(2016)\citenamefont {Breuer},
  \citenamefont {Laine}, \citenamefont {Piilo},\ and\ \citenamefont
  {Vacchini}}]{RevModPhys.88.021002}%
  \BibitemOpen
  \bibfield  {author} {\bibinfo {author} {\bibfnamefont {H.-P.}\ \bibnamefont
  {Breuer}}, \bibinfo {author} {\bibfnamefont {E.-M.}\ \bibnamefont {Laine}},
  \bibinfo {author} {\bibfnamefont {J.}~\bibnamefont {Piilo}}, \ and\ \bibinfo
  {author} {\bibfnamefont {B.}~\bibnamefont {Vacchini}},\ }\href {\doibase
  10.1103/RevModPhys.88.021002} {\bibfield  {journal} {\bibinfo  {journal}
  {Rev. Mod. Phys.}\ }\textbf {\bibinfo {volume} {88}},\ \bibinfo {pages}
  {021002} (\bibinfo {year} {2016})}\BibitemShut {NoStop}%
\bibitem [{\citenamefont {Breuer}\ \emph {et~al.}(2009)\citenamefont {Breuer},
  \citenamefont {Laine},\ and\ \citenamefont {Piilo}}]{PhysRevLett.103.210401}%
  \BibitemOpen
  \bibfield  {author} {\bibinfo {author} {\bibfnamefont {H.-P.}\ \bibnamefont
  {Breuer}}, \bibinfo {author} {\bibfnamefont {E.-M.}\ \bibnamefont {Laine}}, \
  and\ \bibinfo {author} {\bibfnamefont {J.}~\bibnamefont {Piilo}},\ }\href
  {\doibase 10.1103/PhysRevLett.103.210401} {\bibfield  {journal} {\bibinfo
  {journal} {Phys. Rev. Lett.}\ }\textbf {\bibinfo {volume} {103}},\ \bibinfo
  {pages} {210401} (\bibinfo {year} {2009})}\BibitemShut {NoStop}%
\bibitem [{\citenamefont {Lu}\ \emph {et~al.}(2010)\citenamefont {Lu},
  \citenamefont {Wang},\ and\ \citenamefont {Sun}}]{PhysRevA.82.042103}%
  \BibitemOpen
  \bibfield  {author} {\bibinfo {author} {\bibfnamefont {X.-M.}\ \bibnamefont
  {Lu}}, \bibinfo {author} {\bibfnamefont {X.}~\bibnamefont {Wang}}, \ and\
  \bibinfo {author} {\bibfnamefont {C.~P.}\ \bibnamefont {Sun}},\ }\href
  {\doibase 10.1103/PhysRevA.82.042103} {\bibfield  {journal} {\bibinfo
  {journal} {Phys. Rev. A}\ }\textbf {\bibinfo {volume} {82}},\ \bibinfo
  {pages} {042103} (\bibinfo {year} {2010})}\BibitemShut {NoStop}%
\bibitem [{\citenamefont {Mazzola}\ \emph {et~al.}(2012)\citenamefont
  {Mazzola}, \citenamefont {Rodr\'{\i}guez-Rosario},\ and\ \citenamefont
  {Paternostro}}]{mazzola2012dynamical}%
  \BibitemOpen
  \bibfield  {author} {\bibinfo {author} {\bibfnamefont {L.}~\bibnamefont
  {Mazzola}}, \bibinfo {author} {\bibfnamefont {K.}~\bibnamefont
  {Rodr\'{\i}guez-Rosario}, \bibfnamefont {C\'esar A.and~Modi}}, \ and\
  \bibinfo {author} {\bibfnamefont {M.}~\bibnamefont {Paternostro}},\ }\href
  {http://dx.doi.org/10.1103/PhysRevA.86.010102} {\bibfield  {journal}
  {\bibinfo  {journal} {Phys. Rev. A}\ }\textbf {\bibinfo {volume} {86}},\
  \bibinfo {pages} {010102} (\bibinfo {year} {2012})}\BibitemShut {NoStop}%
\bibitem [{\citenamefont {Rodr\'{\i}guez-Rosario}\ \emph
  {et~al.}(2012)\citenamefont {Rodr\'{\i}guez-Rosario}, \citenamefont {Modi},
  \citenamefont {Mazzola},\ and\ \citenamefont
  {Aspuru-Guzik}}]{rodriguez2012unification}%
  \BibitemOpen
  \bibfield  {author} {\bibinfo {author} {\bibfnamefont {C.~A.}\ \bibnamefont
  {Rodr\'{\i}guez-Rosario}}, \bibinfo {author} {\bibfnamefont {K.}~\bibnamefont
  {Modi}}, \bibinfo {author} {\bibfnamefont {L.}~\bibnamefont {Mazzola}}, \
  and\ \bibinfo {author} {\bibfnamefont {A.}~\bibnamefont {Aspuru-Guzik}},\
  }\href {http://iopscience.iop.org/article/10.1209/0295-5075/99/20010}
  {\bibfield  {journal} {\bibinfo  {journal} {Europhys. Lett.}\ }\textbf
  {\bibinfo {volume} {99}},\ \bibinfo {pages} {20010} (\bibinfo {year}
  {2012})}\BibitemShut {NoStop}%
\bibitem [{\citenamefont {Luo}\ \emph {et~al.}(2012)\citenamefont {Luo},
  \citenamefont {Fu},\ and\ \citenamefont {Song}}]{PhysRevA.86.044101}%
  \BibitemOpen
  \bibfield  {author} {\bibinfo {author} {\bibfnamefont {S.}~\bibnamefont
  {Luo}}, \bibinfo {author} {\bibfnamefont {S.}~\bibnamefont {Fu}}, \ and\
  \bibinfo {author} {\bibfnamefont {H.}~\bibnamefont {Song}},\ }\href {\doibase
  10.1103/PhysRevA.86.044101} {\bibfield  {journal} {\bibinfo  {journal} {Phys.
  Rev. A}\ }\textbf {\bibinfo {volume} {86}},\ \bibinfo {pages} {044101}
  (\bibinfo {year} {2012})}\BibitemShut {NoStop}%
\bibitem [{\citenamefont {Wolf}\ \emph {et~al.}(2008)\citenamefont {Wolf},
  \citenamefont {Eisert}, \citenamefont {Cubitt},\ and\ \citenamefont
  {Cirac}}]{PhysRevLett.101.150402}%
  \BibitemOpen
  \bibfield  {author} {\bibinfo {author} {\bibfnamefont {M.~M.}\ \bibnamefont
  {Wolf}}, \bibinfo {author} {\bibfnamefont {J.}~\bibnamefont {Eisert}},
  \bibinfo {author} {\bibfnamefont {T.~S.}\ \bibnamefont {Cubitt}}, \ and\
  \bibinfo {author} {\bibfnamefont {J.~I.}\ \bibnamefont {Cirac}},\ }\href
  {\doibase 10.1103/PhysRevLett.101.150402} {\bibfield  {journal} {\bibinfo
  {journal} {Phys. Rev. Lett.}\ }\textbf {\bibinfo {volume} {101}},\ \bibinfo
  {pages} {150402} (\bibinfo {year} {2008})}\BibitemShut {NoStop}%
\bibitem [{\citenamefont {Chru\'{s}ci\'{n}ski}\ and\ \citenamefont
  {Maniscalco}(2014)}]{sabri}%
  \BibitemOpen
  \bibfield  {author} {\bibinfo {author} {\bibfnamefont {D.}~\bibnamefont
  {Chru\'{s}ci\'{n}ski}}\ and\ \bibinfo {author} {\bibfnamefont
  {S.}~\bibnamefont {Maniscalco}},\ }\href {\doibase
  10.1103/PhysRevLett.112.120404} {\bibfield  {journal} {\bibinfo  {journal}
  {Phys. Rev. Lett.}\ }\textbf {\bibinfo {volume} {112}},\ \bibinfo {pages}
  {120404} (\bibinfo {year} {2014})}\BibitemShut {NoStop}%
\bibitem [{\citenamefont {Rivas}\ \emph {et~al.}(2010)\citenamefont {Rivas},
  \citenamefont {Huelga},\ and\ \citenamefont
  {Plenio}}]{PhysRevLett.105.050403}%
  \BibitemOpen
  \bibfield  {author} {\bibinfo {author} {\bibfnamefont {A.}~\bibnamefont
  {Rivas}}, \bibinfo {author} {\bibfnamefont {S.~F.}\ \bibnamefont {Huelga}}, \
  and\ \bibinfo {author} {\bibfnamefont {M.~B.}\ \bibnamefont {Plenio}},\
  }\href {\doibase 10.1103/PhysRevLett.105.050403} {\bibfield  {journal}
  {\bibinfo  {journal} {Phys. Rev. Lett.}\ }\textbf {\bibinfo {volume} {105}},\
  \bibinfo {pages} {050403} (\bibinfo {year} {2010})}\BibitemShut {NoStop}%
\bibitem [{\citenamefont {Chru\ifmmode \acute{s}\else
  \'{s}\fi{}ci\ifmmode~\acute{n}\else \'{n}\fi{}ski}\ \emph
  {et~al.}(2011)\citenamefont {Chru\ifmmode \acute{s}\else
  \'{s}\fi{}ci\ifmmode~\acute{n}\else \'{n}\fi{}ski}, \citenamefont
  {Kossakowski},\ and\ \citenamefont {Rivas}}]{PhysRevA.83.052128}%
  \BibitemOpen
  \bibfield  {author} {\bibinfo {author} {\bibfnamefont {D.}~\bibnamefont
  {Chru\ifmmode \acute{s}\else \'{s}\fi{}ci\ifmmode~\acute{n}\else
  \'{n}\fi{}ski}}, \bibinfo {author} {\bibfnamefont {A.}~\bibnamefont
  {Kossakowski}}, \ and\ \bibinfo {author} {\bibfnamefont {A.}~\bibnamefont
  {Rivas}},\ }\href {\doibase 10.1103/PhysRevA.83.052128} {\bibfield  {journal}
  {\bibinfo  {journal} {Phys. Rev. A}\ }\textbf {\bibinfo {volume} {83}},\
  \bibinfo {pages} {052128} (\bibinfo {year} {2011})}\BibitemShut {NoStop}%
\bibitem [{\citenamefont {Davies}(1976)}]{davies76a}%
  \BibitemOpen
  \bibfield  {author} {\bibinfo {author} {\bibfnamefont {E.~B.}\ \bibnamefont
  {Davies}},\ }\href@noop {} {\emph {\bibinfo {title} {Quantum theory of open
  systems}}}\ (\bibinfo  {publisher} {Academic Press},\ \bibinfo {address} {New
  York},\ \bibinfo {year} {1976})\BibitemShut {NoStop}%
\bibitem [{\citenamefont {Giacomini}\ \emph {et~al.}(2016)\citenamefont
  {Giacomini}, \citenamefont {Castro-Ruiz},\ and\ \citenamefont
  {Brukner}}]{Brukner2016}%
  \BibitemOpen
  \bibfield  {author} {\bibinfo {author} {\bibfnamefont {F.}~\bibnamefont
  {Giacomini}}, \bibinfo {author} {\bibfnamefont {E.}~\bibnamefont
  {Castro-Ruiz}}, \ and\ \bibinfo {author} {\bibfnamefont
  {{\u{C}}.}~\bibnamefont {Brukner}},\ }\href
  {http://stacks.iop.org/1367-2630/18/i=11/a=113026} {\bibfield  {journal}
  {\bibinfo  {journal} {New J. Phys.}\ }\textbf {\bibinfo {volume} {18}},\
  \bibinfo {pages} {113026} (\bibinfo {year} {2016})}\BibitemShut {NoStop}%
\bibitem [{\citenamefont {Graydon}(2013)}]{Graydon2013}%
  \BibitemOpen
  \bibfield  {author} {\bibinfo {author} {\bibfnamefont {M.~A.}\ \bibnamefont
  {Graydon}},\ }\href {\doibase 10.1007/s10701-013-9708-6} {\bibfield
  {journal} {\bibinfo  {journal} {Found. Phys.}\ }\textbf {\bibinfo {volume}
  {43}},\ \bibinfo {pages} {656} (\bibinfo {year} {2013})}\BibitemShut
  {NoStop}%
\bibitem [{\citenamefont {Barrett}(2007)}]{Barrett2007}%
  \BibitemOpen
  \bibfield  {author} {\bibinfo {author} {\bibfnamefont {J.}~\bibnamefont
  {Barrett}},\ }\href {\doibase 10.1103/PhysRevA.75.032304} {\bibfield
  {journal} {\bibinfo  {journal} {Phys. Rev. A}\ }\textbf {\bibinfo {volume}
  {75}},\ \bibinfo {pages} {032304} (\bibinfo {year} {2007})}\BibitemShut
  {NoStop}%
\bibitem [{\citenamefont {Masanes}\ and\ \citenamefont
  {M\"{u}ller}(2011)}]{MasanesMuller2011}%
  \BibitemOpen
  \bibfield  {author} {\bibinfo {author} {\bibfnamefont {L.}~\bibnamefont
  {Masanes}}\ and\ \bibinfo {author} {\bibfnamefont {M.~P.}\ \bibnamefont
  {M\"{u}ller}},\ }\href {http://stacks.iop.org/1367-2630/13/i=6/a=063001}
  {\bibfield  {journal} {\bibinfo  {journal} {New J. Phys.}\ }\textbf {\bibinfo
  {volume} {13}},\ \bibinfo {pages} {063001} (\bibinfo {year}
  {2011})}\BibitemShut {NoStop}%
\bibitem [{\citenamefont {Wood}\ \emph {et~al.}(2015)\citenamefont {Wood},
  \citenamefont {Biamonte},\ and\ \citenamefont {Cory}}]{wood_tensor_2011}%
  \BibitemOpen
  \bibfield  {author} {\bibinfo {author} {\bibfnamefont {C.~J.}\ \bibnamefont
  {Wood}}, \bibinfo {author} {\bibfnamefont {J.~D.}\ \bibnamefont {Biamonte}},
  \ and\ \bibinfo {author} {\bibfnamefont {D.~G.}\ \bibnamefont {Cory}},\
  }\href {http://www.rintonpress.com/xxqic15/qic-15-910/0759-0811.pdf}
  {\bibfield  {journal} {\bibinfo  {journal} {Quant. Inf. Comp.}\ }\textbf
  {\bibinfo {volume} {15}},\ \bibinfo {pages} {759} (\bibinfo {year}
  {2015})}\BibitemShut {NoStop}%
\bibitem [{\citenamefont {Cerrillo}\ and\ \citenamefont
  {Cao}(2014)}]{CerrilloCao2014}%
  \BibitemOpen
  \bibfield  {author} {\bibinfo {author} {\bibfnamefont {J.}~\bibnamefont
  {Cerrillo}}\ and\ \bibinfo {author} {\bibfnamefont {J.}~\bibnamefont {Cao}},\
  }\href {\doibase 10.1103/PhysRevLett.112.110401} {\bibfield  {journal}
  {\bibinfo  {journal} {Phys. Rev. Lett.}\ }\textbf {\bibinfo {volume} {112}},\
  \bibinfo {pages} {110401} (\bibinfo {year} {2014})}\BibitemShut {NoStop}%
\bibitem [{\citenamefont {Rosenbach}\ \emph {et~al.}(2016)\citenamefont
  {Rosenbach}, \citenamefont {Cerrillo}, \citenamefont {Huelga}, \citenamefont
  {Cao},\ and\ \citenamefont {Plenio}}]{Rosenbach2016}%
  \BibitemOpen
  \bibfield  {author} {\bibinfo {author} {\bibfnamefont {R.}~\bibnamefont
  {Rosenbach}}, \bibinfo {author} {\bibfnamefont {J.}~\bibnamefont {Cerrillo}},
  \bibinfo {author} {\bibfnamefont {S.~F.}\ \bibnamefont {Huelga}}, \bibinfo
  {author} {\bibfnamefont {J.}~\bibnamefont {Cao}}, \ and\ \bibinfo {author}
  {\bibfnamefont {M.~B.}\ \bibnamefont {Plenio}},\ }\href
  {http://stacks.iop.org/1367-2630/18/i=2/a=023035} {\bibfield  {journal}
  {\bibinfo  {journal} {New J. Phys.}\ }\textbf {\bibinfo {volume} {18}},\
  \bibinfo {pages} {023035} (\bibinfo {year} {2016})}\BibitemShut {NoStop}%
\bibitem [{\citenamefont {Pollock}\ and\ \citenamefont
  {Modi}(2017)}]{PollockModi2017}%
  \BibitemOpen
  \bibfield  {author} {\bibinfo {author} {\bibfnamefont {F.~A.}\ \bibnamefont
  {Pollock}}\ and\ \bibinfo {author} {\bibfnamefont {K.}~\bibnamefont {Modi}},\
  }\href {https://arxiv.org/abs/1704.06204} {\bibfield  {journal} {\bibinfo
  {journal} {arXiv:1704.06204}\ } (\bibinfo {year} {2017})}\BibitemShut
  {NoStop}%
\bibitem [{\citenamefont {Byrd}\ and\ \citenamefont
  {Khaneja}(2003)}]{byrd:062322}%
  \BibitemOpen
  \bibfield  {author} {\bibinfo {author} {\bibfnamefont {M.~S.}\ \bibnamefont
  {Byrd}}\ and\ \bibinfo {author} {\bibfnamefont {N.}~\bibnamefont {Khaneja}},\
  }\href {\doibase 10.1103/PhysRevA.68.062322} {\bibfield  {journal} {\bibinfo
  {journal} {Phys. Rev. A}\ }\textbf {\bibinfo {volume} {68}},\ \bibinfo {eid}
  {062322} (\bibinfo {year} {2003})}\BibitemShut {NoStop}%
\end{thebibliography}%

\end{document}